\def\myQED{\mbox{\rule[0pt]{1.5ex}{1.5ex}}}
\renewcommand{\arg}{{\hbox{argmin}}}
\newcommand{\argmin}{\mathop{\mathrm{argmin}}}
\newcommand{\no}{\nonumber}
\newtheorem{thm}{Theorem}
\newtheorem{prop}{Proposition}
\newtheorem{lem}{Lemma}
\newtheorem{cor}{Corollary}
\newtheorem{rmk}{Remark}
\def\ps@headings{%
\def\@oddhead{\mbox{}\scriptsize\rightmark \hfil \thepage}%
\def\@evenhead{\scriptsize\thepage \hfil \leftmark\mbox{}}%
\def\@oddfoot{}%
\def\@evenfoot{}}
\begin{document}
\title{Bayesian Quickest Change-Point Detection with Sampling Right Constraints}
\author{Jun Geng, \emph{Student Member, IEEE}, Erhan Bayraktar and Lifeng Lai, \emph{Member, IEEE} \thanks{The work of J. Geng and L. Lai was supported by the National Science Foundation under grant DMS-12-65663. The work of E. Bayraktar was supported by National Science Foundation under grant DMS-11-18673. This paper was presented in part at Annual Allerton Conference on Communication, Control and Computing, Monticello, IL, Oct. 2012, and at IEEE International Conference on Acoustics, Speech, and Signal Processing (ICASSP), Florence, Italy, May 2014.

J. Geng and L. Lai are with the Department of Electrical and Computer Engineering, Worcester Polytechnic Institute, Worcester, MA 01609, USA (Emails: \{jgeng, llai\}@wpi.edu).

E. Bayraktar is with the Department of Mathematics, University of Michigan, Ann Arbor, MI 48109, USA (Email:erhan@umich.edu).}}
\maketitle 



\begin{abstract}
In this paper, Bayesian quickest change detection problems with sampling right constraints are considered. Specifically, there is a sequence of random variables whose probability density function will change at an unknown time. The goal is to detect this change in a way such that a linear combination of the average detection delay and the false alarm probability is minimized. Two types of sampling right constrains are discussed. The first one is a limited sampling right constraint, in which the observer can take at most $N$ observations from this random sequence. Under this setup, we show that the cost function can be written as a set of iterative functions, which can be solved by Markov optimal stopping theory. The optimal stopping rule is shown to be a threshold rule. An asymptotic upper bound of the average detection delay is developed as the false alarm probability goes to zero. This upper bound indicates that the performance of the limited sampling right problem is close to that of the classic Bayesian quickest detection for several scenarios of practical interest. The second constraint discussed in this paper is a stochastic sampling right constraint, in which sampling rights are consumed by taking observations and are replenished randomly. The observer cannot take observations if there are no sampling rights left. We characterize the optimal solution, which has a very complex structure. For practical applications, we propose a low complexity algorithm, in which the sampling rule is to take observations as long as the observer has sampling rights left and the detection scheme is a threshold rule. We show that this low complexity scheme is first order asymptotically optimal as the false alarm probability goes to zero.
\end{abstract}

\begin{keywords}
Bayesian quickest change-point detection, sampling right constraint, sequential detection.
\end{keywords}


\section{Introduction} \label{sec:intro}
Quickest change-point detection aims to detect an abrupt change in the probability distribution of a stochastic process with a minimal detection delay. Bayesian quickest detection \cite{Shiryaev:Soviet:61, Shiryaev:TPIA:63} is one of the most important formulations. In the classic Bayesian setup, there is a sequence of random variables $\{X_{n}, n=1,2,\ldots\}$ with a geometrically distributed change-point $\Lambda$. Before the change-point $\Lambda$, the sequence $X_{1}, \ldots, X_{\Lambda-1}$ is assumed to be independent and identically distributed (i.i.d.) with probability density function (pdf) $f_0(x)$, and after $\Lambda$, the sequence is assumed to be i.i.d. with pdf $f_1(x)$. The goal is to find an optimal stopping time $\tau$, at which the change is declared, that minimizes the detection delay under a false alarm constraint.

In recent years, this technique has found a lot of applications in wireless sensor networks~\cite{Mei:TIT:05,Moustakides:ICIF:06,Tartakovsky:SA:08,Tartakovsky:ICIF:03,Tartakovsky:ICIF:06, Fellouris:Bernoulli:13, Hadjiliadis:CDC:09} for network intrusion detection \cite{Premkumar:INFOC:08}, seismic sensing~\cite{Pisarenko:PEPI:87}, structural health monitoring, etc. In such applications, sensors are deployed to monitor their surrounding environment for abnormalities. Such abnormalities, which are modeled as change-points, typically imply certain activities of interest. For example, a sensor network may be built into a bridge to monitor its structural health condition. In this case, a change may imply that a certain structural problem, such as an inner crack, has occurred in the bridge. In this context, the false alarm probability and the detection delay between the time when a structural problem occurs and the time when an alarm is raised are of interest.

In the classic quickest change detection setups, one can observe the underlying signal at each time slot. In the above mentioned applications, however, the situation is different. Taking samples and computing statistics cost energy. Sensors are typically powered by batteries with limited capacity and/or are charged randomly with renewable energy. Hence in these applications, it is unlikely that one can take samples at all time slots. For example, for sensors powered by a battery, they are subjected to a limited energy constraint. Hence, they have only limited energy to make a fixed number of observations. For sensors powered by renewable energy, they are subjected to a stochastic energy constraint. The sensors cannot take observations unless there are energy left in the battery.

In this paper, motivated by above applications, we extend the classic Bayesian quickest change-point detection by imposing casual energy constraints. Specifically, we relax the assumption in the classic Bayesian setup that the observer can observe the underlying signal freely at any time slots. Instead, we assume that an observation can be taken only if the sensor has energy left in its battery. The sensor has the freedom to choose the sampling time, but it has to plan its use of energy carefully due to the energy constraint. The goal of the sensor is to find the optimal sampling strategy (or the optimal energy utility strategy) and the optimal stopping rule to minimize the average detection delay under a false alarm constraint. The optimal solutions of the proposed problems are obtained by dynamic programming (DP). However, the optimal solutions in general do not have a close form expression due to the iterative nature of DP. Although the optimal solutions can be solved numerically, numerical method provides us little insight of the optimal solutions. Hence, in this paper, we also conduct asymptotic analysis and design low-complexity asymptotically optimal schemes.

In particular, we consider two types of constraints in this paper. The first one is a limited observation constraint. Specifically, the sensor is allowed to take at most $N$ observations. After taking each observation, the sensor needs to decide whether to stop and declare a change, or to continue sampling. If the sensor decides to continue, it also then needs to determine the next sampling time. In this paper, we develop the optimal stopping rule and the sampling rule for this problem. The optimal stopping rule is shown to be a threshold rule, and the optimal sampling time of the $n^{th}$ observation is the one minimizing the most updated cost function. An asymptotic upper bound of the average detection delay is developed as the false alarm probability goes to zero. The derived upper bound indicates that the average detection delay is close to that of the setup without energy constraint \cite{Tartakovsky:TPIA:04} when $N$ is sufficiently large or when $f_{0}$ and $f_{1}$ are close to each other.

The second constraint being considered is a stochastic energy constraint. This constraint is designed for sensors powered by renewable energy. In this case, the energy stored in the sensor is consumed by taking observations and is replenished by a random process. The sensor cannot store extra energy if its battery is full, and the sensor cannot take observations if its battery is empty. Hence, the sensor needs to find a strategy to use its energy efficiently. Under this constraint, we develop the optimal stopping rule and the optimal sampling rule. The complexity of the optimal solution, however, is very high. To address this issue, we design a low complexity algorithm in which the sensor takes observations as long as there is energy left in its battery and the sensor detects the change by using a threshold rule. We show that this simple algorithm is first order asymptotically optimal as the false alarm probability goes to zero.

Although these problem formulations are originally motivated by wireless sensor networks, their applications are not limited to this area. For example, in clinical trials, it is desirable to quickly and accurately obtain the efficiency of certain medicine or therapy with limited number of tests, since it might be very costly and sometime even health-damaging to conduct a test. Hence, the limited observation constraint can be applied in this scenario. Therefore, in the remainder of this paper, instead of using application specific concepts such as ``sensor'' and ``energy constraint'', we use general terms such as ``observer'' and ``sampling right constraint''.

The problems considered in this paper are related to recent works on the quickest change-point detection problem that take the observation cost into consideration. In particular, \cite{Dzhamburia:Steklov:83} assumes that each observation is worth either $1$ if it is observed or $0$ if it is skipped. \cite{Dzhamburia:Steklov:83} is interested in minimizing both the Bayesian detection delay and the total cost made by taking observations. Moreover, \cite{Dzhamburia:Steklov:83} considers both discrete and continuous time case and shows the existence of the optimal stopping rule-sampling strategy pair.   \cite{Bayraktar:12:XX}, which considers the Bayesian quickest change-point detection problem with sampling right constraints in the continuous time scenario, is also relevant to our paper. \cite{Bayraktar:12:XX} considers two cases: the observer has a fixed sampling rights or the observer's sampling rights arrive according to a Poisson process. \cite{Bayraktar:12:XX} characterizes the optimal solution for these problems. Compared with~\cite{Dzhamburia:Steklov:83, Bayraktar:12:XX}, our paper focuses the discrete time case, and provides low complexity asymptotically optimal solutions as well as optimal solutions.

We also briefly mention other related papers. The first main line of existing works considers the problem under a Bayesian setup. In particular, \cite{Premkumar:INFOC:08} considers a wireless network with multiple sensors monitoring the Bayesian change in the environment. Based on the observations from sensors at each time slot, the fusion center decides how many sensors should be activated in the next time slot to save energy.~\cite{Banerjee:SADA:12} takes the average number of observations taken before the change-point into consideration, and it provides the optimal solution along with low-complexity but asymptotically optimal rules. \cite{Tartakovsky:SADA:10} is a recent comprehensive survey that summarizes the current development on the Bayesian quickest change-point detection problem. There are also some existing works consider the problem under minmax setting. For example, \cite{Jun:TSP:13} considers the non-Bayesian quickest detection with a stochastic sampling right constraint. \cite{Banerjee:TIT:12, Banerjee:ISIT:13} extend the constraint of the average number of observations into non-Bayesian setups and sensor networks. \cite{Polunchenko:MCAP:12} is a recent survey on the quickest change-point detection problem which comprehensively summarizes the progress made on both Bayesian and non-Bayesian setups. 

The remainder of this paper is organized as follows. Our mathematical model for the Bayesian quickest change-point detection problem with sampling right constraints is described in Section \ref{sec:model}. Section \ref{sec:lmt_opt} presents the optimal solution and the asymptotic upper bound for the limited sampling right problem. Section~\ref{sec:stc_opt} provides the optimal and the asymptotically optimal solution for the stochastic sampling right problem. Numerical examples are given in Section~\ref{sec:simulation}. Finally, Section~\ref{sec:conclusion} offers concluding remarks.

\section{Model}\label{sec:model}
Let $\{X_{k}, k=1, 2, \ldots\}$ be a sequence of random variables with an unknown change-point $\Lambda$. $\{X_{k}\}$'s are i.i.d. with pdf $f_0(x)$ before the change-point $\Lambda$, and i.i.d. with pdf $f_1(x)$ after $\Lambda$. The change-point $\Lambda$ is modeled as a geometric random variable with parameter $\rho$, i.e., for $0 < \rho <1$, $0 \leq \pi < 1$,
\begin{eqnarray}\label{eq:prior}
P(\Lambda = \lambda) = \left\{ \begin{array}{cc}
                               \pi & \lambda = 0 \\
                               (1-\pi)\rho(1-\rho)^{\lambda-1} & \lambda = 1, 2, \ldots
                               \end{array}. \right.    \label{eq:geometry}
\end{eqnarray}
We use $P_{\pi}$ to denote the probability measure under which $\Lambda$ has the above distribution. We will denote the expectation under this measure by $\mathbb{E}_{\pi}$. Additionally, we will use $P_{\lambda}$ and $\mathbb{E}_{\lambda}$ to denote the probability measure and the expectation under the event $\{ \Lambda=\lambda \}$.

We assume that the observer initially has $N$ sampling rights, and her sampling rights are consumed when she takes observations and are replenished randomly. The sampling right replenishing procedure is modeled as a stochastic process $\nu = \{\nu_{1}, \nu_{2}, \dots, \nu_{k}, \dots \}$, where $\nu_{k}$ is the amount of sampling rights collected by the observer at time slot $k$. Specially, $\nu_{k} \in \mathcal{V}=\{0, 1, 2, \ldots \}$, in which $\{ \nu_{k} = 0\}$ implies that she obtains no sampling right at time slot $k$ and $\{ \nu_{k} = i\}$ implies that she collects $i$ sampling rights at $k$. We use $p_{i}=P^{\nu}(\nu_{k} = i)$ to denote its probability mass function (pmf). We assume that $\{\nu_{k} \}$ is i.i.d. over $k$.

The observer can decide when to spend her sampling rights to take observations. Let $\mu = \{\mu_{1}, \mu_{2}, \dots, \mu_{k}, \dots \}$ be the sampling strategy with $\mu_{k} \in \{ 0, 1 \}$, in which $\{\mu_{k} = 1\}$ means that she spends one sampling right on taking observation at time slot $k$ and $\{\mu_{k} = 0\}$ means that no sampling right is spent at $k$ and hence no observation is taken.

We are interested in the case that the observer has a finite sampling right capacity $C$. Let $N_{k}$ be the amount of sampling rights at the end of time slot $k$. $N_{k}$ evolves according to
\begin{eqnarray}
N_{k} = \min\{ C, N_{k-1} + \nu_{k} - \mu_{k}\} \label{eq:N_evolve}
\end{eqnarray}
with $N_{0}=N$. The observer's strategy $\mu$ must obey a causality constraint: \emph{the observer cannot take an observation at time slot $k$ if she has no sampling right at that time slot.} Hence, the admissible strategy set can be written as
\begin{eqnarray}
\mathcal{U} = \left\{ \mu: N_{k}\geq 0, \quad k = 1,2,\ldots. \right\}. \label{eq:causality1}
\end{eqnarray}

The observer spends sampling rights to take observations. We denote the observation sequence as $\left\{ Z_{k}, k=1,2,\ldots \right\}$ with
\begin{eqnarray}
Z_{k}= \left\{ \begin{array}{ll}
X_{k} & \textrm{if  } \mu_{k}=1 \\
\phi & \textrm{if  } \mu_{k}=0
\end{array} \right. ,          \no
\end{eqnarray}
in which $\phi$ denotes no observation.

We call an observation $Z_{k}$ a non-trivial observation if $\mu_{k}=1$, i.e., if the observation is taken from the environment. Denote $t_{i}$ as the time instance that the observer makes the $i^{th}$ observation, then $\mu_{t_{i}} = 1$ and the non-trivial observation sequence can be denoted as $\{X_{t_{1}}, X_{t_{2}}, \ldots, X_{t_{n}}, \ldots\}$.

The observation sequence $\{Z_{k}\}$ generates the filtration $\{\mathcal{F}_{k}\}_{k\in\mathbb{N}}$ with
\begin{eqnarray}
\mathcal{F}_k=\sigma(Z_1,\cdots,Z_k, \{ \Lambda = 0 \}), k=1, 2, \ldots. \no
\end{eqnarray}
and $\mathcal{F}_0$ contains the sample space $\Omega$ and $\{ \Lambda = 0 \}$.

\begin{figure}[thb]
\centering
\includegraphics[width=0.45 \textwidth]{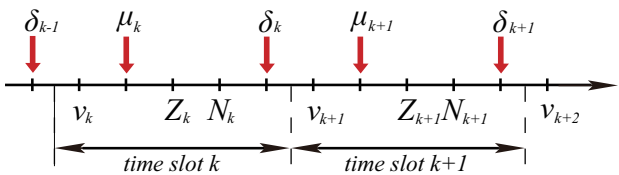}
\caption{The observer's decision flow}
\label{fig:sys_model}
\end{figure}

Figure \ref{fig:sys_model} illustrates the observer's decision flow. At each time slot $k$, the observer has to make two decisions: the sampling decision $\mu_{k}$ and the terminal decision $\delta_{k} \in \{ 0, 1\}$. These two decisions are based on different information. First, the observer needs to decide whether she should spend a sampling right to take an observation ($\mu_{k}=1$) or not ($\mu_{k}=0$) after she obtains the information of $\nu_{k}$. In general, $\mu_{k}$ depends casually on the observation process, the sampling strategy and the sampling right replenishing process, i.e.,
\begin{eqnarray}
\mu_{k} = g_{k}(\mathbf{Z}_{1}^{k-1}, \nu_{1}^{k}, \mu_{1}^{k-1}), \label{eq:muk}
\end{eqnarray}
in which $\mathbf{Z}_{1}^{k-1}$ denotes $\{Z_{1}, \ldots, Z_{k-1}\}$, $\nu_{1}^{k}$ and $\mu_{1}^{k-1}$ are defined in a similar manner, and $g_{k}$ is the sampling strategy function used at $k$. After making each observation $Z_{k}$ (whether it is a non-trival observation in the case of $\mu_k=1$ or it is a trivial observation in the case of $\mu_k=0$), the observer needs to decide whether she should stop sampling and declare that a change has occurred ($\delta_{k}=1$), or to continue the sampling procedure ($\delta_{k}=0$). Therefore, $\delta_{k}$ is a $\mathcal{F}_{k}$ measurable function. We introduce a random variable $\tau$ to denote the time when the observer decides to stop, i.e., $\{ \tau = k\}$ if and only if $\{ \delta_{k} = 1\}$, then $\tau$ is a stopping time with respect to the filtration $\{\mathcal{F}_{k}\}$.

We notice that the distribution of $Z_{k}$ is related to both $X_{k}$ and $\mu_{k}$. Unlike the classic Bayesian setup which only takes the expectation with respect to $P_{\pi}$, in our setup we should take the expectation with respect to both $P_{\pi}$ and $P^{\nu}$. Hence, we use the superscript $\nu$ over the probability measure and the expectation to emphasize that we are working with a probability measure taken the distribution of the process $\nu$ into consideration. Specifically, we use $P_{\pi}^{\nu}$ and $\mathbb{E}_{\pi}^{\nu}$ to denote the probability measure and the expectation under $\Lambda$, respectively; and we use $P_{\lambda}^{\nu}$ and $\mathbb{E}_{\lambda}^{\nu}$ under the event $\{\Lambda = \lambda \}$.

In this paper, our goal is to design a strategy pair $(\tau, \mu)$ to minimize the detection delay subject to a false alarm constraint. In particular, the average detection delay (ADD) is defined as
$$ \mathrm{ADD}(\pi, N, \tau, \mu) = \mathbb{E}_{\pi}^{\nu}\left[ (\tau-\Lambda)^{+}\right],$$
where $x^{+}=\max\{0, x\}$, and the probability of the false alarm (PFA) is defined as
$$ \mathrm{PFA}(\pi, N, \tau, \mu) = P_{\pi}^{\nu}(\tau<\Lambda).$$
With the initial probability $\pi_{0} = \pi$ and the initial sampling right $N_{0}=N$, we want to solve the following optimization problem:
\begin{eqnarray}
\text{ (P1) } &&\min_{\mu \in \mathcal{U}, \tau \in \mathcal{T}} \mathrm{ADD}(\pi, N, \tau, \mu) \no\\ &&\text{ subject to } \mathrm{PFA}(\pi, N, \tau, \mu) \leq \alpha. \no
\end{eqnarray}
in which $\mathcal{T}$ is the set of all stopping times with respect to the filtration $\{\mathcal{F}_{k}\}$ and $\alpha$ is the false alarm level. By Lagrangian multiplier, for each $\alpha$ the optimization problem (P1) can be equivalently written as
\begin{eqnarray}
\text{ (P2) } J(\pi, N) = \inf_{\mu \in \mathcal{U}, \tau \in \mathcal{T}} U(\pi, N, \tau, \mu), \no 
\end{eqnarray}
where
\begin{eqnarray}
U(\pi, N, \tau, \mu) \triangleq \mathbb{E}_{\pi}^{\nu}\left[ c(\tau-\Lambda)^{+} + \mathbf{1}_{\{ \tau < \Lambda \}} \right] \label{eq:risk}
\end{eqnarray}
for an appropriately chosen constant $c$. We would like to characterize $J(\pi, N)$ in this paper.


\section{Problems with the Limited Sampling Right Constraint}  \label{sec:lmt_opt}
We first consider a special case that $p_{0}=P^{\nu}(\nu_{k}=0) = 1$, that is, other than the initial sampling rights, there will be no additional sampling rights arriving at the observer. Hence she can take at most $N_{0}=N$ observations from the sequence $\{ X_{k} \}$ for the detection purpose. Therefore, we name the sampling right causality constraint as a limited sampling right constraint in this case.

From \eqref{eq:N_evolve} and \eqref{eq:causality1}, it is easy to verify that there are at most $N$ nonzero elements in $\mu$. Hence, instead of considering $\mu = \{ \mu_{k} \}$ with infinite elements, we can describe the sampling strategy by the sampling time sequence $\mu = \{ t_{1}, \ldots, t_{\eta}\}$, where $t_{\eta}$ is the time instance that the observer takes the last observation, and $\eta$ is the number of observations taken by the observer when she stops. Hence, in this paper we term $\eta$ as the sample size, and we notice that $\eta$ is a random variable whose realization varies from different trials. The admissible strategy set \eqref{eq:causality1} can be equivalently written as $\mathcal{U}_N=\{\mu: \eta\leq N\}$ in this case.

In addition, as indicated in Section \ref{sec:model}, in general we need to take the expectation with respect to both $P_{\pi}$ and $P^{\nu}$. However, in this special case we only need to take expectation with respect to $P_{\pi}$ since the process $\nu$ has no randomness. Therefore,  $\mathbb{E}^{\nu}_{\pi}$ and $P^{\nu}_{\pi}$ can be replaced by $\mathbb{E}_{\pi}$ and $P_{\pi}$ respectively. In particular, the cost function can be written as
\begin{eqnarray}
U(\pi, N, \tau, \mu) = \mathbb{E}_{\pi}\left[ c(\tau-\Lambda)^{+} + \mathbf{1}_{\{ \tau < \Lambda \}} \right].
\end{eqnarray}

\subsection{Optimal Solution}\label{sec:opt}
Let $\pi_k$ be the posterior probability that a change has occurred at the $k^{th}$ time instance, namely
\begin{eqnarray}
\pi_k=P(\Lambda \leq k|\mathcal{F}_k), \quad k=0,1,\ldots. \label{eq:posterior}
\end{eqnarray}
Using Bayes' rule, $\pi_{k}$ can be shown to satisfy the recursion
\begin{eqnarray}\label{eq:post}
\pi_k=\left\{\begin{array}{ll}\Phi_0(\pi_{k-1}),& \text{if } \mu_{k} = 0\\
\Phi_1(X_{k}, \pi_{k-1}),&  \text{if } \mu_{k} = 1 \end{array} \right., \label{eq:recursion}
\end{eqnarray}
in which
\begin{eqnarray}
\Phi_0(\pi_{k-1}) = \pi_{k-1}+(1-\pi_{k-1})\rho, \label{eq:kernel0}
\end{eqnarray}
and
\begin{eqnarray} 
&&\hspace{-10mm}\Phi_1(X_{k}, \pi_{k-1}) \no\\
&&= \frac{\Phi_0(\pi_{k-1})f_1(X_k)}{\Phi_0(\pi_{k-1})f_1(X_k)+(1-\Phi_0(\pi_{k-1}))f_0(X_k)}. \label{eq:kernel1}
\end{eqnarray}

It turns out that $\pi_{k}$ is a sufficient statistic for this problem, as the next result demonstrates.
\begin{prop} \label{prop:cost}
For each sampling strategy $\mu$ and stopping rule $\tau$
\begin{eqnarray}
U(\pi, N, \tau, \mu)=\mathbb{E}_{\pi}\left[1-\pi_{\tau}+c\sum\limits_{k=0}^{\tau-1}\pi_k\right]. \label{eq:cost}
\end{eqnarray}
\end{prop}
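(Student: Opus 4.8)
The plan is to expand the risk $U(\pi,N,\tau,\mu)=\mathbb{E}_{\pi}[\,c(\tau-\Lambda)^{+}+\mathbf{1}_{\{\tau<\Lambda\}}\,]$ into its two pieces and rewrite each in terms of the posterior sequence $\{\pi_k\}$ by repeatedly conditioning on the filtration $\{\mathcal{F}_k\}$, using only the defining property $\pi_k=P(\Lambda\le k\mid\mathcal{F}_k)$ and the fact that $\{\tau\le k\}\in\mathcal{F}_k$ for a stopping time. No optional stopping theorem is invoked directly; instead the argument is organized as a sum over the deterministic levels $\{\tau=k\}$ (or $\{\tau>k\}$).

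First I would handle the false-alarm term. Writing $\mathbf{1}_{\{\tau<\Lambda\}}=\sum_{k=0}^{\infty}\mathbf{1}_{\{\tau=k\}}\mathbf{1}_{\{\Lambda>k\}}$ and noting $\{\tau=k\}\in\mathcal{F}_k$, the tower property gives $\mathbb{E}_{\pi}[\mathbf{1}_{\{\tau=k\}}\mathbf{1}_{\{\Lambda>k\}}]=\mathbb{E}_{\pi}[\mathbf{1}_{\{\tau=k\}}(1-\pi_k)]$; summing over $k$ yields $\mathbb{E}_{\pi}[\mathbf{1}_{\{\tau<\Lambda\}}]=\mathbb{E}_{\pi}[1-\pi_{\tau}]$, with the convention that on $\{\tau=\infty\}$ both sides vanish since $\Lambda<\infty$ a.s. (for the strategies of interest one may simply take $\tau<\infty$ a.s.).

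Next, for the delay term I would use the pathwise identity $(\tau-\Lambda)^{+}=\sum_{k=0}^{\infty}\mathbf{1}_{\{\Lambda\le k\}}\mathbf{1}_{\{k<\tau\}}$, which merely counts the integers in $[\Lambda,\tau-1]$ and is $0$ when $\tau<\Lambda$. Since every summand is nonnegative, Tonelli lets me interchange sum and expectation; and since $\{k<\tau\}=\{\tau\le k\}^{c}\in\mathcal{F}_k$, conditioning on $\mathcal{F}_k$ replaces $\mathbf{1}_{\{\Lambda\le k\}}$ by $\pi_k$, giving $\mathbb{E}_{\pi}[(\tau-\Lambda)^{+}]=\sum_{k=0}^{\infty}\mathbb{E}_{\pi}[\mathbf{1}_{\{k<\tau\}}\pi_k]=\mathbb{E}_{\pi}\big[\sum_{k=0}^{\tau-1}\pi_k\big]$. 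Multiplying this by $c$ and adding the false-alarm term gives \eqref{eq:cost}.

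The computation is essentially routine; the only points requiring care are the measurability bookkeeping — that $\pi_k=P(\Lambda\le k\mid\mathcal{F}_k)$ is used at the random index $\tau$, which is precisely why the decomposition over the levels $\{\tau=k\}$ and $\{\tau>k\}$ is needed — the $\{\tau=\infty\}$ convention noted above, and the appeal to Tonelli to justify swapping the infinite sum with $\mathbb{E}_{\pi}$. Note that the sampling strategy $\mu$ enters only through the filtration $\{\mathcal{F}_k\}$ and the recursion \eqref{eq:recursion} for $\pi_k$, so the identity holds for every admissible pair $(\tau,\mu)$ without further restriction.
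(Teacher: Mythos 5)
Your proposal is correct and follows essentially the same route as the paper: the paper's (outlined) proof also rewrites $(\tau-\Lambda)^{+}$ as $\sum_{k=0}^{\tau-1}\mathbf{1}_{\{\Lambda\le k\}}$ and then replaces $\mathbf{1}_{\{\Lambda\le k\}}$ by $\pi_k$ and $\mathbf{1}_{\{\tau<\Lambda\}}$ by $1-\pi_{\tau}$ via conditioning on the filtration, deferring the measurability details to Proposition 5.1 of the cited reference. Your write-up simply makes explicit the level-set decomposition, the Tonelli interchange, and the $\{\tau=\infty\}$ convention that the paper leaves implicit.
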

\begin{proof}
An outline of the proof is provided as follows:
\begin{eqnarray}
U(\pi, N, \tau, \mu) &=& \mathbb{E}_{\pi}\left[c(\tau-\Lambda)^{+}+\mathbf{1}_{\{\tau<\Lambda\}} \right] \no \\
&=&\mathbb{E}_{\pi}\left[c(\tau-\Lambda)\mathbf{1}_{\{\tau\geq\Lambda\}} + \mathbf{1}_{\{\tau<\Lambda\}}\right] \no \\
&=&\mathbb{E}_{\pi}\left[c\sum_{k=0}^{\tau-1}\mathbf{1}_{\{\Lambda \leq k\}} + \mathbf{1}_{\{\tau<\Lambda\}}\right] \no \\
&=&\mathbb{E}_{\pi}\left[c\sum_{k=0}^{\tau-1}\pi_{k} + (1-\pi_{\tau})\right]. \no
\end{eqnarray}
A detailed proof follows closely to that of Proposition 5.1 of~\cite{Poor:Book:08} and is omitted for brevity.
\end{proof}

We first have the following lemma characterizing some properties of the optimal $(\tau, \mu)$:
\begin{lem} \label{lem:ext}
Let $\mu=\{t_{1}, \ldots, t_{\eta}\}$ be an admissible sampling strategy, and $\tau$ be a stopping time. If $\eta < N$ and $\tau > t_{\eta}$, then $(\tau, \mu)$ is not optimal.\end{lem}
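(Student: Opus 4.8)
The plan is to work from the cost representation in Proposition~\ref{prop:cost}, $U(\pi,N,\tau,\mu)=\mathbb{E}_{\pi}[\,1-\pi_\tau+c\sum_{k=0}^{\tau-1}\pi_k\,]$, and to argue that a strategy which, on a set of positive probability, \emph{coasts} --- neither takes an observation nor stops although at least one sampling right is unused --- on the block of time slots strictly between its last observation $t_\eta$ and its declaration time $\tau$, can always be strictly improved. The structural facts I would lean on are: (i) $U$ contains no per-observation penalty, so spending an otherwise unused right is never costly; (ii) $\Phi_0$ in \eqref{eq:kernel0} is affine, and a direct Bayes computation gives $\mathbb{E}_{\pi}[\Phi_1(X_k,\pi_{k-1})\mid\mathcal{F}_{k-1}]=\Phi_0(\pi_{k-1})$; consequently, inserting a non-trivial observation at a coasting slot leaves the \emph{expected} value of $\pi_k$ --- hence of the running cost $c\pi_k$ and, by affineness of $\Phi_0$, of every subsequent $\pi_j$ --- unchanged, while strictly enlarging the observer's information.

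Concretely, I would let $S$ be the first time slot at which the observer would coast with a strictly positive number of rights in hand; on the event $\{\eta<N,\ \tau>t_\eta\}$ one has $t_\eta<S\le\tau$ and at least $N-\eta\ge 1$ rights are available at $S$, so $S$ is an admissible instant at which to observe. Define $(\tau',\mu')$ to coincide with $(\tau,\mu)$ up to slot $S-1$, to take a non-trivial observation at $S$ (this is legitimate since $\{S\le k\}\in\mathcal{F}_{k-1}$), and then: stop at $S$ if the updated posterior $\pi'_S=\Phi_1(X_S,\pi_{S-1})$ exceeds the level $1/(1+c)$ --- at which $1-\pi<c\pi$, so stopping strictly beats any continuation, since continuing costs at least the immediate $c\pi$ --- and otherwise revert to the original rule from $S$ onward. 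Admissibility and the fact that $\tau'$ is a stopping time for the enlarged observation filtration are routine; by (i)--(ii) and because the early-stopping clause is exercised only when it helps, $U(\pi,N,\tau',\mu')\le U(\pi,N,\tau,\mu)$.

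The crux --- and the step I expect to be the real obstacle --- is upgrading this to a \emph{strict} inequality. It suffices to show that the extra observation at $S$ lands the posterior, with positive probability, in a region where the original ``coast-then-stop'' continuation is strictly suboptimal. I would split on the critical level $\rho/(c+\rho)$: using the telescoping identity $\Phi_0^{(j)}(p)-p=\rho\sum_{i=0}^{j-1}\bigl(1-\Phi_0^{(i)}(p)\bigr)$ together with \eqref{eq:cost}, one checks that if $\pi'_S>\rho/(c+\rho)$ then (because $\Phi_0$ is strictly increasing on $[0,1)$ and $\rho>0$) stopping at $S$ costs strictly less than the original coasting tail, whereas if $\pi'_S$ is small the original tail is beaten by instead \emph{continuing to observe}. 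To see the relevant event has positive probability, note that conditionally on $\mathcal{F}_{S-1}$ the density of $X_S$ equals $\Phi_0(\pi_{S-1})f_1+(1-\Phi_0(\pi_{S-1}))f_0\ge\rho f_1$ (since $\Phi_0(\cdot)\ge\rho$), so the likelihood ratio $f_1(X_S)/f_0(X_S)$ exceeds --- and, symmetrically, falls below --- any prescribed finite level with positive probability, as $f_1\ne f_0$ forces $f_1/f_0$ to be non-constant with mean one under $f_0$; this pushes $\pi'_S$ above $1/(1+c)$, respectively near $0$, on an event of positive probability, giving the strict gain. A secondary technical point to be handled carefully is that $\eta$ and $t_\eta$ are not adapted in the forward sense; phrasing the modification through the stopping time $S$ (``observe the first time you would otherwise coast with a right to spare'') rather than through $t_\eta$ circumvents this.
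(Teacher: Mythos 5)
Your weak-improvement step is, at its core, the paper's own argument: the paper also inserts one extra (free) observation into the coasting region --- placing it at the declaration time $\tau=t_s$ itself rather than at the first coasting slot --- and uses exactly the martingale identity $\mathbb{E}_{\pi}[\Phi_1(X_k,\pi_{k-1})\mid\mathcal{F}_{k-1}]=\Phi_0(\pi_{k-1})$ (implicitly, via $H(\pi_{t_s},0)=1-\pi_{t_s}$ and the tower property) together with a re-optimized post-observation delay $m^{*}=\argmin_{m\geq 0}H(\pi_{t_s},m)$ to conclude $U(\pi,N,\tilde{\tau},\tilde{\mu})\leq U(\pi,N,\tau,\mu)$. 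The paper stops there: its proof delivers only the weak inequality, which is all that is actually needed downstream (Corollary~\ref{cor:ext} only requires that \emph{some} optimal pair takes an observation at its final decision epoch). So, up to the location of the inserted sample, your construction and the paper's coincide.

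Where your proposal genuinely breaks is the step you yourself flag as the crux: the upgrade to a strict inequality. Your positive-probability argument asserts that $f_1(X_S)/f_0(X_S)$ ``exceeds --- and falls below --- any prescribed finite level with positive probability'' because $f_1\neq f_0$. That is false: $f_1\neq f_0$ only forces the likelihood ratio to be non-degenerate, not unbounded. If $f_1/f_0\leq M<\infty$ (e.g.\ two densities on a common compact support with bounded ratio) and $\Phi_0(\pi_{S-1})\leq 1/(1+Mc)$ --- which occurs for small $\rho$ and small $\pi_{S-1}$ --- then a single observation can never push $\pi'_S$ above $1/(1+c)$, the early-stopping clause fires with probability zero, and your $(\tau',\mu')$ yields no strict gain. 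The complementary branch (``if $\pi'_S$ is small, beat the coasting tail by continuing to observe'') is not even implemented by the construction you define, which reverts to the original rule in that case. Two smaller defects: (a) your $S$, the \emph{first} slot at which the observer coasts with a right in hand, need not satisfy $t_{\eta}<S\leq\tau$ --- any gap $t_{i+1}-t_{i}>1$ earlier in the schedule produces coasting slots before $t_{\eta}$; (b) if $\tau=t_{\eta}+1$ there is no coasting slot strictly between $t_{\eta}$ and $\tau$, so $S$ as described need not exist (the paper sidesteps both issues by observing at $\tau$ itself). The honest conclusion is that your argument, like the paper's, establishes only $U(\pi,N,\tau',\mu')\leq U(\pi,N,\tau,\mu)$; that suffices for the way Lemma~\ref{lem:ext} is used, but not for the literal claim that $(\tau,\mu)$ is ``not optimal'' without additional hypotheses on $f_0$ and $f_1$.
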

\begin{proof}
The proof is provided in Appendix \ref{apd:ext}.
\end{proof}

This result implies that if the observer has any sampling rights left, it is not optimal for him to stop at time slot $k$ without taking an observation at $k$. In other words, the only scenario in which the observer may stop sometime after an observation is taken occurs when she has exhausted all her sampling rights. From this lemma, we immediately have the following result.
\begin{cor}\label{cor:ext}
If $\mu^{*} = \{t_{1}^{*}, \ldots, t_{\eta^{*}}^{*}\}$ is the optimal sampling strategy, then on the event $\{\eta^{*} < N\}$, we have $\tau^{*} = t_{\eta^{*}}^{*}$.
\end{cor}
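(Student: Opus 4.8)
The plan is to obtain the corollary as a direct consequence of Lemma~\ref{lem:ext}, via a localization-and-contradiction argument. The elementary observation to start from is that, since $\eta^{*}$ is by definition the number of non-trivial observations the observer has collected at the instant she stops, every sampling time $t_{1}^{*},\dots,t_{\eta^{*}}^{*}$ satisfies $t_{i}^{*}\le\tau^{*}$; in particular $\tau^{*}\ge t_{\eta^{*}}^{*}$ always (with the usual convention $t_{0}=0$ on $\{\eta^{*}=0\}$). Hence the corollary is equivalent to the assertion that the event
\[
B:=\{\eta^{*}<N\}\cap\{\tau^{*}>t_{\eta^{*}}^{*}\}
\]
is $P_{\pi}$-null, and I would prove it by contradiction, assuming $P_{\pi}(B)>0$.

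Under this assumption I would build a competing pair $(\tilde{\tau},\tilde{\mu})$ that coincides with $(\tau^{*},\mu^{*})$ on $B^{c}$ and, on $B$, applies exactly the perturbation used in the proof of Lemma~\ref{lem:ext} (insert one additional non-trivial observation after $t_{\eta^{*}}^{*}$ and, if needed, adjust the terminal decision accordingly). This perturbation is feasible on $B$ precisely because $\eta^{*}<N$ there, so at least one sampling right remains unspent; consequently $\tilde{\eta}\le N$ and $\tilde{\mu}\in\mathcal{U}_{N}$. One then checks that $\tilde{\mu}_{k}$ remains adapted in the sense of~\eqref{eq:muk} and that $\tilde{\tau}$ remains a stopping time with respect to $\{\mathcal{F}_{k}\}$; this requires describing $B$ through the information available to the observer at the relevant times, which is possible since the events $\{\eta^{*}<N\}$, $\{\tau^{*}>t_{\eta^{*}}^{*}\}$ and the value of $t_{\eta^{*}}^{*}$ are all measurable with respect to the observed history $(\mathbf{Z},\nu,\mu^{*})$ up to that point.

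To conclude, I would invoke Proposition~\ref{prop:cost} (or directly the definition of $U$), which writes the cost as an expectation over sample paths: on $B^{c}$ the integrand is unchanged, while on $B$ Lemma~\ref{lem:ext} yields a strict pathwise improvement, so $U(\pi,N,\tilde{\tau},\tilde{\mu})<U(\pi,N,\tau^{*},\mu^{*})$, contradicting the optimality of $(\tau^{*},\mu^{*})$. Therefore $P_{\pi}(B)=0$, which is exactly the statement that $\tau^{*}=t_{\eta^{*}}^{*}$ on $\{\eta^{*}<N\}$. The only delicate point is the bookkeeping in the middle step — verifying that the patched strategy is genuinely admissible and adapted, and that the gain furnished by Lemma~\ref{lem:ext} is pathwise rather than merely distributional, so that it survives restriction to $B$. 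If one is content to read Lemma~\ref{lem:ext} itself as already a statement about arbitrary positive-probability events, this step collapses and the corollary is immediate; I would nonetheless spell out the localization for completeness.
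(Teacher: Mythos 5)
Your proposal is correct and follows essentially the same route as the paper, which offers no separate proof of the corollary and simply reads it off from Lemma~\ref{lem:ext}; your localization-and-patching argument merely makes explicit the measurability and admissibility bookkeeping that the paper leaves implicit. The only caveat you inherit from the paper itself is that the improvement in the proof of Lemma~\ref{lem:ext} is a weak inequality rather than a strict one, so strictly speaking both arguments show that some optimal strategy satisfies $\tau^{*}=t_{\eta^{*}}^{*}$ on $\{\eta^{*}<N\}$.
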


We solve (P2) by using the dynamic programming principle. Similar to the approach used in~\cite{Bayraktar:SPA:09}, we define a functional operator $\mathcal{G}$ as
\begin{eqnarray}
\mathcal{G}V(\pi) = \min \left\{1-\pi, \inf_{m \geq 1}\mathbb{E}_{\pi}\left[c\sum_{k=0}^{m-1}\pi_k+ V(\pi_m)\right]\right\}, \label{eq:operator}
\end{eqnarray}
in which
\begin{eqnarray}
\pi_0 &=& \pi, \no \\
\pi_{k} &=& \pi+\sum\limits_{i=1}^k(1-\pi)\rho(1-\rho)^{i-1}, \quad k=1,\cdots m-1,\no\\
\pi_{m}&=&\frac{\Phi_0(\pi_{m-1})f_1(X_m)}{\Phi_0(\pi_{m-1})f_1(X_m)+(1-\Phi_0(\pi_{m-1}))f_0(X_m)}.\no
\end{eqnarray}
Using this functional operator, we can introduce a set of iteratively defined functions:
\begin{eqnarray}
V_0(\pi)&=&\min\limits_{m\geq 0}\left[c\sum\limits_{k=0}^{m-1}\pi_k+1-\pi_m\right], \label{eq:V0} \\
V_n(\pi) &=& \mathcal{G}V_{n-1}(\pi), \quad n = 1, \ldots, N. \label{eq:iteration}
\end{eqnarray}

The operator $\mathcal{G}$ converts (P2) to a Markov stopping problem. Specifically, we have the following result:
\begin{thm}\label{thm:optimal}
For all $n=0,\cdots, N$, $\pi_{0} = \pi \in [0, 1)$, we have
\begin{eqnarray}
J(\pi, n)=V_n(\pi). \no
\end{eqnarray}
Furthermore, by letting $t_{0}^{*}=0$, the optimal sampling time for (P2) can be determined by
\begin{eqnarray}\label{eq:time1}
t_{n+1}^{*}-t_n^{*} =\argmin\limits_{m\geq 1}\mathbb{E}_{\pi_{t^{*}_n}}\left[c\sum_{k=0}^{m-1}\pi_k+ V_{N-n-1}(\pi_{m})\right],
\end{eqnarray}
for $n=0, 1, \ldots, N-1$. The optimal sampling size is given as
\begin{eqnarray}\label{eq:stop}
\eta^{*} &=& \inf\left\{0 \leq n \leq N: \pi_{t^{*}_n} \in \mathcal{S}_{n} \right\},
\end{eqnarray}
in which $\mathcal{S}_{n}$ is the stopping domain defined as
\begin{eqnarray}
\mathcal{S}_{n} \hspace{-1mm} \triangleq \hspace{-1mm} \left\{\pi_{t_n}: 1-\pi_{t_n} \hspace{-1mm} \leq \hspace{-1mm} \inf_{m \geq 1}\mathbb{E}_{\pi_{t_{n}}}\left[c\sum_{k=0}^{m-1}\pi_k+ V_{N-n-1}(\pi_{m})\right] \hspace{-1mm} \right\},\no
\end{eqnarray}
for $n=0,\cdots, N-1$, and $\mathcal{S}_{N}\triangleq[0, 1]$. In addition, the optimal stopping time is given as
\begin{eqnarray}
\tau^{*} &=& t_{\eta^{*}}^{*} + m^{*}\mathbf{1}_{\{\eta^{*} = N\}}, \label{eq:tau}
\end{eqnarray}
where
\begin{eqnarray}
m^{*} = \argmin_{m \geq 0} \mathbb{E}_{\pi_{t^{*}_{N}}}\left[c\sum\limits_{k=0}^{m-1}\pi_k+1-\pi_m\right]. \no
\end{eqnarray}
\end{thm}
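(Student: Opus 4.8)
The plan is to prove the theorem by backward induction on the number $n$ of available sampling rights, establishing simultaneously that the iteratively defined functions $V_n$ of \eqref{eq:V0}--\eqref{eq:iteration} coincide with the value function $J(\cdot,n)$ and that the minimizing arguments in the recursion \eqref{eq:operator} yield the optimal sampling/stopping rule; Proposition \ref{prop:cost} is used throughout to rewrite the cost in the form $\mathbb{E}_{\pi}[1-\pi_\tau + c\sum_{k=0}^{\tau-1}\pi_k]$, and Lemma \ref{lem:ext}/Corollary \ref{cor:ext} are used to pin down the form of the optimal stopping time once rights remain. For the base case $n=0$ no further non-trivial observation is possible, so $\pi_k$ evolves deterministically through $\Phi_0$ and minimizing $U$ reduces to choosing a deterministic horizon $m\ge 0$, whose value is exactly $V_0(\pi)$; one first checks that $m\mapsto c\sum_{k=0}^{m-1}\pi_k+1-\pi_m$ attains its minimum, since $\pi_m\uparrow 1$ while the running sum diverges, so $V_0$ is well defined and the optimal residual stopping time is $m^{*}$.

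For the inductive step, assume $J(\cdot,n-1)=V_{n-1}$ and consider the $n$-right problem started at posterior $\pi$. At the current time the observer either stops, incurring $1-\pi$, or continues; in the latter case she waits some $m\ge 1$ slots — over which the posterior moves deterministically through $\pi_1,\dots,\pi_{m-1}$ via $\Phi_0$, contributing $c\sum_{k=0}^{m-1}\pi_k$ — then takes the $m$-th observation, landing at the random posterior $\pi_m=\Phi_1(X_m,\pi_{m-1})$, after which she faces a statistically identical problem with $n-1$ rights and initial posterior $\pi_m$. By the strong Markov property of $\{\pi_k\}$ and the induction hypothesis, the optimal cost-to-go from that instant equals $V_{n-1}(\pi_m)$; taking expectations, optimizing over $m$, and combining with the stop option gives $J(\pi,n)=\mathcal{G}V_{n-1}(\pi)=V_n(\pi)$. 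The inequality ``$\le$'' follows by exhibiting the explicit policy that follows the minimizers and invoking the induction hypothesis to realize $V_{n-1}(\pi_m)$; the inequality ``$\ge$'' follows by decomposing an arbitrary admissible $(\tau,\mu)$ at its first sampling time, matching the deterministic pre-observation stretch to the $m=t_1$ term of the continuation infimum and bounding the post-observation cost-to-go below by $V_{n-1}(\pi_{t_1})$ via the induction hypothesis, using the tower property and the identity $\mathbb{E}_{\pi}[\pi_m\mid\mathcal{F}_{m-1}]=\Phi_0(\pi_{m-1})$.

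Reading off the minimizers then gives the remaining claims. After the $n$-th of $N$ observations there remain $N-n-1$ rights for the horizon up to the $(n+1)$-th observation, so the inter-sampling time is the $\argmin$ in \eqref{eq:time1} involving $V_{N-n-1}$; the sample size $\eta^{*}$ is the first $n$ at which the stop value $1-\pi_{t_n}$ does not exceed the continuation value $\inf_{m\ge 1}\mathbb{E}_{\pi_{t_n}}[c\sum_{k=0}^{m-1}\pi_k+V_{N-n-1}(\pi_m)]$, i.e.\ the first time $\pi_{t_n}$ enters $\mathcal{S}_n$, with $\mathcal{S}_N=[0,1]$ since no rights remain once $N$ observations have been taken; and on $\{\eta^{*}=N\}$ the residual problem is exactly the $n=0$ problem started at $\pi_{t^{*}_N}$, whose optimizer is $m^{*}$, giving $\tau^{*}=t^{*}_{\eta^{*}}+m^{*}\mathbf{1}_{\{\eta^{*}=N\}}$, while $\tau^{*}=t^{*}_{\eta^{*}}$ on $\{\eta^{*}<N\}$ is precisely Corollary \ref{cor:ext}.

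The main obstacle is the lower bound $J(\pi,n)\ge V_n(\pi)$: one must show that \emph{no} adaptive admissible strategy — in particular one whose stopping time $\tau$ is not one of the sampling instants $t_i$, or whose sample size $\eta$ is itself random — can do better than $V_n$. This requires a careful reduction of the cost of such a strategy, via Proposition \ref{prop:cost} and repeated conditioning, to the quantities defining $\mathcal{G}V_{n-1}$, an application of the induction hypothesis at the random posterior $\pi_m$, and justification of the interchange of infimum and expectation together with the Markov step for the non-trivial observation update (and of the measurability of the induced sampling rule). A secondary technical point, needed so that the ``$\argmin$'' in the statement is meaningful, is that the infima defining $\mathcal{G}$ and $V_0$ are attained; this follows from the same $\pi_m\uparrow 1$ growth estimate used in the base case, together with an auxiliary induction showing each $V_n$ is bounded by $1-\pi$ and lower semicontinuous.
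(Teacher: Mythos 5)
Your proposal follows essentially the same route as the paper's proof in Appendix B: backward induction on the number of remaining sampling rights, with the two-sided inequality $J(\pi,n)\gtrless V_n(\pi)$ obtained by decomposing an arbitrary admissible strategy at its first sampling time $t_1$ (using Proposition \ref{prop:cost} and the induction hypothesis to bound the cost-to-go by $V_{n-1}(\pi_{t_1})$) and by exhibiting the policy that follows the minimizers, then reading off \eqref{eq:time1}--\eqref{eq:tau} via Corollary \ref{cor:ext} and the $n=0$ residual problem. The only differences are cosmetic: you add explicit attention to attainment of the infima and to the degenerate case, whereas the paper separately disposes of the $t_\eta=0$ versus $t_1\neq 0$ cases before the decomposition; neither changes the substance of the argument.
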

\begin{proof}
The proof is provided in Appendix \ref{apd:optimal}.
\end{proof}

\begin{rmk}
Theorem \ref{thm:optimal} indicates that the observer cannot decide the sampling time $t_{n+1}$ until she takes the $n^{th}$ observation. The conditional expectation on the right hand side of \eqref{eq:time1} is a function of $\pi_{t_{n}}$, which can only be obtained after making the $n^{th}$ observation. Hence, the optimal sampling time is characterized by the sampling interval, which is the time that the observer should wait after she makes the $n^{th}$ observation, on the left hand side of \eqref{eq:time1}.
\end{rmk}

\begin{rmk}
Using Theorem \ref{thm:optimal}, we now give a heuristic explanation of the operator $\mathcal{G}$ and the iterative function $\eqref{eq:iteration}$. In particular, $V_{n}(\pi)$ is the minimum cost when there are only $n$ sampling rights left. We could choose either to stop, which costs $1-\pi$, or to continue and take another observation at $m$ that minimizes the expectation of the future cost. Therefore, the minimizer $m$ in the definition of the operator $\mathcal{G}$ is the next sampling time, and $\pi_{k}$'s in $\mathcal{G}$ are the posterior probabilities that are consistent with the expressions \eqref{eq:posterior}-\eqref{eq:kernel1}.
\end{rmk}

Let
$$ \bar{\pi} = 1 - \pi, \quad \bar{\rho} = 1-\rho,$$
it is easy to verify that
\begin{eqnarray}
\sum\limits_{k=0}^{m-1}\pi_{k} &=& m-\frac{\bar{\pi}}{\rho}(1-\bar{\rho}^{m}), \label{eq:postk} \\
\pi_{m} &=& \frac{(1-\bar{\pi}\bar{\rho}^{m})f_1(X_{m})} {(1-\bar{\pi}\bar{\rho}^{m})f_1(X_{m})+(\bar{\pi}\bar{\rho}^{m})f_0(X_{m})}. \label{eq:postt}
\end{eqnarray}
Hence $\mathcal{G}V(\pi)$ can be simplified as
\begin{eqnarray}
&&\hspace{-10mm} \mathcal{G}V(\pi) = \min\left\{1-\pi, \right. \no\\
&& \left. \inf_{m \geq 1}\left\{ c\left(m-\frac{\bar{\pi}}{\rho}(1-\bar{\rho}^{m})\right)+ \mathbb{E}_{\pi}\left[V(\pi_{m})\right] \right\} \right\},
\end{eqnarray}
and $V_{0}(\pi)$ can be simplified as
\begin{eqnarray}
V_{0}(\pi) = \min_{m\geq 0} \left[c\left(m-\frac{\bar{\pi}}{\rho}(1-\bar{\rho}^{m})\right) + \bar{\pi}\bar{\rho}^{m}\right].
\end{eqnarray}
Based on this form, the optimal stopping time can be further simplified to a threshold rule. We define
\begin{eqnarray}
\pi^{U}_{n} = \inf \{ \pi \in [0, 1]| 1-\pi = V_{N-n}(\pi)\}, \no
\end{eqnarray}
for $n=0, \ldots, N$, and the threshold rule is described in the following theorem.
\begin{thm}\label{thm:concave}
For each $n\leq N$, $V_n(\pi)$ is a concave function of $\pi$ and $V_n(1)=0$. Furthermore, the optimal stopping rule for the $N$ sampling right problem can be given as a threshold rule. Specifically,
\begin{eqnarray}
\eta^{*}&=&\min\{n:\pi_{t_n^{*}} \in \mathcal{S}_{n}\}, \label{eq:th_eta}
\end{eqnarray}
where
\begin{eqnarray}
\mathcal{S}_{n} = \{ \pi_{t_{n}} : \pi_{t_{n}} \geq \pi_{n}^{U} \}
\end{eqnarray}
for $n = 0, \ldots, N-1$ and $\mathcal{S}_{N}=[0, 1]$. Moreover, if  $\eta^{*} < N$, then $\tau^{*} = t_{\eta^{*}}$; if $\eta^{*} = N$, then
\begin{eqnarray}
\tau^{*} = \inf \left\{ k \geq t_{N}: \pi_{k} \geq \pi^{U}_{N} \right\}. \label{eq:th_tau}
\end{eqnarray}
\end{thm}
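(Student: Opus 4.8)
The plan is to prove concavity of each $V_n$ by induction on $n$, together with $V_n(1)=0$, and then to read off the threshold statement directly from Theorem~\ref{thm:optimal}. A preliminary remark used throughout: $V_n(\pi)\le 1-\pi$ for every $n$. Indeed, taking $m=0$ in \eqref{eq:V0} gives $V_0(\pi)\le 1-\pi$, and since the operator $\mathcal{G}$ in \eqref{eq:operator} begins with $\min\{1-\pi,\cdot\}$, the bound is inherited by each $V_n=\mathcal{G}V_{n-1}$. For the base case of the concavity induction, using \eqref{eq:postk}--\eqref{eq:postt} one writes $V_0(\pi)=\min_{m\ge0}[c(m-\tfrac{\bar{\pi}}{\rho}(1-\bar{\rho}^{m}))+\bar{\pi}\bar{\rho}^{m}]$; each bracketed term is affine in $\bar{\pi}=1-\pi$, hence affine in $\pi$, so $V_0$ is a pointwise infimum of affine functions and therefore concave, and at $\pi=1$ every term equals $cm$, so $V_0(1)=0$.

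For the inductive step, assume $V_{n-1}$ is concave with $V_{n-1}(1)=0$. The only delicate point is that $\pi\mapsto\mathbb{E}_\pi[V_{n-1}(\pi_m)]$ is concave. Between times $0$ and $m$ the posterior evolves deterministically, reaching $\tilde\pi_m:=1-\bar{\pi}\bar{\rho}^{m}$, which is affine and increasing in $\pi$; conditionally $X_m$ has predictive density $w_x:=\tilde\pi_m f_1(x)+(1-\tilde\pi_m)f_0(x)$ and, by \eqref{eq:postt}, $\pi_m=q_x:=\tilde\pi_m f_1(x)/w_x$, so that
\[
\mathbb{E}_\pi[V_{n-1}(\pi_m)]=\int V_{n-1}(q_x)\,w_x\,dx .
\]
I would then invoke the standard concavity-preservation argument (cf.\ the proof of Proposition~5.1 in \cite{Poor:Book:08}): for fixed $x$, $q_xw_x=\tilde\pi_m f_1(x)$ and $w_x$ are both affine in $\tilde\pi_m$, so if $\tilde\pi_m=\lambda\tilde\pi'+(1-\lambda)\tilde\pi''$ then $q_x$ is the convex combination of the two corresponding $q$-values with weights $\lambda w'/w_x$ and $(1-\lambda)w''/w_x$; concavity of $V_{n-1}$ followed by multiplication through by $w_x>0$ shows that $x\mapsto V_{n-1}(q_x)w_x$ is concave in $\tilde\pi_m$, hence so is the integral, and composing with the affine map $\pi\mapsto\tilde\pi_m$ keeps it concave. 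Consequently every term $c(m-\tfrac{\bar{\pi}}{\rho}(1-\bar{\rho}^m))+\mathbb{E}_\pi[V_{n-1}(\pi_m)]$ is affine-plus-concave, the infimum over $m\ge1$ is concave, and $V_n=\min\{1-\pi,\cdot\}$ is concave. At $\pi=1$, $\tilde\pi_m=1$ forces $\pi_m=1$ almost surely, so $\mathbb{E}_1[V_{n-1}(\pi_m)]=V_{n-1}(1)=0$, the infimum over $m\ge1$ equals $c$, and $V_n(1)=\min\{0,c\}=0$.

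With concavity established, the threshold structure follows. By Theorem~\ref{thm:optimal} and the form of $\mathcal{G}$ in \eqref{eq:operator}, $\pi\in\mathcal{S}_n$ if and only if $1-\pi=V_{N-n}(\pi)$. Set $D_n(\pi):=(1-\pi)-V_{N-n}(\pi)$; this is convex (affine minus concave), nonnegative by the preliminary bound, and $D_n(1)=0$ since $V_{N-n}(1)=0$. A finite convex function is continuous on the interior, and $0\le\lim_{\pi\to1^-}D_n(\pi)\le D_n(1)=0$, so $D_n$ is continuous on $[0,1]$; hence $\{\pi:D_n(\pi)=0\}=\{\pi:D_n(\pi)\le0\}$ is a closed interval containing $1$, i.e.\ $[\pi_n^{U},1]$ with $\pi_n^{U}=\inf\{\pi:1-\pi=V_{N-n}(\pi)\}$, exactly the quantity defined just before the theorem. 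Therefore $\mathcal{S}_n=\{\pi_{t_n}:\pi_{t_n}\ge\pi_n^{U}\}$ for $n<N$ and $\mathcal{S}_N=[0,1]$; substituting into \eqref{eq:stop} yields \eqref{eq:th_eta}, and into \eqref{eq:tau} yields $\tau^*=t_{\eta^*}$ on the event $\{\eta^*<N\}$.

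It remains to handle $\{\eta^*=N\}$, where all sampling rights are spent, so from time $t_N$ on the posterior evolves deterministically through $\Phi_0$ and increases strictly to $1$. The residual cost is exactly $V_0(\pi_{t_N})$, and splitting off the first step in \eqref{eq:V0} gives the Bellman relation $V_0(\pi)=\min\{1-\pi,\ c\pi+V_0(\Phi_0(\pi))\}$, whose stopping region is again $\{\pi:V_0(\pi)=1-\pi\}=[\pi_N^{U},1]$; since the chain $\{\pi_k\}_{k\ge t_N}$ is deterministic and monotone, the minimizer $m^*$ in \eqref{eq:tau} is the first index with $\pi_{t_N+m^*}\ge\pi_N^{U}$, which is \eqref{eq:th_tau}. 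I expect the concavity-preservation step for $\mathbb{E}_\pi[V_{n-1}(\pi_m)]$ to be the main obstacle; once concavity is in hand, the threshold description is just convex-analysis bookkeeping together with substitution into Theorem~\ref{thm:optimal}.
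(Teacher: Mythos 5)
Your proof is correct and follows essentially the same route as the paper: an induction establishing concavity of $V_n$, with the key step being the Bayes-mixture argument showing $\pi\mapsto\mathbb{E}_\pi[V_{n-1}(\pi_m)]$ is concave (your weights $\lambda w'/w_x$, $(1-\lambda)w''/w_x$ are exactly the paper's $\vartheta$), followed by the observation that $(1-\pi)-V_{N-n}(\pi)$ is convex, nonnegative, and vanishes at $1$, so the stopping set is an interval $[\pi_n^U,1]$. You are somewhat more explicit than the paper on the base case (infimum of affine functions) and on the deterministic post-$t_N$ phase via the Bellman relation $V_0(\pi)=\min\{1-\pi,\,c\pi+V_0(\Phi_0(\pi))\}$, but these are elaborations of the same argument rather than a different one.
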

\begin{proof}
The proof is provided in Appendix~\ref{apd:concave}.
\end{proof}

\begin{rmk}
We notice that $\eta^{*}$ is a threshold rule if $\eta^{*} < N$, but it is not a threshold rule if $\eta^{*} = N$ in Theorem \ref{thm:concave}. Hence $\eta^{*} = N$ is true even if $\pi_{t_{N}^{*}} < \pi^{U}_{N}$. This is consistent with our intuition that the observer cannot take more than $N$ observations. However, on the event $\{ \pi_{t_{N}^{*}} < \pi^{U}_{N} \}$, the optimal stopping rule is still a threshold rule due to the fact that $V_{0}(\pi)$ is concave and $V_{0}(\pi)$ is bounded by $1-\pi$.
\end{rmk}

Although Theorem \ref{thm:concave} simplifies the optimal stopping rule into a threshold rule, the optimal strategy still has a very complex structure as the optimal sampling rule is in general difficult to characterize. From (15), one can see that the optimal sampling rule depends on $V_n(\pi)$. Generally $V_n(\pi)$ does not have a close form for a general value of $n$, and it could only be calculated numerically. For reader's convenience, Table \ref{tbl:fixed_algorithm} summarizes the numerical procedure for the calculation of the optimal solution. Although the problem can be solved numerically, numerical calculation provides little insight for the optimal solution. This motivates us to conduct asymptotic analysis in the next subsection.

\begin{table*}\centering
\caption{Optimal Algorithm for $N$ sampling right Problem}\label{tbl:fixed_algorithm}
\begin{tabular}{ l l l }
  \hline
  \hline
  Offline Procedure: &  & \\
  &step $0$: & Calculate $V_{0}(\pi) = \min\limits_{m\geq 0}\left[c\sum\limits_{k=0}^{m-1}\pi_k+1-\pi_m\right]$.\\ 
  &          & Calculate $W_{0}(\pi, m) = c\left(m-\frac{\bar{\pi}}{\rho}(1-\bar{\rho}^{m}) \right)+\mathbb{E}_{\pi}[V_{0}(\pi_{m})]$. \\
  &          & Calculate $\pi^{U}_{N} = \inf \{ \pi \in [0, 1]| 1-\pi = V_{0}(\pi)\}$. \\
  &step $n$: & Given $W_{n-1}(\pi, m)$, calculate $V_{n}(\pi)=\min\{1-\pi, \inf_{m}W_{n-1}(\pi, m)\}$. \\
  &          & Given $V_{n}(\pi)$, calculate $W_{n}(\pi, m) = c\left(m-\frac{\bar{\pi}}{\rho}(1-\bar{\rho}^{m}) \right)+\mathbb{E}_{\pi}[V_{n}(\pi_{m})]$. \\
  &          & Calculate $\pi^{U}_{N-n} = \inf \{ \pi \in [0, 1]| 1-\pi = V_{n}(\pi)\}$, \\
  &          & for $n=1, 2, \ldots, N.$\\
  \hline
  Online Procedure: &  & \\
  &step $0$: & If $\pi_{0} \geq \pi_{0}^{U}$, the observer stops. Otherwise, continues. \\
  &          & Find the sampling interval $t_{1} = \arg_{m}W_{N}(\pi_{0}, m)$. \\
  &          & Take observation $X_{t_{1}}$ and calculate $\pi_{t_1}$ by \eqref{eq:kernel1}. \\
  &step $n$: & If $\pi_{t_n} \geq \pi_{n}^{U}$, the observer stops. Otherwise, continues. \\
  &          & Find the sampling interval $t_{n+1}-t_{n} = \arg_{m}W_{N-n}(\pi_{n}, m)$. \\
  &          & Take observation $X_{t_{n+1}}$ and calculate $\pi_{t_{n+1}}$ by \eqref{eq:kernel1}, \\
  &          & for $n=1, 2, \ldots, N-1.$\\
  &step $N$: & If $\pi_{t_N} \geq \pi_{N}^{U}$, the observer stops. Otherwise, continues. \\
  &          & Updates the posterior probability by \eqref{eq:kernel0} at every time slot, stops when $\pi_{N}^{U}$ is exceeded. \\
  \hline
\end{tabular}
\end{table*}

\subsection{Asymptotic Upper Bound}
In this subsection, we investigate if there are any scenarios under which the performance of the limited sampling right problem would approach to the performance of the classic Bayesian detection.

The performance of the classic Bayesian case, in which the observer can take observations at every time slot, is certainly a lower bound of the performance of the $N$ sampling right problem. In this case, the asymptotic performance is given in \cite{Tartakovsky:TPIA:04}. Hence we have
\begin{eqnarray}
&&\hspace{-10mm}\mathrm{ADD}(\pi, N, \tau^{*}, \mu^{*}) \no\\
&&\geq \frac{|\log\alpha|}{D(f_{1} || f_{0})+|\log(1-\rho)|}(1+o(1)), \label{eq:lower}
\end{eqnarray}
where $D(f_{1} || f_{0})$ is the Kullback-Leibler (KL) divergence of $f_{1}$ and $f_{0}$.

We consider a uniform sampling strategy with a threshold stopping rule. In particular, the observer adopts a sampling strategy $\mu=\{\varsigma, 2\varsigma, \ldots, \eta\varsigma\}$, i.e., she takes observations every $\varsigma$ symbols, and she adopts a stopping rule $\tau = \inf \{n\varsigma: \pi_{n\varsigma} \geq 1-\alpha, n \in \mathbb{N}\}$. The performance of this uniform sampling strategy serves as an upper bound of the performance of the $N$ sampling right problem. In particular, we have the following proposition:
\begin{prop} \label{prop:quasi}
(Asymptotic Upper Bound) As $\alpha \rightarrow 0$, if the number of sampling rights satisfies
\begin{eqnarray}
N \geq \frac{|\log \alpha|}{|\log(1-\rho)| \varsigma} \label{eq:N_large}
\end{eqnarray}
for some constant $\varsigma < \infty$, then
\begin{eqnarray}
&&\hspace{-10mm} \mathrm{ADD}(\pi, N, \tau^{*}, \mu^{*}) \no\\
&&\leq \frac{|\log\alpha|\varsigma}{D(f_{1}||f_{0})+|\log(1-\rho)|\varsigma}(1+o(1)). \label{eq:upper}
\end{eqnarray}
\end{prop}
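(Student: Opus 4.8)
The plan is to exhibit the uniform‑sampling strategy $\mu=\{\varsigma,2\varsigma,\dots,\eta\varsigma\}$ together with the threshold rule $\tau=\inf\{n\varsigma:\pi_{n\varsigma}\ge 1-\alpha\}$ described just before the proposition as a feasible but suboptimal competitor, and then to upper‑bound its average detection delay; since $(\tau^{*},\mu^{*})$ achieves the smallest ADD among strategies meeting the false‑alarm level $\alpha$, \eqref{eq:upper} follows. It also suffices to treat $N$ equal to the right side of \eqref{eq:N_large} (rounded up), because enlarging the budget only relaxes the constraint in (P1). Feasibility of $(\tau,\mu)$ is checked first: at most $N$ observations are used by construction; on $\{\tau<\infty\}$ the rule forces $\pi_{\tau}\ge 1-\alpha$; and since \eqref{eq:kernel0} gives $1-\Phi_{0}(\pi)=(1-\rho)(1-\pi)$, once the $N$ rights are spent the posterior drifts deterministically to $1$, so under \eqref{eq:N_large} one even has the crude bound $\tau\le 2N\varsigma$. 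Hence $\mathrm{PFA}(\pi,N,\tau,\mu)=\mathbb E_{\pi}[\mathbf 1_{\{\tau<\Lambda\}}]=\mathbb E_{\pi}[1-\pi_{\tau}]\le\alpha$, and therefore $\mathrm{ADD}(\pi,N,\tau^{*},\mu^{*})\le\mathbb E_{\pi}[(\tau-\Lambda)^{+}]$.

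Next I would reduce the delay of this explicit rule to the first passage of a random walk over a high level. Writing the posterior odds at the sampling epochs as $R_{n}=\pi_{n\varsigma}/(1-\pi_{n\varsigma})$ and composing $\varsigma$ prior‑drift updates \eqref{eq:kernel0} with one likelihood update \eqref{eq:kernel1}, an elementary computation gives $R_{n}\ge(1-\rho)^{-\varsigma}\,[f_{1}(X_{n\varsigma})/f_{0}(X_{n\varsigma})]\,R_{n-1}$, i.e.
\[
\log R_{n}\ \ge\ \log R_{n-1}+\varsigma|\log(1-\rho)|+\log\frac{f_{1}(X_{n\varsigma})}{f_{0}(X_{n\varsigma})}.
\]
Since $\{\pi_{n\varsigma}\ge 1-\alpha\}=\{\log R_{n}\ge b_{\alpha}\}$ with $b_{\alpha}:=\log\frac{1-\alpha}{\alpha}=|\log\alpha|(1+o(1))$, the time $\tau/\varsigma$ is dominated by the first time a random walk started at $\log R_{0}$, with increments $\varsigma|\log(1-\rho)|+\log\frac{f_{1}(X_{n\varsigma})}{f_{0}(X_{n\varsigma})}$, crosses the level $b_{\alpha}$.

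It then remains to bound this first‑passage time and average over the change‑point. Fix $\lambda$, let $n_{0}=\lceil\lambda/\varsigma\rceil$ so that $n_{0}\varsigma$ is the first sampling epoch after $\lambda$, and work under $P_{\lambda}$; for $j\ge n_{0}$ the increments above are i.i.d.\ with mean $\mu_{Y}:=D(f_{1}||f_{0})+\varsigma|\log(1-\rho)|>0$ and (under the usual second‑moment assumption on $\log(f_{1}/f_{0})$) finite variance, so a Lorden/renewal‑type estimate, exactly as in the classical Bayesian analysis \cite{Tartakovsky:TPIA:04}, shows that, from a starting value that is negligible compared with $b_{\alpha}$ and provided the budget is not exhausted first, at most $b_{\alpha}/\mu_{Y}+O(1)$ further super‑steps are needed in expectation. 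To make this uniform I would split $\mathbb E_{\pi}[(\tau-\Lambda)^{+}]=\sum_{\lambda\le\lambda_{0}}+\sum_{\lambda>\lambda_{0}}$ with $\lambda_{0}=o(|\log\alpha|)$, $\lambda_{0}\to\infty$. For $\lambda\le\lambda_{0}$ at most $\lambda_{0}/\varsigma=o(|\log\alpha|)$ rights are spent before $\lambda$, the start value $\log R_{n_{0}}$ is a sum of $O(\lambda_{0})$ terms hence $o(b_{\alpha})$, and by \eqref{eq:N_large} the remaining budget $N-\lambda_{0}/\varsigma$ still exceeds $b_{\alpha}/\mu_{Y}=\frac{|\log\alpha|}{D(f_{1}||f_{0})+\varsigma|\log(1-\rho)|}(1+o(1))$ by a positive multiple of $|\log\alpha|$ (because $\frac{1}{\varsigma|\log(1-\rho)|}>\frac{1}{D(f_{1}||f_{0})+\varsigma|\log(1-\rho)|}$), so the budget never binds and $\mathbb E_{\lambda}[(\tau-\lambda)^{+}]\le\frac{|\log\alpha|\varsigma}{D(f_{1}||f_{0})+\varsigma|\log(1-\rho)|}(1+o(1))$ uniformly in $\lambda\le\lambda_{0}$. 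For $\lambda>\lambda_{0}$ the crude bound $\tau\le 2N\varsigma$ together with $P_{\pi}(\Lambda>\lambda_{0})=(1-\pi)(1-\rho)^{\lambda_{0}}$ makes that tail $O\!\big(N\varsigma(1-\rho)^{\lambda_{0}}\big)=o(|\log\alpha|)$. Averaging against the geometric prior \eqref{eq:geometry} and combining the two sums yields \eqref{eq:upper}.

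The delicate step is the one just sketched: the renewal/overshoot estimate must be made uniform in the change‑point $\lambda$, the pre‑change excursion of $\log R_{n}$ must be controlled so that the random walk does not start too far below $0$ for moderate $\lambda$, and one must verify that condition \eqref{eq:N_large} indeed leaves enough sampling rights after the change for the posterior to reach $1-\alpha$. Everything else — the odds recursion, feasibility, and the geometric‑tail truncation — is routine, so the hard part amounts to adapting the classical first‑passage arguments of \cite{Tartakovsky:TPIA:04} to the $\varsigma$‑subsampled posterior chain under the finite‑budget constraint.
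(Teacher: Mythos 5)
Your proposal is correct and follows the same overall strategy as the paper: exhibit the uniform sampling rule with interval $\varsigma$ together with the threshold stopping rule as a feasible but suboptimal competitor, verify its false-alarm level, and bound its delay. The execution of the key delay estimate differs, however. The paper's device is to define the effective change-point $\Gamma = \min\{n : n\varsigma \geq \Lambda\}$ for the subsampled sequence $\{X_{k\varsigma}\}$; since $\Gamma$ again has a geometric-type prior with tail exponent $d=\varsigma|\log(1-\rho)|$, the subsampled problem is \emph{literally} a classical Bayesian quickest detection problem, so the estimate $\mathbb{E}[\gamma-\Gamma\mid\gamma\geq\Gamma]\leq \frac{|\log\alpha|}{D(f_{1}||f_{0})+\varsigma|\log(1-\rho)|}(1+o(1))$ is obtained by a direct citation of Theorem 3 of \cite{Tartakovsky:TPIA:04}, and the finite correction $0\leq\Gamma\varsigma-\Lambda\leq\varsigma-1$ is absorbed into the $o(1)$. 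You instead re-derive this first-passage asymptotic by hand (odds recursion, renewal estimate, truncation at $\lambda_{0}$), which is more self-contained but reproves what the citation buys and obliges you to manage uniformity in $\lambda$ yourself. The second difference is the budget-feasibility step: the paper shows $P(N\geq\gamma)\rightarrow 1$ probabilistically, via the Shiryaev--Roberts representation of $\pi_{N\varsigma}$ and the almost-sure divergence of $\sum_{k}q_{k}\prod_{j}L(X_{j\varsigma})$, whereas you argue deterministically that the pure prior drift after exhaustion gives $\tau\leq 2N\varsigma$ and that the budget count under \eqref{eq:N_large} leaves a surplus of order $|\log\alpha|$. Both routes work; your crude bound $\tau\leq 2N\varsigma$ in fact gives a cleaner way to dispose of the rare event on which the budget is exhausted (multiplying it by a vanishing probability), a point the paper's own proof passes over somewhat quickly.
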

\begin{proof}
The proof is provided in Appendix \ref{apd:quasi}.
\end{proof}

\begin{rmk}
In the conventional asymptotic analysis, one is interested in the average detection delay when $\alpha \rightarrow 0$. For the limited observation case ($ 0 \leq N < \infty$), it is easy to find that
\begin{eqnarray}
\mathrm{ADD}(\pi, N, \tau^{*}, \mu^{*})  = \frac{|\log\alpha|}{|\log(1-\rho)|}(1+o(1)). \label{eq:trivial}
\end{eqnarray}
However, this result brings little information since this ADD can be achieved by any sampling strategy with the threshold rule $\tau=\inf\{k, \pi_{k} \geq 1-\alpha\}$.  \eqref{eq:trivial} could only indicate the order of the average detection delay of the limited sampling right problem. In order to obtain an informative result, in Proposition \ref{prop:quasi}, we consider an alternative condition \eqref{eq:N_large}. This condition is weaker than the limited sampling rights constraint, but is stronger than the condition that the observer has infinity many sampling rights, which is assumed in the classic Bayesian setting.
\end{rmk}

\begin{rmk} \label{rmk:N_large}
One can notice from \eqref{eq:N_large} that $N\rightarrow \infty$ when $\alpha \rightarrow 0$ for any given $\rho$. However, this is different from the classic Bayesian quickest detection. In the classic Bayesian problem, the observer has so many sampling rights that she can take observation at every time slot. But \eqref{eq:N_large} cannot guarantee the observer can achieve the false alarm constraint at her last sampling right if she takes sample at every time instance. It guarantees only that one can achieve the false alarm constraint by the uniform sampling with interval $\varsigma$.
\end{rmk}

From Proposition \ref{prop:quasi}, we can identify scenarios under which the performance of the $N$ sampling right problem is close to that of the classic Bayesian problem. Here we give two such cases. In the first case, when $N$ satisfies \eqref{eq:N_large} with $\varsigma = 1$, from \eqref{eq:lower} and \eqref{eq:upper}, we can see that the upper bound and the lower bound are identical, and hence the ADD of the $N$ sampling right problem will be close to that of the classic Baysian problem. For a problem with a finite sampling rights $N$, this condition can be achieved when $\rho \rightarrow 1$. Intuitively, in the large $\rho$ case, even a few samples can lead to a small false alarm probability, hence the $N$ sampling right problem is close to the classic Bayesian problem. In another scenario, if $D(f_{1}||f_{0})$ close to $0$, i.e. $f_{0}$ and $f_{1}$ are very close to each other, the difference between the ADD of the $N$ sampling right problem and that of the classic Bayesian problem is on the order $o(\log \alpha)$. 
Intuitively, in this scenario, the information provided by the likelihood ratios of observations is quite limited, and therefore, the decision making mainly depends on the prior probability of the change-point $\Lambda$.

\section{Problems with the Stochastic Sampling Right Constraint} \label{sec:stc_opt}
In this section, we study the optimal solution for the problem in the general setup when $\nu$ is a stochastic process described in Section \ref{sec:model}.
\subsection{Optimal Solution}
Denote the posterior probability as
\begin{eqnarray}
\pi_{k} = P_{\pi}^{\nu}(\tau \leq k | \mathcal{F}_{k}). \no
\end{eqnarray}

Following the similar procedure as in Proportion \ref{prop:cost}, for any $\mu$ and $\tau$, we can convert the cost function into following form:
\begin{eqnarray}
U(\pi, N, \tau, \mu) = \mathbb{E}_{\pi}^{\nu}\left[ 1-\pi_{\tau}+c\sum_{k=0}^{\tau-1} \pi_{k} \right].
\end{eqnarray}

This problem can be solved by the backward induction method. In particular, we first solve a finite horizon problem, then we extend the solution to the infinite horizon problem by a limit argument. Hence, we first consider a finite horizon problem with a horizon $T$, that is, we consider the case that the observer must stop at a time no later than $T$. We define
\begin{eqnarray}
J^{T}_{k}(\pi_{k}, N_{k}) &\triangleq& \inf_{\mu^{T}_{k+1}\in\mathcal{U}_{k+1}^{T}, \tau \in \mathcal{T}_{k}^{T}} U(\pi_{k}, N_{k}, \tau, \mu_{k+1}^{T}), \no
\end{eqnarray}
with
\begin{eqnarray}
U(\pi_{k}, N_{k}, \tau, \mu_{k+1}^{T}) &\triangleq& \mathbb{E}_{\pi_{k}}^{\nu}\left[1-\pi_{\tau}+c\sum_{i=k}^{\tau-1}\pi_{i}\right], \no
\end{eqnarray}
in which $\mu^{T}_{k} = \{ \mu_{k}, \mu_{k+1}, \ldots, \mu_{T}\}$ is the strategy adopted by the observer from $k$ to $T$, $\mathcal{U}_{k}^{T} = \{\mu^{T}_{k} : N_{i}\geq 0, \forall i=k, \ldots, T \}$ is the admissible set of sampling strategies, and $\mathcal{T}_{k}^{T} = \{ \tau \in \mathcal{T}: k \leq \tau \leq T\}$ is the set of admissible stopping times. We notice that by setting $k=0$, $J_{0}^{T}(\pi_{0}, N_{0})$ is the cost function for the finite horizon problem with a horizon $T$.

We then introduce a set of iteratively defined functions. Let
\begin{eqnarray}
V_{T}^{T}(\pi_{T}, N_{T}) &=& 1-\pi_{T}, \no
\end{eqnarray}
and for $k=T-1, T-2, \ldots, 0$, we define
\begin{eqnarray}
&&\hspace{-10mm} W_{k+1}^{T}(\pi_{k}, N_{k}, \nu_{k+1}) \no\\
&=& \min\left\{\mathbb{E}_{\pi_{k}}^{\nu}[V_{k+1}^{T}(\pi_{k+1}, N_{k+1})|\nu_{k+1}, \mu_{k+1}=0], \right. \no\\
&& \left.\mathbb{E}_{\pi_{k}}^{\nu}[V_{k+1}^{T}(\pi_{k+1}, N_{k+1})|\nu_{k+1}, \mu_{k+1}=1]\right\},\no\\
&&\hspace{-10mm} V_{k}^{T}(\pi_{k}, N_{k}) \no\\
&=& \min\{1-\pi_{k}, c\pi_{k}+\mathbb{E}^{\nu}[W^{T}_{k+1}(\pi_{k}, N_{k},\nu_{k+1})]\}. \no
\end{eqnarray}

This set of functions convert the finite horizon problem into a Markov stopping problem. Specifically, we have the following theorem:
\begin{thm}\label{thm:DP}
For all $k=1, 2, \ldots, T$, we have
\begin{eqnarray}
J^{T}_{k}(\pi_{k}, N_{k}) = V^{T}_{k}(\pi_{k}, N_{k}). \no
\end{eqnarray}
Furthermore, the optimal sampling strategy is given as
\begin{eqnarray}
\mu^{*}_{k} = \argmin_{\mu_{k}\in \{0, 1\}} \mathbb{E}_{\pi_{k-1}}^{\nu}[V_{k}^{T}(\pi_{k}, N_{k})|\nu_{k}, \mu_{k}].\no
\end{eqnarray}
The optimal stopping rule is given as
\begin{eqnarray}
&&\hspace{-15mm}\tau^{*} = \inf\left\{ 0 \leq k \leq T: 1-\pi_{k} \right. \no\\
&&\left. \leq c\pi_{k}+\mathbb{E}^{\nu}[W^{T}_{k+1}(\pi_{k}, N_{k},\nu_{k+1})] \right\}. \no
\end{eqnarray}
\end{thm}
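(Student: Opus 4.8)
The plan is to prove the theorem by backward induction on $k$, from $k=T$ down to $k=1$, which is the standard dynamic programming argument for a finite–horizon optimal stopping problem with an embedded sampling control. Throughout I regard $J_k^T$ and $V_k^T$ as functions on $[0,1)\times\{0,1,2,\ldots\}$, the state being the pair (posterior, remaining sampling rights). The base case $k=T$ is immediate: every admissible stopping time in $\mathcal{T}_T^T$ equals $T$, so the sum $\sum_{i=T}^{\tau-1}\pi_i$ is empty and $J_T^T(\pi,N)=\mathbb{E}_\pi^\nu[1-\pi_T]=1-\pi=V_T^T(\pi,N)$.

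For the inductive step, assume $J_{k+1}^T\equiv V_{k+1}^T$. At time $k$ the observer either stops, incurring $1-\pi_k$, or continues. Starting from the representation $U(\pi_k,N_k,\tau,\mu_{k+1}^T)=\mathbb{E}_{\pi_k}^\nu[1-\pi_\tau+c\sum_{i=k}^{\tau-1}\pi_i]$ and using the tower property, on the event $\{\tau\ge k+1\}$ the cost splits as $c\pi_k$ plus $\mathbb{E}_{\pi_k}^\nu\big[\mathbb{E}[\,1-\pi_\tau+c\sum_{i=k+1}^{\tau-1}\pi_i\mid\mathcal{F}_{k+1}\,]\big]$. Here one invokes the Markov/sufficient–statistic structure: conditionally on $\mathcal{F}_{k+1}$, the admissible continuation strategies depend on the past only through $N_{k+1}$, the admissible stopping times lie in $\mathcal{T}_{k+1}^T$, and the conditional law of the future observations given $\mathcal{F}_{k+1}$ depends on the past only through the posterior $\pi_{k+1}$ — this is the counterpart, in the presence of the sampling–right state, of the sufficient–statistic fact behind Proposition~\ref{prop:cost}, and it follows from the posterior recursions \eqref{eq:kernel0}--\eqref{eq:kernel1} together with the dynamics \eqref{eq:N_evolve}. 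Hence the inner conditional cost is bounded below by $J_{k+1}^T(\pi_{k+1},N_{k+1})=V_{k+1}^T(\pi_{k+1},N_{k+1})$, with equality attained by the optimal continuation strategy supplied by the induction hypothesis; taking expectations, the minimal cost of continuing equals $c\pi_k+\mathbb{E}_{\pi_k}^\nu[V_{k+1}^T(\pi_{k+1},N_{k+1})]$.

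The last ingredient is the one–step optimization over $\mu_{k+1}$. Since $\nu_{k+1}$ is revealed before $\mu_{k+1}$ is chosen and $(\pi_{k+1},N_{k+1})$ is a deterministic function of $(\pi_k,N_k,\nu_{k+1},\mu_{k+1},X_{k+1})$, minimizing $\mathbb{E}_{\pi_k}^\nu[V_{k+1}^T(\pi_{k+1},N_{k+1})\mid\nu_{k+1},\mu_{k+1}]$ over the feasible $\mu_{k+1}\in\{0,1\}$ — only $\mu_{k+1}=0$ being feasible when $N_k+\nu_{k+1}=0$, a caveat implicit in the definition of $W_{k+1}^T$ — gives exactly $W_{k+1}^T(\pi_k,N_k,\nu_{k+1})$, and averaging over $\nu_{k+1}$ yields that the minimal continuation cost is $c\pi_k+\mathbb{E}^\nu[W_{k+1}^T(\pi_k,N_k,\nu_{k+1})]$. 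The observer picks the cheaper of the two branches, so $J_k^T(\pi,N)=\min\{1-\pi,\;c\pi+\mathbb{E}^\nu[W_{k+1}^T(\pi,N,\nu_{k+1})]\}=V_k^T(\pi,N)$, completing the induction. Reading off the minimizers then gives the announced $\mu_k^\ast$ (the $\argmin$ over $\mu_k\in\{0,1\}$ of the conditional expectation of $V_k^T$) and $\tau^\ast$ (stop at the first $k$ for which $1-\pi_k\le c\pi_k+\mathbb{E}^\nu[W_{k+1}^T(\pi_k,N_k,\nu_{k+1})]$); a measurable–selection argument guaranteeing that these minimizers can be realized by an adapted, causal strategy is trivial here because the action set $\{0,1\}$ and the horizon are finite.

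The step I expect to be the main obstacle is the Markov/sufficient–statistic reduction used in the inductive step: one must verify carefully that, after conditioning on $\mathcal{F}_{k+1}$, the residual problem is precisely another instance of the same optimization with initial data $(\pi_{k+1},N_{k+1})$ — equivalently, that $(\pi_k,N_k)$ is a controlled Markov state and that $\pi_{k+1}$ carries all information about $\Lambda$ relevant to the remaining cost. Once this is established, everything else is routine one–step dynamic–programming bookkeeping, and the infinite–horizon version will follow (in the subsequent development) by letting $T\to\infty$.
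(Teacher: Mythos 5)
Your proposal is correct and follows essentially the same route as the paper: backward induction on $k$, the split on $\{\tau=k\}$ versus $\{\tau\geq k+1\}$ with the tower property, reduction of the continuation cost to the $(k+1)$-horizon problem via the induction hypothesis, and the one-step optimization over $\mu_{k+1}$ conditioned on $\nu_{k+1}$ that produces $W_{k+1}^{T}$. The only cosmetic difference is that the paper establishes $J_k^T\geq V_k^T$ by an explicit chain of interchange inequalities ($\inf(a+b)\geq\inf a+\inf b$, $\inf\mathbb{E}[\cdot]\geq\mathbb{E}[\inf(\cdot)]$) and handles $J_k^T\leq V_k^T$ separately by achievability, whereas you package both directions into a single lower-bound-plus-attainment argument; the substance is identical.
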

\begin{proof}
This proof is provided in Appendix \ref{apd:DP}.
\end{proof}

\begin{rmk}
Using Theorem \ref{thm:DP}, we now give a heuristic explanation of the iterative functions $W_{k+1}^{T}$ and $V_{k}^{T}$.
In each time slot, as shown in Figure \ref{fig:sys_model}, the observer needs to make two decisions: the sampling decision $\mu_{k}$ and the terminal decision $\delta_{k}$. Both decisions affect the cost function, however these two decisions are based on different information. In particular, the observer decides whether to take an observation or not at time slot $k$ after she knows how many sampling rights has been collected at time slot $k$. Hence, $\mu_{k}$ is a function of $\nu_{k}$, $\pi_{k-1}$ and $N_{k-1}$. When $\mu_{k}$ is decided, the observer could determine the way that $\pi_{k}$ and $N_{k}$ evolve, and hence the decision $\delta_{k}$ is a function of $\pi_{k}$ and $N_{k}$. Actually, the iterative function $V_{k}^{T}$ is the cost function associated with $\delta_{k}$, and $W_{k}^{T}$ is that associated with $\mu_{k}$. At the end of time slot $k$, the observer could choose either to stop, which costs $1-\pi_{k}$, or to continue. Since $\mu_{k+1}$ is the next decision after $\delta_{k}$, the future cost in $V_{k}^{T}$ is $\mathbb{E}^{\nu}[W_{k+1}^{T}]$. 
On the other hand, since $\delta_{k+1}$ is the decision after $\mu_{k+1}$, hence the observer chooses $\mu_{k+1}$ based on the rule that the future cost is minimized, that is the conditional expectation of $V_{k+1}^{T}$ is minimized, which leads the expression of $W_{k+1}^{T}$.
\end{rmk}

In the following, we use a limit argument to extend the above conclusion to the infinite horizon problem. Since $V_{k}^{T}(\pi_{k}, N_{k}) \geq 0$ and
\begin{eqnarray}
V_{k}^{T+1}(\pi_{k}, N_{k}) \leq V_{k}^{T}(\pi_{k}, N_{k}), \no
\end{eqnarray}
which is true due to the fact that all strategies admissible for horizon $T$ are also admissible for horizon $T+1$. As the result, the limit of $V_{k}^{T}(\pi_{k}, N_{k})$ as $T\rightarrow \infty$ exists. Furthermore, as $\pi_{k}$ and $N_{k}$ are homogenous Markov chains, the form of the limit function is the same for different values of $k$, which we define as
\begin{eqnarray}
V(\pi_{k}, N_{k}) \triangleq \lim_{T \rightarrow \infty} V_{k}^{T}(\pi_{k}, N_{k}). \no
\end{eqnarray}
Similarly, we have
\begin{eqnarray}
W(\pi_{k}, N_{k}, \nu_{k+1}) \triangleq \lim_{T \rightarrow \infty} W_{k+1}^{T}(\pi_{k}, N_{k}, \nu_{k+1}). \no
\end{eqnarray}

By the monotone convergence theorem, the iterative functions can be written as
\begin{eqnarray}
&&\hspace{-10mm} W(\pi_{k}, N_{k}, \nu_{k+1}) \no\\
&=& \min\left\{\mathbb{E}_{\pi_{k}}^{\nu}[V(\pi_{k+1}, N_{k+1})|\nu_{k+1}, \mu_{k+1}=0], \right. \no\\
&& \left.\mathbb{E}_{\pi_{k}}^{\nu}[V(\pi_{k+1}, N_{k+1})|\nu_{k+1}, \mu_{k+1}=1]\right\},\no\\
&&\hspace{-10mm} V(\pi_{k}, N_{k}) \no\\
&=& \min\{1-\pi_{k}, c\pi_{k}+\mathbb{E}^{\nu}[W(\pi_{k}, N_{k},\nu_{k+1})]\}. \no
\end{eqnarray}
Hence, we have the following conclusion for the infinite horizon problem.
\begin{thm} \label{thm:opt_p2}
The optimal sampling strategy for (P2) is given as
\begin{eqnarray}
\mu^{*}_{k} = \argmin_{\mu_{k}\in \{0, 1\}} \mathbb{E}_{\pi_{k-1}}^{\nu}[V(\pi_{k}, N_{k})|\nu_{k}, \mu_{k}].
\end{eqnarray}
The optimal stopping rule is given as
\begin{eqnarray}
\tau^{*} = \inf\left\{ k \geq 0: 1-\pi_{k} \leq c\pi_{k} +\mathbb{E}^{\nu}[W(\pi_{k}, N_{k},\nu_{k+1})] \right\}.
\end{eqnarray}
\end{thm}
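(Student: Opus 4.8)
The machinery already in place — the pointwise monotone limits $V$ and $W$ and their fixed-point equations, obtained from the finite-horizon recursions of Theorem~\ref{thm:DP} by monotone convergence — reduces the statement to two things: that $V(\pi,N)=J(\pi,N)$, and that the displayed pair $(\mu^{*},\tau^{*})$ attains this common value. The plan is to sandwich $J$ between $V$ and itself. For the easy inequality $J\le V$, note that every strategy pair admissible for the finite-horizon problem with horizon $T$ is also admissible for (P2), so $J(\pi,N)\le J^{T}_{0}(\pi,N)=V^{T}_{0}(\pi,N)$ for all $T$ by Theorem~\ref{thm:DP}; letting $T\to\infty$ and invoking the definition of $V$ gives $J(\pi,N)\le V(\pi,N)$.

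For the reverse inequality, fix any admissible $(\tau,\mu)$, which we may assume has $U(\pi,N,\tau,\mu)<\infty$ (otherwise nothing is to prove), and consider
\[
Y_k \;=\; c\sum_{i=0}^{k-1}\pi_i + V(\pi_k,N_k).
\]
On $\{\tau>k\}$ the $V$-equation gives $V(\pi_k,N_k)\le c\pi_k+\mathbb{E}^{\nu}[W(\pi_k,N_k,\nu_{k+1})\mid\mathcal{F}_k]$; moreover $W(\pi_k,N_k,\nu_{k+1})$ is the minimum over $\mu_{k+1}\in\{0,1\}$ of $\mathbb{E}_{\pi_k}^{\nu}[V(\pi_{k+1},N_{k+1})\mid\nu_{k+1},\mu_{k+1}]$, hence no larger than the branch corresponding to the action actually prescribed by $\mu$. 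Combining these and averaging over $\nu_{k+1}$ yields $V(\pi_k,N_k)\le c\pi_k+\mathbb{E}[V(\pi_{k+1},N_{k+1})\mid\mathcal{F}_k]$ on $\{\tau>k\}$, i.e.\ $(Y_{k\wedge\tau})_{k\ge0}$ is a submartingale. Since $0\le V(\pi_k,N_k)\le 1-\pi_k\le1$ and $\sum_{i=0}^{(k\wedge\tau)-1}\pi_i$ increases to the integrable limit $\sum_{i=0}^{\tau-1}\pi_i$, an optional-stopping/Fatou argument gives $V(\pi,N)=Y_0\le\mathbb{E}_{\pi}^{\nu}[Y_\tau]\le\mathbb{E}_{\pi}^{\nu}[\,c\sum_{i=0}^{\tau-1}\pi_i+1-\pi_\tau\,]=U(\pi,N,\tau,\mu)$, the last step using $V\le1-\pi$. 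Taking the infimum over $(\tau,\mu)$ gives $V\le J$, so $V=J$.

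To see that $(\mu^{*},\tau^{*})$ is optimal, run the same computation with $\mu^{*}_k$ equal to the argmin defining $W$ and $\tau^{*}$ equal to the first $k$ with $1-\pi_k\le c\pi_k+\mathbb{E}^{\nu}[W(\pi_k,N_k,\nu_{k+1})]$: every inequality above becomes an equality, so $(Y_{k\wedge\tau^{*}})$ is a martingale and $V(\pi_{\tau^{*}},N_{\tau^{*}})=1-\pi_{\tau^{*}}$ on $\{\tau^{*}=k\}$. It remains to check $\tau^{*}<\infty$ a.s.; this follows because $(\pi_k)$ is a bounded submartingale converging a.s. to $P_{\pi}^{\nu}(\Lambda<\infty)=1$, so along the trajectory generated by $\mu^{*}$ the stopping cost $1-\pi_k\to0$ while the continuation cost stays $\ge c\pi_k\to c>0$, forcing the stopping set to be entered in finite time. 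Hence $U(\pi,N,\tau^{*},\mu^{*})=V(\pi,N)=J(\pi,N)$.

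The main obstacle is the submartingale step. One must treat the information structure carefully: $\mu_{k+1}$ is chosen only after $\nu_{k+1}$ is revealed, so the conditioning in the $W$-equation is really with respect to $\mathcal{F}_k\vee\sigma(\nu_{k+1})$, and the dominated-branch argument must be phrased accordingly before projecting back onto $\mathcal{F}_k$. The second delicate point is justifying optional stopping at the possibly unbounded time $\tau$, which is exactly where the reduction to $U<\infty$ and the uniform bound $V\le1$ are used. Apart from these points the argument parallels the finite-horizon proof of Theorem~\ref{thm:DP} in Appendix~\ref{apd:DP}.
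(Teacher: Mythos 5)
Your argument is correct, but it is genuinely more than what the paper does for this statement: the paper gives Theorem~\ref{thm:opt_p2} no separate proof at all, relying only on the in-text observation that $V_k^{T}$ decreases in $T$, that the limits $V$ and $W$ exist and satisfy the fixed-point equations by monotone convergence, and then asserting the infinite-horizon conclusion. That limit argument by itself only identifies $\lim_T J_0^T$, i.e.\ the infimum over strategies that stop by a deterministic horizon; it does not show this equals $J(\pi,N)$, the infimum over all admissible $(\tau,\mu)$, nor that the displayed pair attains it. Your two-sided sandwich closes exactly this gap: the easy direction $J\le V$ from admissibility of horizon-$T$ strategies, and the direction $V\le J$ via the submartingale property of $Y_k=c\sum_{i<k}\pi_i+V(\pi_k,N_k)$ under an arbitrary admissible strategy, with the optional-stopping passage justified by $0\le V\le 1-\pi\le 1$ and monotone convergence of the running cost. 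Your verification that $(\mu^*,\tau^*)$ turns the inequalities into equalities, together with the finiteness of $\tau^*$ (forced because $\pi_k\to 1$ a.s.\ while the continuation cost is bounded below by $c\pi_k$, so the threshold $1-\pi_k\le c\pi_k+\mathbb{E}^{\nu}[W]$ is eventually met), is also absent from the paper. Two small points you could make explicit: when $N_k+\nu_{k+1}=0$ the minimum defining $W$ should be restricted to the admissible action $\mu_{k+1}=0$ (an imprecision inherited from the paper's own definition), and in the martingale step the bound $\mathbb{E}[c\sum_{i<k\wedge\tau^*}\pi_i]\le V(\pi,N)\le 1$ is what guarantees $U(\pi,N,\tau^*,\mu^*)<\infty$ before you equate limits. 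Neither affects correctness; your proof is a complete verification-theorem argument where the paper offers only the heuristic limit.
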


\subsection{Asymptotically Optimal Solution}
The optimal solution for the stochastic sampling problem has a very complex structure. In this subsection, we propose a low complexity algorithm and show that it is asymptotically optimal when $\alpha \rightarrow 0$. The proposed algorithm is
\begin{eqnarray}
\tilde{\mu}_{k}^{*}= \left\{ \begin{array}{ll}
1 & \textrm{if  } N_{k-1} + \nu_{k} \geq 1 \\
0 & \textrm{if  } N_{k-1} + \nu_{k} = 0
\end{array} \right. ,
\end{eqnarray}
and
\begin{eqnarray}
\tilde{\tau}^{*} = \inf \{ k \geq 0 | \pi_{k} \geq 1-\alpha \}.
\end{eqnarray}
That is, the observer adopts a greedy sampling strategy in which she takes observations as long as she has sampling rights left, and she declares that the change has occurred when the posterior probability exceeds a pre-designed threshold. In the following, we show the asymptotic optimality of this algorithm in two steps. In the first step, we derive a lower bound on the average detection delay for any sampling strategy and any stopping rule. In the second step, we show that $(\tilde{\mu}^{*}, \tilde{\tau}^{*})$ achieves this lower bound asymptotically, which then implies that $(\tilde{\mu}^{*}, \tilde{\tau}^{*})$ is asymptotically optimal. To proceed, we define the likelihood ratio of the observation sequence $\{Z_{k}\}$ as
\begin{eqnarray}
L(Z_{k}) = \left\{ \begin{array}{cc}
                     \frac{f_{1}(X_{k})}{f_{0}(X_{k})}, & \text{if } \mu_{k}=1 \\
                     1, & \text{if } \mu_{k}=0
                   \end{array} \right. , \label{eq:LR}
\end{eqnarray}
and denote $l(Z_{k}) = \log L(Z_{k})$ as the log likelihood ratio. The lower bound on the detection delay is presented in the following theorem:

\begin{thm} \label{thm:lowerbound}
As $\alpha \rightarrow 0$,
\begin{eqnarray}
&&\hspace{-15mm} \inf_{\mu \in \mathcal{U}, \tau \in \mathcal{T}} \mathrm{ADD}(\pi, N, \tau, \mu) \no\\
&& \geq \frac{|\log \alpha|}{\tilde{p}D(f_{1}||f_{0})+|\log(1-\rho)|}(1+o(1)),
\end{eqnarray}
with $\tilde{p} \triangleq \mathbb{E}^{\nu}[\tilde{\mu}^{*}]$.
\end{thm}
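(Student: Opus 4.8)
The plan is to adapt the asymptotic Bayesian lower-bound method of \cite{Tartakovsky:TPIA:04}, the essential new ingredient being that \emph{no} admissible policy can accumulate post-change log-likelihood evidence faster than the greedy policy, i.e.\ at rate $\tilde{p}D(f_{1}||f_{0})+|\log(1-\rho)|$. Fix $\epsilon\in(0,1)$ and set $L_{\alpha}=(1-\epsilon)|\log\alpha|/(\tilde{p}D(f_{1}||f_{0})+|\log(1-\rho)|)$. By Markov's inequality, every admissible $(\mu,\tau)$ with $\mathrm{PFA}\le\alpha$ obeys $\mathrm{ADD}(\pi,N,\tau,\mu)\ge L_{\alpha}P_{\pi}^{\nu}(\tau-\Lambda\ge L_{\alpha})\ge L_{\alpha}(1-\alpha-P_{\pi}^{\nu}(\Lambda\le\tau<\Lambda+L_{\alpha}))$, so the theorem follows, upon letting $\epsilon\downarrow0$, once we show $P_{\pi}^{\nu}(\Lambda\le\tau<\Lambda+L_{\alpha})\to0$ as $\alpha\to0$ \emph{uniformly} over all such $(\mu,\tau)$. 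Because $1-\pi_{\tau}=P_{\pi}^{\nu}(\Lambda>\tau\mid\mathcal{F}_{\tau})$ we have $\mathbb{E}_{\pi}^{\nu}[1-\pi_{\tau}]=\mathrm{PFA}\le\alpha$, so a second use of Markov's inequality gives $P_{\pi}^{\nu}(1-\pi_{\tau}>\alpha|\log\alpha|^{2})\le|\log\alpha|^{-2}\to0$; thus we may work on the event $\{\pi_{\tau}\ge1-\alpha|\log\alpha|^{2}\}$, on which the posterior odds satisfy $\log(\pi_{\tau}/(1-\pi_{\tau}))\ge|\log\alpha|(1-o(1))$.

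The core estimate bounds how fast the posterior odds can grow after the change. From \eqref{eq:kernel0}--\eqref{eq:kernel1} (equivalently from the Bayes expansion $\pi_{n}/(1-\pi_{n})=P(\Lambda>n)^{-1}\sum_{j=0}^{n}P(\Lambda=j)\prod_{k=j}^{n}L(Z_{k})$, with $L(Z_{k})$ as in \eqref{eq:LR}), one obtains, on $\{\Lambda=\lambda\}$ and for $n\ge\lambda$, a bound of the form $\log(\pi_{n}/(1-\pi_{n}))\le\log(\pi_{\lambda-1}/(1-\pi_{\lambda-1}))+O(\log n)+\max_{\lambda\le j\le n}[(n-j)|\log(1-\rho)|+\sum_{k=j}^{n}\mu_{k}l(Z_{k})]$, where $l(Z_{k})=\log(f_{1}(X_{k})/f_{0}(X_{k}))$ on $\{\mu_{k}=1\}$ and $0$ otherwise. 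Under $P_{\lambda}^{\nu}$ the increments of $n\mapsto(n-\lambda)|\log(1-\rho)|+\sum_{k=\lambda}^{n}\mu_{k}l(Z_{k})$ have conditional mean $|\log(1-\rho)|+\mu_{n}D(f_{1}||f_{0})$, so this process equals $|\log(1-\rho)|(n-\lambda)+D(f_{1}||f_{0})\,|\{\lambda\le k\le n:\mu_{k}=1\}|+M_{n}$, with $M_{n}=\sum_{k=\lambda}^{n}\mu_{k}(l(Z_{k})-D(f_{1}||f_{0}))$ a martingale whose maximum over a window of length $L_{\alpha}$ is $o(L_{\alpha})$ with probability $1-o(1)$, uniformly in $\lambda$ and in the policy, by Doob's $L^{2}$ inequality and a mild second-moment assumption on $l$. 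Hence, on $\{\Lambda=\lambda,\ \tau<\lambda+L_{\alpha}\}$ and with probability $1-o(1)$, the displayed maximum is at most $(\bar{\mu}_{\lambda}D(f_{1}||f_{0})+|\log(1-\rho)|)L_{\alpha}(1+o(1))$, where $\bar{\mu}_{\lambda}$ is the policy's time-averaged sampling rate over $[\lambda,\lambda+L_{\alpha}]$.

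The decisive lemma is that $\bar{\mu}_{\lambda}\le\tilde{p}(1+o(1))$ with probability $1-o(1)$, uniformly over all admissible policies and all $\lambda$. This is a combinatorial fact about the battery recursion \eqref{eq:N_evolve}, independent of the observations: the conservation identity $(\text{samples spent})+N_{k}+(\text{overflow})=N+\sum_{i\le k}\nu_{i}$, together with a coupling of $\{N_{k}\}$ across policies, shows that among all admissible policies started from a common state the greedy policy $\tilde{\mu}^{*}$ keeps the least battery and the least overflow, hence takes the most samples by every time; and starting from a fuller battery (at most $C$) changes the cumulative sample count by at most $C$. Therefore the number of samples any policy takes in $[\lambda,\lambda+L_{\alpha}]$ is at most $C$ plus the number $\tilde{\mu}^{*}$ takes there from an empty battery; since under $\tilde{\mu}^{*}$ the chain $\{N_{k}\}$ is finite-state and ergodic with stationary sampling probability exactly $\tilde{p}=\mathbb{E}^{\nu}[\tilde{\mu}^{*}]$, the ergodic theorem applied to the i.i.d.\ window $\nu_{\lambda},\dots,\nu_{\lambda+L_{\alpha}}$ makes this count $\tilde{p}L_{\alpha}(1+o(1))$ while $C/L_{\alpha}\to0$. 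Feeding this back and using $(\tilde{p}D(f_{1}||f_{0})+|\log(1-\rho)|)L_{\alpha}=(1-\epsilon)|\log\alpha|$, the event $\{\Lambda=\lambda,\ \tau<\Lambda+L_{\alpha},\ \pi_{\tau}\ge1-\alpha|\log\alpha|^{2}\}$ forces, for small $\alpha$, $\log(\pi_{\lambda-1}/(1-\pi_{\lambda-1}))\ge(\epsilon/2)|\log\alpha|$ -- i.e.\ the posterior is already anomalously close to $1$ \emph{before} the change. But under $P_{\lambda}^{\nu}$ no change has yet happened, and the elementary estimate $P_{\lambda}^{\nu}(\pi_{\lambda-1}\ge1-\delta)\le\delta/((1-\pi)(1-\rho)^{\lambda-1})$ (from $\mathbb{E}_{\pi}[(1-\pi_{\lambda-1})\mathbf{1}_{A}]=P_{\pi}(\{\Lambda>\lambda-1\}\cap A)$ for $A\in\mathcal{F}_{\lambda-1}$) controls its probability. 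Multiplying by $P(\Lambda=\lambda)=(1-\pi)\rho(1-\rho)^{\lambda-1}$, summing in $\lambda$, truncating the tail $\{\Lambda\ge\lambda^{*}\}$ with $\lambda^{*}\asymp|\log\alpha|$ (whose probability is $\asymp\alpha^{\epsilon/2}\to0$), and adding the uniformly vanishing probabilities of the martingale and ergodic exceptional events, yields $P_{\pi}^{\nu}(\Lambda\le\tau<\Lambda+L_{\alpha})\to0$ uniformly in $(\mu,\tau)$; letting $\epsilon\downarrow0$ completes the proof.

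I expect the main obstacle to be the sampling-rate domination $\bar{\mu}_{\lambda}\le\tilde{p}(1+o(1))$: unlike the classical setting, an adaptive observer may hoard sampling rights and then sample in a burst, so a naive argument yields only the weaker rate $\min(1,\mathbb{E}^{\nu}[\nu_{1}])D(f_{1}||f_{0})+|\log(1-\rho)|$; the sharp constant $\tilde{p}$ requires the pathwise comparison with the greedy battery chain and its ergodic theorem, with the finiteness of the capacity $C$ being precisely what makes the burst advantage asymptotically negligible. A secondary difficulty is that all the $o(\cdot)$ bounds -- the martingale estimate, the ergodic convergence, and the pre-change posterior estimate -- must hold uniformly in $\mu$ and in $\lambda$, since the claim concerns the infimum over policies; for the same reason the lower bound of \cite{Tartakovsky:TPIA:04} cannot be invoked as a black box, its observation model being fixed rather than policy-dependent.
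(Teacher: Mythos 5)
Your proposal is correct and rests on the same essential insight as the paper's proof: the only genuinely new ingredient relative to the classical Bayesian lower bound is that no admissible policy can accumulate post-change log-likelihood at a rate exceeding $\tilde{p}D(f_1\|f_0)$, because the battery recursion \eqref{eq:N_evolve} forces every policy's cumulative sample count over any window to be dominated (up to the additive constant $C$) by that of the greedy policy, whose long-run sampling rate is $\tilde{p}$. Where you differ is in what you do with that estimate. The paper packages it as the in-probability condition \eqref{eq:necessary_condition} (Proposition \ref{prop:convergence}, whose pathwise comparison argument is deferred to Proposition C.1 of \cite{Jun:TSP:13}) and then invokes Theorem 1 of \cite{Tartakovsky:TPIA:04} (restated as Lemma \ref{thm:Tartakovsky}) as a black box with $q=\tilde{p}D(f_1\|f_0)$ and $r=1$; you instead re-derive the entire change-of-measure argument --- Markov's inequality, the posterior-odds expansion, the martingale maximal estimate, and the pre-change bound $P_{\lambda}^{\nu}(\pi_{\lambda-1}\geq 1-\delta)\leq \delta/((1-\pi)(1-\rho)^{\lambda-1})$. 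Your stated reason for refusing the black-box citation --- that the observation model in \cite{Tartakovsky:TPIA:04} is fixed while here it depends on $\mu$, so all $o(1)$ terms must be shown uniform over policies --- is a legitimate gap-closing observation: the paper's invocation is valid precisely because the probability bound in Proposition \ref{prop:convergence} is uniform over admissible policies (it descends from the policy-uniform greedy domination), but the paper leaves this uniformity implicit. What the paper's route buys is brevity and weaker hypotheses (only the strong law along the sampled time slots, hence only $D(f_1\|f_0)<\infty$, is needed); your route buys self-containedness and explicit uniformity, at the price of the extra second-moment assumption on $l$ required by Doob's $L^{2}$ inequality --- an assumption you could remove by applying the strong law of large numbers along the random subsequence of sampled indices, as the paper effectively does.
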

\begin{proof}
This proof is provided in Appendix \ref{apd:lowerbound}.
\end{proof}

To study the asymptotic optimality of $(\tilde{\mu}^{*}, \tilde{\tau}^{*})$, we need to impose some additional assumptions on $f_{1}$ and $f_{0}$. Specifically, for any $\varepsilon > 0$, we define the random variable
\begin{eqnarray}
T_{\varepsilon}^{(\lambda)} \triangleq \sup\left\{ n \geq 1: \Big|\frac{1}{n}\sum_{i=\lambda}^{\lambda+n-1} l(Z_{i}) - \tilde{p}D(f_{1}||f_{0}) \Big| > \varepsilon \right\}, \no
\end{eqnarray}
in which the supremum of an empty set is defined as $0$. Under the sampling strategy $\tilde{\mu}^{*}$, we make additional assumptions that
\begin{eqnarray}
\mathbb{E}_{\lambda}^{\nu}\left[ T_{\varepsilon}^{(\lambda)} \right] < \infty \quad \forall \varepsilon > 0 \text{ and } \forall \lambda \geq 1 \label{eq:quickly_converge}
\end{eqnarray}
and
\begin{eqnarray}
\mathbb{E}_{\pi}^{\nu}\left[ T_{\varepsilon}^{(\Lambda)} \right] = \sum_{\lambda=1}^{\infty} \mathbb{E}_{\lambda}^{\nu}\left[ T_{\varepsilon}^{(\lambda)} \right]P(\Lambda = \lambda) < \infty, \quad \forall \varepsilon > 0. \label{eq:averege_quickly}
\end{eqnarray}
With these assumptions, we have following result:

\begin{thm} \label{thm:asym}
If \eqref{eq:quickly_converge} and \eqref{eq:averege_quickly} hold, then $(\tilde{\mu}^{*}, \tilde{\tau}^{*})$ is asymptotically optimal as $\alpha \rightarrow 0$. Specifically,
\begin{eqnarray}
&&\hspace{-15mm}\mathrm{ADD}(\pi, N, \tilde{\tau}^{*}, \tilde{\mu}^{*}) \no\\
&&= \frac{|\log \alpha|}{\tilde{p}D(f_{1}||f_{0})+|\log(1-\rho)|}(1+o(1)).
\end{eqnarray}
\end{thm}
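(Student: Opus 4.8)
The plan is to combine the lower bound already furnished by Theorem~\ref{thm:lowerbound} with a matching asymptotic upper bound on $\mathrm{ADD}(\pi,N,\tilde{\tau}^{*},\tilde{\mu}^{*})$. First I would check that $(\tilde{\mu}^{*},\tilde{\tau}^{*})$ is admissible for (P1): on $\{\tilde{\tau}^{*}<\infty\}$ we have $\pi_{\tilde{\tau}^{*}}\ge 1-\alpha$, and since $\Lambda<\infty$ a.s.,
\[
\mathrm{PFA}(\pi,N,\tilde{\tau}^{*},\tilde{\mu}^{*})=\mathbb{E}_{\pi}^{\nu}\big[P_{\pi}^{\nu}(\Lambda>\tilde{\tau}^{*}\mid\mathcal{F}_{\tilde{\tau}^{*}})\big]=\mathbb{E}_{\pi}^{\nu}\big[(1-\pi_{\tilde{\tau}^{*}})\mathbf{1}_{\{\tilde{\tau}^{*}<\infty\}}\big]\le\alpha .
\]
Hence $\mathrm{ADD}(\pi,N,\tilde{\tau}^{*},\tilde{\mu}^{*})$ is no smaller than the infimum in Theorem~\ref{thm:lowerbound}, so the delay is at least $\tfrac{|\log\alpha|}{\tilde{p}D(f_{1}\|f_{0})+|\log(1-\rho)|}(1+o(1))$, and only the reverse inequality remains.

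For the upper bound I would pass to the log-posterior-odds $\psi_{k}\triangleq\log\tfrac{\pi_{k}}{1-\pi_{k}}$. A direct computation from \eqref{eq:kernel0}--\eqref{eq:kernel1} gives the one-step inequality $\psi_{k}\ge\psi_{k-1}+|\log(1-\rho)|+l(Z_{k})$ for every $k$ (with $l(Z_{k})=0$ when $\mu_{k}=0$ by \eqref{eq:LR}), and iterating it yields, for $\lambda\le k$,
\[
\psi_{k}\ \ge\ \psi_{\lambda-1}+\sum_{i=\lambda}^{k}\big(|\log(1-\rho)|+l(Z_{i})\big).
\]
Put $b_{\alpha}\triangleq\log\tfrac{1-\alpha}{\alpha}=|\log\alpha|(1+o(1))$ and $q\triangleq\tilde{p}D(f_{1}\|f_{0})+|\log(1-\rho)|>0$; on $\{\tilde{\tau}^{*}\ge\lambda\}$ one has $\tilde{\tau}^{*}=\inf\{k\ge\lambda:\psi_{k}\ge b_{\alpha}\}$. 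Under $P_{\lambda}^{\nu}$ the post-change observations are i.i.d.\ $f_{1}$ and the reflected chain $\{N_{k}\}$ is ergodic, so assumption \eqref{eq:quickly_converge} gives, for fixed $\varepsilon\in(0,q)$ and all $n>T_{\varepsilon}^{(\lambda)}$, $\sum_{i=\lambda}^{\lambda+n-1}(|\log(1-\rho)|+l(Z_{i}))\ge n(q-\varepsilon)$. Feeding this into the display, $\psi_{\lambda+n-1}\ge\psi_{\lambda-1}+n(q-\varepsilon)$ once $n>T_{\varepsilon}^{(\lambda)}$, which exceeds $b_{\alpha}$ as soon as $n\ge(b_{\alpha}-\psi_{\lambda-1})/(q-\varepsilon)$; therefore, on $\{\tilde{\tau}^{*}\ge\lambda\}$,
\[
(\tilde{\tau}^{*}-\lambda)\mathbf{1}_{\{\tilde{\tau}^{*}\ge\lambda\}}\ \le\ \frac{b_{\alpha}+(\psi_{\lambda-1})^{-}}{q-\varepsilon}+T_{\varepsilon}^{(\lambda)}+1 .
\]

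Taking expectations and summing over $\lambda$ via $\mathrm{ADD}=\sum_{\lambda\ge 0}P_{\pi}^{\nu}(\Lambda=\lambda)\,\mathbb{E}_{\lambda}^{\nu}[(\tilde{\tau}^{*}-\lambda)\mathbf{1}_{\{\tilde{\tau}^{*}\ge\lambda\}}]$,
\[
\mathrm{ADD}(\pi,N,\tilde{\tau}^{*},\tilde{\mu}^{*})\ \le\ \frac{b_{\alpha}}{q-\varepsilon}+\frac{\mathbb{E}_{\pi}^{\nu}[(\psi_{\Lambda-1})^{-}]}{q-\varepsilon}+\mathbb{E}_{\pi}^{\nu}[T_{\varepsilon}^{(\Lambda)}]+1 .
\]
Here $\mathbb{E}_{\pi}^{\nu}[T_{\varepsilon}^{(\Lambda)}]<\infty$ by \eqref{eq:averege_quickly}, and $\mathbb{E}_{\pi}^{\nu}[(\psi_{\Lambda-1})^{-}]<\infty$ because iterating the one-step inequality backwards gives $(\psi_{\Lambda-1})^{-}\le(\psi_{0})^{-}+\sum_{i=1}^{\Lambda-1}(l(Z_{i}))^{-}$, and since $X_{i}\sim f_{0}$ is independent of $\Lambda$ for $i<\Lambda$, this has expectation at most $(\psi_{0})^{-}+\mathbb{E}_{\pi}[\Lambda]\,\mathbb{E}_{f_{0}}[(\log(f_{1}/f_{0}))^{-}]<\infty$, using $\mathbb{E}_{\pi}[\Lambda]<\infty$ for the geometric prior and finiteness of the log-likelihood-ratio moments. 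Both correction terms are $O(1)=o(|\log\alpha|)$, so $\mathrm{ADD}\le\tfrac{|\log\alpha|}{q-\varepsilon}(1+o(1))$; letting $\varepsilon\downarrow 0$ gives $\mathrm{ADD}\le\tfrac{|\log\alpha|}{q}(1+o(1))$, which together with the lower bound yields the claimed equality.

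The main obstacle is the middle step. Even granting that assumption \eqref{eq:quickly_converge} is tailored to supply the last-exit control, one still has to justify that $\tilde{p}D(f_{1}\|f_{0})+|\log(1-\rho)|$ is the correct post-change drift of $\psi_{k}$: this requires showing that under the greedy rule $\tilde{\mu}^{*}$ the sampling indicators $\mathbf{1}_{\{\mu_{k}=1\}}$, which are functions of the reflected random walk $\{N_{k}\}$ driven by the i.i.d.\ replenishments, have long-run frequency $\tilde{p}=\mathbb{E}^{\nu}[\tilde{\mu}^{*}]$, and that this ergodic average decouples from the i.i.d.\ post-change likelihood ratios. The second delicate point is the uniform-in-$\alpha$ control of the two ``lower-order'' terms---the initial deficit $\mathbb{E}_{\pi}^{\nu}[(\psi_{\Lambda-1})^{-}]$ and the last-exit correction $\mathbb{E}_{\pi}^{\nu}[T_{\varepsilon}^{(\Lambda)}]$---so that they genuinely remain $o(|\log\alpha|)$; the remainder is Wald/renewal-type bookkeeping.
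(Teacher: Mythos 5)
Your proposal is correct and follows essentially the same route as the paper's Appendix G: lower bound from Theorem \ref{thm:lowerbound}, and for the upper bound the log-posterior-odds statistic $\log\frac{\pi_k}{1-\pi_k}$ bounded below by $\sum_i l(Z_i)+k|\log(1-\rho)|$, with the last-exit times from \eqref{eq:quickly_converge}--\eqref{eq:averege_quickly} controlling the overshoot and the pre-change term shown to be $O(1)$. The only cosmetic difference is that the paper packages the lower-bounding random walk as an auxiliary stopping time $\tau_s\ge\tilde{\tau}^{*}$ and bounds $\mathbb{E}[\tau_s-\Lambda\mid\tau_s\ge\Lambda]$, whereas you work directly with the pathwise inequality on $\psi_k$; the substance is identical.
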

\begin{proof}
This proof is provided in Appendix \ref{apd:asym}.
\end{proof}

\begin{rmk}
More general assumptions corresponding to \eqref{eq:quickly_converge} and \eqref{eq:averege_quickly} are termed as ``$r$-quick convergence'' and ``average-$r$-quick convergence''\cite{Tartakovsky:TPIA:04}, respectively. In particular, \eqref{eq:quickly_converge} and \eqref{eq:averege_quickly} are special cases for $r=1$. The ``$r$-quick convergence'' was originally introduced in \cite{Lai:AnP:76} and has been used previously in \cite{Tartakovsky:SISP:98, Dragalin:TIT:99} to show the asymptotic optimality of the sequential multi-hypothesis test. The ``average-$r$-quick convergence'' was introduced in \cite{Tartakovsky:TPIA:04} to show asymptotic optimality of the Shiryaev-Roberts (SR) procedure in the Bayesian quickest change-point problem.
\end{rmk}

\begin{rmk}
The above theorems indicate that $N_{0}$ does not affect the asymptotic optimality. Since the detection delay goes
to infinity as $\alpha \rightarrow 0$, a finite initial $N_{0}$, which could contribute only a finite number of observations, does not reduce the average detection delay significantly. However, the sampling right capacity $C$ could affect the average detection delay since $\tilde{p}$ is a function of $C$ and $\nu$.
\end{rmk}

\begin{rmk}
Since there is no penalty on the observation cost before the change-point, one may expect the observer to take observations as early as possible for the quickest detection purpose, and hence expect the greedy sampling strategy to be exactly optimal. However, taking observations too aggressively before the change-point will affect how many sampling rights the observer can use after the change-point, although there is no penalty on the observations cost before the change-point.  Theorem \ref{thm:opt_p2} shows that the optimal sampling strategy should be a function of $\pi_{k}$, $N_{k}$ and $\nu_{k}$. Intuitively, an observer will save the sampling rights for future use when she has little energy left ($N_{k}$ is small) or when she is pretty sure that the change-point has not occurred yet ($\pi_{k}$ is small). To use the greedy sampling at the very beginning may reduce the observer's sampling rights at the time when the change occurs, hence increase the detection delay. Therefore, the greedy sampling strategy is only \emph{first order asymptotically optimal} but not exactly optimal.
\end{rmk}

\begin{rmk}
In our recent work \cite{Jun:TSP:13}, we also show that the greedy sampling strategy is asymptotically optimal for the non-Bayesian quickest change-point detection problem with a stochastic energy constraint. Here, we provide a high-level explanation why the greedy sampling strategy performs well for both Bayesian and non-Bayesian case. In asymptotic analysis of both cases (either PFA goes to zero or the average run length to false alarm goes to infinity), the detection delay goes to infinity, hence the observer needs infinitely many sample rights after the change-point. These sample rights mainly come from the replenishing procedure $\nu_{k}$. After the change-point, the greedy sampling strategy is the most efficient way to consume the sampling rights collected by the observer. Before the change-point, the greedy sampling might not be the best strategy, but the penalty incurred by this sub-optimality in terms of the detection delay is at most $C$ (the finite sampling right capacity of the observer), which is negligible when the detection delay goes to infinity. 
\end{rmk}

\section{Numerical Simulation} \label{sec:simulation}
In this section, we give some numerical examples to illustrate the analytical results of the previous sections. In these numerical examples, we assume that the pre-change distribution $f_{0}$ is Gaussian with mean 0 and variance $\sigma^2$. The post-change distribution $f_{1}$ is Gaussian distribution with mean 0 and variance $P+\sigma^2$. In this case, the KL divergence is $D(f_{1} || f_{0}) = \frac{1}{2}\left[ \log \frac{1}{1+P/\sigma^2} + \frac{P}{\sigma^{2}}\right]$. And we denote $SNR = 10\log(P/\sigma^2)$.

The first set of simulations are related to the limited sampling problem. In the first scenario, we illustrate the relationship between ADD and PFA with respect to $N$. In this simulation, we take $\pi_{0} = 0$, $\rho = 0.1$ and $SNR = 0 \mathrm{dB}$, from which we know that $D(f_{1}||f_{0}) \approx 0.15$ and $|\log (1-\rho)| \approx 0.11$ in this case.
The simulation results are shown in Figure \ref{fig:ADD_vs_Pfa_1}. In this figure, the blue line with squares is the simulation result for $N=30$, the green line with stars and the red line with circles are the results for $N=15$ and $N=8$, respectively. The black dash line is the performance of the classic Bayesian problem, which serves as a lower bound for the performance of our problem. The black dot dash line is the performance of the uniform sampling case with sampling interval $\varsigma = 11$ (One can verify this value by putting $\alpha = 10^{-5}$ and $N=8$ into \eqref{eq:N_large}), which serves as an upper-bound for the performance of our problem. As we can see, these three lines lie between the upper bound and the lower bound. Furthermore, the more sampling rights the observer has, the shorter detection delay the observer can achieve, and the closer the performance is to the lower bound.

\begin{figure}[thb]
\centering
\includegraphics[width=0.45 \textwidth]{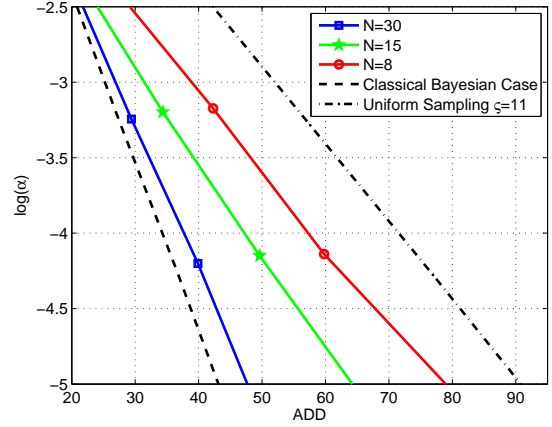}
\caption{PFA v.s. ADD under $SNR = 0\mathrm{dB}$ and $\rho=0.1$ }
\label{fig:ADD_vs_Pfa_1}
\end{figure}

In the second scenario, we discuss the relationship between ADD and PFA with respect to different $\rho$. In this simulation, we set $\pi_{0}=0$, $N=8$ and $SNR = 0\mathrm{dB}$. The simulation results are shown in Figure \ref{fig:ADD_vs_Pfa_3}. In this figure, the red line with circles is the performance with $\rho=0.2$, the green line with stars and the blue line with squares are the performances with $\rho = 0.5$ and $\rho = 0.8$, respectively. The three black dash lines from the top to the bottom are the lower bounds obtained by the classic Bayesian case with $\rho = 0.2$, $\rho = 0.5$ and $\rho = 0.8$, respectively. From this figure we can see that, as $\rho$ increases, the distance between the performance of our scheme and the lower bound is reduced. For the case $\rho = 0.8$, the performance of $N=8$ is almost the same as that of the lower bound, which verifies our analysis that when $\rho$ is large, the performance of limited sampling right problem is close to that of the classic one.

\begin{figure}[thb]
\centering
\includegraphics[width=0.45 \textwidth]{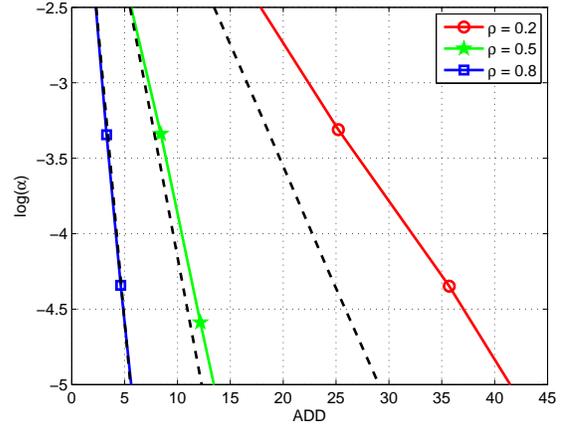}
\caption{PFA v.s. ADD under $SNR = 0\mathrm{dB}$ and $N =8$}
\label{fig:ADD_vs_Pfa_3}
\end{figure}

In the third scenario, we consider the case when $f_{0}$ and $f_{1}$ are close to each other. In the simulation, we set the $SNR = -5\mathrm{dB}$ and $\rho = 0.4$. One can verify that $D(f_{1}|| f_{0}) = 0.02$, which is only about $4\%$ of the value $|\log (1-\rho)|$. In this simulation, we set $N=15$ and $\varsigma =2$ to achieve a false alarm probability $10^{-5}$. The simulation results are shown in Figure \ref{fig:ADD_vs_Pfa_2}. As we can see, the distance between the upper bound, which is the black dot dash line obtained by the uniform sampling with $\varsigma = 2$, and the lower bound, which is the black dash line obtained by the classic Bayesian case, is quite small, and therefore the performance of the limited sampling right problem (the blue line with squares) is quite close to the lower bound.

\begin{figure}[thb]
\centering
\includegraphics[width=0.45 \textwidth]{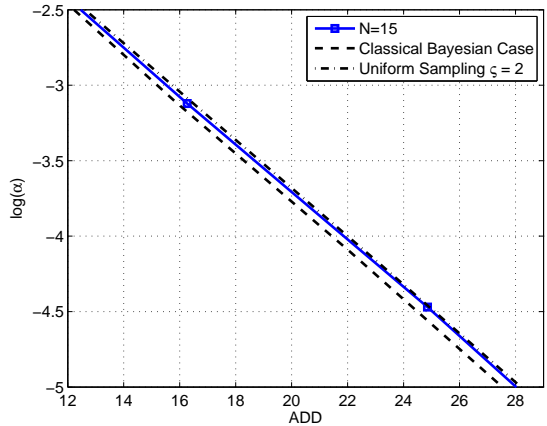}
\caption{PFA v.s. ADD under $SNR = -5\mathrm{dB}$ and $\rho =0.4$}
\label{fig:ADD_vs_Pfa_2}
\end{figure}

In the last simulation, we examine the asymptotic optimality of $(\tilde{\mu}^{*}, \tilde{\tau}^{*})$ for the stochastic sampling right problem. In the simulation, we set $C=3$, and we assume that the amount of sampling right is taken from the set $\mathcal{V} = \{0, 1, \ldots, 4\}$. In this case, the probability transition matrix of the Markov chain $N_{k}$ under $\tilde{\mu}^{*}$ is given as
\begin{eqnarray}
\mathbf{P} = \left[ \begin{array}{c c c c}
p_{0}+p_{1}, &p_{2}, &p_{3}, &p_{4} \\
p_{0}, &p_{1}, &p_{2}, &p_{3}+p_{4} \\
0, &p_{0}, &p_{1}, &\sum_{i=2}^{4} p_{i} \\
0, &0, &p_{0}, &\sum_{i=1}^{4} p_{i} \end{array} \right]. \no
\end{eqnarray}
In the simulation, we set $p_{0}=0.85$, $p_{1}=0.1$, $p_{2}=0.03$, $p_{3}=0.01$, $p_{4}=0.01$, then the stationary distribution is $\tilde{\mathbf{w}}=[0.7988, 0.0988, 0.0624, 0.0390]^{T}$ and $\tilde{p} = 1-p_{0}\tilde{w}_{0} = 0.3610$.
Furthermore, we set $\sigma^{2}=1$ and $SNR = 5\mathrm{dB}$. The simulation result is shown in Figure \ref{fig:general_asym}. In this figure the red line with squares is the performance of the proposed strategy $(\tilde{\tau}^{*}, \tilde{\mu}^{*})$, and the black dash line is calculated by $|\log \alpha|/(\tilde{p}D(f_{1}||f_{0})+|\log(1-\rho)|)$. As we can see, along all the scales, these two curves are parallel to each other, which confirms that the proposed strategy, $(\tilde{\tau}^{*}, \tilde{\mu}^{*})$, is asymptotically optimal as $\alpha \rightarrow 0$ since the constant difference can be ignored when the detection delay goes to infinity.

\begin{figure}[thb]
\centering
\includegraphics[width=0.45 \textwidth]{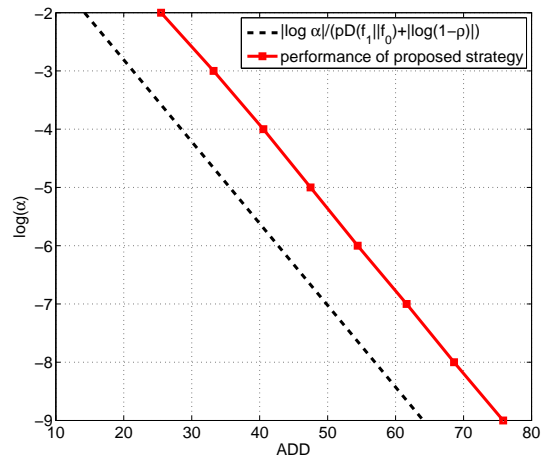}
\caption{PFA v.s. ADD under strategy $(\tilde{\tau}^{*}, \tilde{\mu}^{*})$}
\label{fig:general_asym}
\end{figure}

\section{Conclusion} \label{sec:conclusion}
In this paper, we have analyzed the Bayesian quickest change detection problem with sampling right constraints. Two types of constraints have been considered. The first one is a limited sampling right constraint. We have shown that the cost function of the $N$ sampling right problem can be characterized by a set of iterative functions, each of them could be used for determining the next sampling time or the stopping time. The second constraint is a stochastic sampling right constraint. Under this constraint, we have shown that the greedy sampling strategy coupled with a threshold stopping rule is first order asymptotically optimal as $\alpha \rightarrow 0$.

In terms of future work, it will be interesting to design a low complexity algorithms for the limited sampling right problem. It will also be interesting to develop higher order asymptotically optimal solutions for the stochastic sampling right problem. We will also extend the current work to the distributed sensor network setting.

\appendices

\section{Proof of Lemma \ref{lem:ext}} \label{apd:ext}
Let $\mu=(t_1,\cdots,t_{\eta})$ be a sampling strategy and $\tau = t_{s}$ be a stopping time such $t_{s} > t_{\eta}$ and $\eta < N$. Notice that $t_1,\cdots,t_\eta$ are time instances at which observations are taken, and $t_{s}$ is the time instance at which no sample is taken but the observer announces that a change has occurred. Since $\eta < N$, meaning that there is at least one sampling right left, we construct another strategy $\tilde{\mu}=(t_1,\cdots,t_{\eta}, t_{s})$ and $\tilde{\tau}=t_{s}+m^*$, in which we will take another observation at time $t_{s}$ and then claim that a change has occurred at time $t_{s}+m^{*}$. Here $m^{*}$ is chosen as
\begin{eqnarray}
m^{*} = \argmin_{m \geq 0} H(\pi_{t_{s}},m), \no
\end{eqnarray}
in which
\begin{eqnarray}
H(\pi,m) \triangleq \mathbb{E}_{\pi}\left[c\sum\limits_{k=0}^{m-1}\pi_{k}+1-\pi_m\right] \no
\end{eqnarray}
with
\begin{eqnarray}
\pi_{0} &=& \pi, \no\\
\pi_{k} &=& \pi+\sum\limits_{i=1}^k(1-\pi)\rho(1-\rho)^{i-1}\no\\
&=&\pi+(1-\pi)[1-(1-\rho)^k],\quad k= 1,\ldots m. \no
\end{eqnarray}
Then, we have
\begin{eqnarray}
U(\pi, N, \tilde{\tau}, \tilde{\mu})&=&\mathbb{E}_{\pi}\left[c\sum_{k=0}^{t_{s}+m^{*}-1}\pi_k+1-\pi_{t_{s}+m^{*}}\right]\no\\
                    &=&\mathbb{E}_{\pi}\left[c\sum_{k=0}^{t_{s}-1}\pi_k+H(\pi_{t_{s}},m^*)\right]\no\\
                    &\leq&\mathbb{E}_{\pi}\left[c\sum_{k=0}^{t_{s}-1}\pi_k+H(\pi_{t_{s}},0)\right]\no\\
                    &=&\mathbb{E}_{\pi}\left[c\sum_{k=0}^{t_{s}-1}\pi_k+1-\pi_{t_{s}}\right] \no\\
                    &=& U(\pi, N, \tau, \mu). \no
\end{eqnarray}
Hence, by taking one more observation at time $t_{s}$ and then deciding whether a change has occurred or not can reduce the cost. This implies that if there are sampling rights left, it is not optimal to claim a change without first taking a sample.

\section{Proof of Theorem ~\ref{thm:optimal}} \label{apd:optimal}
We show this theorem by induction: it is clear that $J(\pi, 0)=V_0(\pi)$. Suppose $J(\pi, n-1)=V_{n-1}(\pi)$, we show that $J(\pi, n)=V_n(\pi)$.

Firstly, we show that $J(\pi, n) \geq V_{n}(\pi)$. If the optimal sampling strategy for \eqref{eq:cost} is $t_{\eta} = 0$, then the optimal stopping time is $\tau=0$ by Corollary \ref{cor:ext}. In this case, it is easy to verify that $J(\pi, n) = V_{n}(\pi) = 1-\pi$. Hence the conclusion $J(\pi, n) \geq V_{n}(\pi) $ holds trivially. If the optimal strategy $t_{\eta} \neq 0$, then any given strategy $\mu=\{t_1, \cdots, t_{\eta}\}$ with $t_{1} = 0$ is not optimal, since it simply reduces the set of admissible strategies without bringing any benefit. In the following we consider the sampling strategy with $t_{\eta} \neq 0$ and $t_{1} \neq 0$.

Let $\mu=\{t_1, \cdots, t_{\eta}\}$ be any sampling strategy with $t_{1} \neq 0$ in $\mathcal{U}_{n}$, then we construct another sampling strategy $\tilde{\mu}$ via $\tilde{\mu}=\{t_2,\cdots,t_{\eta}\}$, which is in $\mathcal{U}_{n-1}$. We have
\begin{eqnarray}
&&\hspace{-6mm}U(\pi, n, \tau, \mu)\no\\
	    &=&\mathbb{E}_{\pi}\left[1-\pi_{\tau}+c\sum_{k=0}^{\tau-1}\pi_k\right]\no\\
            &=&\mathbb{E}_{\pi}\left[c\sum_{k=0}^{t_1-1}\pi_k+1-\pi_{\tau}+c\sum_{k=t_1}^{\tau-1}\pi_k\right]\no\\
            &=&\mathbb{E}_{\pi}\left[c\sum_{k=0}^{t_1-1}\pi_k+U(\pi_{t_1}, n-1, \tau, \tilde{\mu})\right]\no\\
            &\geq&\mathbb{E}_{\pi}\left[c\sum_{k=0}^{t_1-1}\pi_k+J(\pi_{t_{1}}, n-1)\right]\no\\
            &\geq&\inf_{m\geq1}\mathbb{E}_{\pi}\left[c\sum_{k=0}^{m-1}\pi_k+ V_{n-1}(\pi_m)\right]\no\\
            &\geq& \min \left\{1-\pi, \inf_{m \geq 1}\mathbb{E}_{\pi}\left[c\sum_{k=0}^{m-1}\pi_k+ V_{n-1}(\pi_m)\right]\right\}. \label{eq:UgeqV}
\end{eqnarray}

Since this is true for any $\mu\in\mathcal{U}_n$ with $t_{1} \neq 0$, and we also know that the strategy $\mu$ with $t_{1} = 0$ could not be optimal unless $t_{\eta} = 0$, then we have
\begin{eqnarray}
J(\pi, n)=\inf\limits_{\mu} U(\pi, n, \tau, \mu)\geq\mathcal{G}V_{n-1}(\pi)=V_n(\pi). \no
\end{eqnarray}

Secondly, we show that $J(\pi, n) \leq V_{n}(\pi)$. Assume the optimal sampling strategy is $\mu^{*}=\{t_{1}^{*}, t_{2}^{*}, \ldots, t_{\eta^{*}}^{*}\} \in \mathcal{U}_{n}$ and the optimal stopping time is $\tau^{*}$, another strategy is denoted as $\mu=\{t_{1}, \tilde{t}_{2},\ldots, \tilde{t}_{\eta}\}$ with stopping time $\tilde{\tau}$, where $t_{1}$ is an arbitrary sampling time, $\tilde{\mu}=\{\tilde{t}_{2},\ldots, \tilde{t}_{n}\}$ with $\tilde{\tau}$ is the optimal strategy achieves $J(\pi_{t_{1}}, n-1) = U(\pi_{t_{1}}, n-1, \tilde{\tau}, \tilde{\mu})$. We have
\begin{eqnarray}
J(\pi, n) &\leq& \mathbb{E}_{\pi} \left[ c\sum_{k=0}^{t_{1}-1}\pi_{k} + J(\pi_{t_{1}}, n-1)\right] \no
\end{eqnarray}
because $(\tilde{\tau}, \mu)$ is not optimal. Since the above inequality holds for every $t_{1}$, we have
\begin{eqnarray}
J(\pi, n) &\leq& \inf_{m \geq 0}\mathbb{E}_{\pi} \left[ c\sum_{k=0}^{m-1}\pi_{k} + V_{n-1}(\pi_{m}) \right] \no\\
&\leq& \inf_{m \geq 1}\mathbb{E}_{\pi} \left[ c\sum_{k=0}^{m-1}\pi_{k} + V_{n-1}(\pi_{m}) \right]. \no
\end{eqnarray}
Moveover, we have
$$J(\pi, n) \overset{(a)}\leq J(\pi, 0) = \inf_{\tau}\mathbb{E}_{\pi}\left[ 1-\pi_{\tau} + c\sum_{k=0}^{\tau-1}\pi_{k} \right]  \overset{(b)} \leq 1-\pi,$$
in which (a) is true because the admissible strategy set of $J(\pi,n)$ is larger than that of $J(\pi,0)$, and (b) is true because $\tau=0$ is not necessarily optimal for $J(\pi, 0)$. Therefore, we have
\begin{eqnarray}
J(\pi, n) &\leq& \min\left\{1-\pi, \inf_{m \geq 1}\mathbb{E}_{\pi} \left[ c\sum_{k=0}^{m-1}\pi_{k} + V_{n-1}(\pi_{m}) \right] \right\} \no \\
 &=& V_{n}(\pi). \no
\end{eqnarray}
Then we can conclude that $J(\pi, n) = V_{n}(\pi)$.

The optimality of ~\eqref{eq:time1} can be verified by putting it into \eqref{eq:UgeqV}, whose inequalities will then become equalities. Further, we can obtain
\begin{eqnarray}
&&\hspace{-15mm} V_{N-n}(\pi_{t_{n}^{*}}) = \min\left\{ 1-\pi_{t_{n}^{*}}, \right. \no\\
&&\left. \mathbb{E}_{\pi_{t_{n}^{*}}}\left[c\sum_{k=0}^{t_{n+1}^{*}-1}\pi_{k} + V_{N-n-1}(\pi_{t_{n+1}^{*}})\right] \right\}. \no
\end{eqnarray}
Notice that $\{\pi_{t_{n}^{*}}\}$ is a Markov chain, hence \eqref{eq:stop} can be immediately obtained by the Markov optimal stopping theorem. By Corollary \ref{cor:ext}, on $\{ \eta^{*} < N\}$ we have $\tau^{*}=t_{\eta^{*}}^{*}$. On $\{ \eta^{*} =N\}$, by \eqref{eq:V0} it is easy to verify that
\begin{eqnarray}
\tau^{*}  - t^{*}_{\eta^{*}} = \argmin_{m \geq 0} \mathbb{E}_{\pi_{t^{*}_{N}}}\left[c\sum\limits_{k=0}^{m-1}\pi_k+1-\pi_m\right]. \no
\end{eqnarray}
Let
$$ m^{*} = \argmin_{m \geq 0} \mathbb{E}_{\pi_{t^{*}_{N}}}\left[c\sum\limits_{k=0}^{m-1}\pi_k+1-\pi_m\right],$$
then
\begin{eqnarray}
\tau^{*} &=& (t^{*}_{\eta^{*}} + m^{*})\mathbf{1}_{\{\eta^{*} = N\}} + t^{*}_{\eta^{*}}\mathbf{1}_{\{\eta^{*} < N\}} \no\\
&=& t^{*}_{\eta^{*}} + m^{*}\mathbf{1}_{\{\eta^{*} = N\}}. \no
\end{eqnarray}

\section{Proof of Theorem~\ref{thm:concave}}\label{apd:concave}
It is easy to see that $0 \leq V_{n}(\pi) \leq 1$ for any $n \leq N$, and $V_{n}(1) = 0$. We next prove the concavity of $V_{n}(\pi)$ by inductive arguments. Clearly $V_0(\pi_k)$ is a concave function of $\pi_k$ and $V_0(1)=0$. Suppose $V_{n-1}(\pi_k)$ is a concave function of $\pi_k$, we show that $V_{n}(\pi_k)$ is a concave function.

We denote
\begin{eqnarray}
A_{n}(\pi) = \mathbb{E}_{\pi}[V_{n-1}(\pi_{m})], \no
\end{eqnarray}
and we show that $A_{n}(\pi)$ is a concave function.

Let $\pi_k^1\in [0,1]$ and $\pi_k^2\in [0,1]$ and $\theta\in [0,1]$, then for any fixed $m$, we have
\begin{eqnarray}
&&\theta A_{n}(\pi_{k}^{1})+(1-\theta) A_{n}(\pi_{k}^{2}) \no\\
&=& \theta \mathbb{E}_{\pi_{k}^{1}}[V_{n-1}(\pi_{k+m}^{1})] + (1-\theta) \mathbb{E}_{\pi_{k}^{2}}[V_{n-1}(\pi_{k+m}^{2})] \no\\
&=& \int (\theta V_{n-1}(\pi^1_{k+m})f(x_{k+m}|\pi_k^1,m) \no \\ &&+(1-\theta)V_{n-1}(\pi^2_{k+m})f(x_{k+m}|\pi_k^2,m)) dx_{k+m}\no\\
&=& \int [\vartheta V_{n-1}(\pi^1_{k+m})+(1-\vartheta) V_{n-1}(\pi^2_{k+m})]\no\\
&&[\theta f(x_{k+m}|\pi_k^1,m)+(1-\theta)f(x_{k+m}|\pi_k^2,m)]dx_{k+m}\no\\
&\overset{(a)}\leq& \int V_{n-1}(\vartheta \pi^1_{k+m}+(1-\vartheta)\pi^2_{k+m}) \no\\
&&[\theta f(x_{k+m}|\pi_k^1,m)+(1-\theta)f(x_{k+m}|\pi_k^2,m)]dx_{k+m} \no
\end{eqnarray}
in which
\begin{eqnarray}
\vartheta=\frac{\theta f(x_{k+m}|\pi_k^1,m)}{\theta f(x_{k+m}|\pi_k^1,m)+(1-\theta)f(x_{k+m}|\pi_k^2,m)}, \no
\end{eqnarray}
and $(a)$ is due to the inductive assumption that $V_{n-1}(\cdot)$ is a concave function.
Now, define
\begin{eqnarray}
\pi_k^3=\theta \pi_k^1+(1-\theta)\pi_k^2, \no
\end{eqnarray}
we can verify that
\begin{eqnarray}
&&\hspace{-6mm}\pi_{k+m}^3 =\no\\
&&\hspace{-6mm} \frac{[1-(1-\pi_k^3)(1-\rho)^{m}]f_1(Y_{k+m})}{[1-(1-\pi_k^3)(1-\rho)^{m}]f_1(Y_{k+m})+(1-\pi_k^3)(1-\rho)^{m}f_0(Y_{k+m})} \no\\
&&\hspace{-6mm}= \vartheta \pi_{k+m}^1+(1-\vartheta)\pi_{k+m}^2. \no
\end{eqnarray}
At the same time, we have
\begin{eqnarray}
\theta f(x_{k+m}|\pi_k^1,m)+(1-\theta)f(x_{k+m}|\pi_k^2,m)
=f(x_{k+m}|\pi_k^3,m). \no
\end{eqnarray}
Hence,
\begin{eqnarray}
\theta A_{n}(\pi_{k}^{1}) + (1-\theta)A_{n}(\pi_{k}^{2}) \leq \mathbb{E}_{\pi_{k}^{3}}\left[V_{n-1}(\pi^3_{k+m})\right] = A_{n}(\pi_{k}^{3}). \no
\end{eqnarray}
Therefore, $A_{n}(\pi) = \mathbb{E}_{\pi}\left[V_{n-1}(\pi_{m})\right] $ is a concave function. As the result, $\inf_{m} \left\{ \mathbb{E}_{\pi}\left[V_{n-1}(\pi_{m})\right]\right\}$ is also concave since it is the minimum of concave function. Then,
\begin{eqnarray}
c\left(m-\frac{\bar{\pi}_k}{\rho}(1-\bar{\rho}^m)\right) + \inf_{m\geq 1} \mathbb{E}_{\pi_{k}}\left[V_{n-1}(\pi_{k+m})\right]
\end{eqnarray}
is also a concave function of $\pi_k$. Further, $V_n(\pi_k)$ is a concave function of $\pi_k$ since it is the minimum of two concave functions.

By the fact that $\{V_{n}(\pi), n=1,\ldots, N\}$ is a family of concave functions, $\{V_{n}(\pi), n=1,\ldots, N\}$ are dominated by $1-\pi$ and $V_{n}(1)=0$, we immediately conclude that $\tau$ is a threshold rule. By Corollary \ref{cor:ext} and Theorem \ref{thm:optimal}, we can easily obtain \eqref{eq:th_eta} and \eqref{eq:th_tau}.

\section{Proof of Proposition \ref{prop:quasi}} \label{apd:quasi}
In the proof, we assume $\pi_{0} = 0$. This assumption will not affect the asymptotic result but will simplify the mathematical derivation.

We consider a uniform sampling scheme with sample interval $\varsigma$. Since it is not optimal for the observer to take an observation every $\varsigma$ time slots, the ADD of the uniform sampling scheme is larger than that of the optimal strategy. Define 
\begin{eqnarray}
\Gamma \triangleq \min \{n: n \varsigma \geq \Lambda\}. \label{eq:quasi}
\end{eqnarray}
The random variable $\Gamma$ acts as the change-point when there is uniform sampling, since from observing $\{ X_{\varsigma}, X_{2 \varsigma}, \ldots \}$, we cannot tell whether the change happens at $\Lambda$ or at $\Gamma \varsigma$. In the following, we derive the ADD when we use $\{X_{k \varsigma}\}$ to detect $\Gamma$, and we use the following stopping rule
\begin{eqnarray}
\gamma = \min\{n: \pi_{n\varsigma} > 1-\alpha \}.
\end{eqnarray}

In the first step, we relax the condition \eqref{eq:N_large} and consider that $N = \infty$. We notice that the problem of detecting $\Gamma$ based on $\{ X_{k \varsigma}\}$ is still under the Bayesian framework. The distribution of $\Gamma$ is given as
\begin{eqnarray}
q_{0} &=& P(\Gamma = 0) = 0, \no \\
q_{k} &=& P(\Gamma = k) = (1-\rho)^{(k-1)\varsigma}\left[ 1 - (1-\rho)^{\varsigma} \right]. \no
\end{eqnarray}
From $(2.6)$ and $(3.1)$ in \cite{Tartakovsky:TPIA:04}, we have
\begin{eqnarray}
d = \lim_{k \rightarrow \infty} \frac{- \log P(\Gamma \geq k+1)}{k} = \varsigma |\log(1-\rho)|. \no
\end{eqnarray}
And on $\{ \Gamma = k \}$
\begin{eqnarray}
\frac{1}{n} \sum_{i=k}^{k+n-1} l(X_{i \varsigma}) \rightarrow D(f_{1} || f_{0}) \quad \text{as} \quad n \rightarrow \infty, \no
\end{eqnarray}
where $l(X_{i \varsigma}) =  \log f_{1}(X_{i \varsigma})/f_{0}(X_{i \varsigma})$ is the log-likelihood ratio. Then, by Theorem 3 in \cite{Tartakovsky:TPIA:04}, we have
\begin{eqnarray}
\mathbb{E}\left[\gamma - \Gamma|\gamma \geq \Gamma \right] \leq \frac{|\log \alpha|}{D(f_{1} || f_{0}) + \varsigma |\log(1-\rho)|} (1+o(1)) \label{eq:add_bound}.
\end{eqnarray}

In the second step, we take \eqref{eq:N_large} into consideration and we show that $P(N \geq \gamma) \rightarrow 1$ as $\alpha \rightarrow 0$. This result indicates that \eqref{eq:N_large} can guarantee that the observer has enough sampling rights so that she can always stop with some sampling rights left. Therefore, \eqref{eq:add_bound} still holds with probability $1$ when we take the constraint \eqref{eq:N_large} into consideration.

By \eqref{eq:N_large}, we have
\begin{eqnarray}
\left( \frac{1}{1-\rho} \right)^{N \varsigma} \geq \frac{1}{\alpha} \quad \text{or} \quad (1-\rho)^{N \varsigma} \leq \alpha \label{eq:N_ineq}.
\end{eqnarray}
Therefore,
\begin{eqnarray}
P(\Gamma \geq N) = \sum_{n=N+1}^{\infty} P(\Gamma = n) = (1-\rho)^{N \varsigma} < \alpha, \no
\end{eqnarray}
and it is clear that $P(\Gamma \geq N) \rightarrow 0$ when $\alpha \rightarrow 0$.

In the following, we show $P(\gamma > N > \Gamma) \rightarrow 0$ as $\alpha \rightarrow 0$. Notice that
\begin{eqnarray}
\{ \gamma > N\} &\Leftrightarrow& \{ \max\{\pi_{0},\ldots,\pi_{N\varsigma}\} < 1-\alpha \} \no \\ &\Leftrightarrow& \cap_{i=0}^{N} \{ \pi_{i\varsigma} < 1-\alpha\}. \no
\end{eqnarray}
Following (3.7) in \cite{Tartakovsky:SADA:10}, we can rewrite $\pi_i$ as
\begin{eqnarray}
\pi_{i\varsigma} = \frac{R_{\rho, i}}{R_{\rho, i}+\frac{1}{1-(1-\rho)^{\varsigma}}}, \label{eq:sr}
\end{eqnarray}
in which
\begin{eqnarray}
R_{\rho, i} = \sum_{k=1}^{i} \prod_{j=k}^{i} \left[ \frac{1}{(1-\rho)^{\varsigma}} L(X_{j \varsigma}) \right], \label{eq:def_r}
\end{eqnarray}
where $L(X_{j \varsigma}) = \frac{f_{1}(X_{j \varsigma})}{f_{0}(X_{j \varsigma})}$ is the likelihood ratio. One can show \eqref{eq:sr} and \eqref{eq:def_r} by inductive argument using \eqref{eq:postt} and $R_{\rho, i}=(1+R_{\rho, i-1})\frac{1}{(1-\rho)^{\varsigma}}L(X_{i\varsigma}).$
Therefore, we have
\begin{eqnarray}
R_{\rho, N} &=& \sum_{k=1}^{N} \prod_{j=k}^{N} \left[ \frac{1}{(1-\rho)^{\varsigma}} L(X_{j \varsigma})\right]   \no\\
&=& \left[ \frac{1}{(1-\rho)^{\varsigma}}\right]^{N} \sum_{k=1}^{N} [(1-\rho)^{\varsigma}]^{k-1}\prod_{j=k}^{N} L(X_{j \varsigma})   \no\\
&\geq& \frac{1}{\alpha} \sum_{k=1}^{N} [(1-\rho)^{\varsigma}]^{k-1}\prod_{j=k}^{N} L(X_{j \varsigma}). \no
\end{eqnarray}
Finally, we have
\begin{eqnarray}
P(\gamma > N > \Gamma) &\leq& P(\gamma > N ) \no\\
&=& P\left( \cap_{i=0}^{N} \{ \pi_{i\varsigma} < 1-\alpha\} \right) \no \\
&\leq& P\left( \pi_{N\varsigma} < 1-\alpha \right) \no\\
&=& P\left( R_{\rho, N} < \frac{1-\alpha}{\alpha} \frac{1}{1-(1-\rho)^{\varsigma}} \right) \no \\
&\leq& P\left( \sum_{k=1}^{N} q_{k} \prod_{j=k}^{N} L(X_{j \varsigma}) < 1-\alpha \right). \label{eq:asym_cond}
\end{eqnarray}
By \eqref{eq:N_large} we have $N \rightarrow \infty$ when $\alpha \rightarrow 0$, hence
\begin{eqnarray}
\sum_{k=1}^{N} q_{k} \prod_{j=k}^{N}L(X_{j \varsigma}) &\rightarrow& \sum_{k=1}^{\infty} q_{k} \prod_{j=k}^{\infty}L(X_{j \varsigma})\no\\
&=&\mathbb{E}_{\pi}\left[ \prod_{k=\Gamma}^{\infty} L(X_{k \varsigma}) \right] = \infty.  \no
\end{eqnarray}
Therefore
\begin{eqnarray}
P(\gamma > N > \Gamma) \leq P(\gamma > N ) \rightarrow 0. \no
\end{eqnarray}
Then
\begin{eqnarray}
P( N \geq \gamma) &=& 1 - P(\Gamma \geq N) - P(\gamma > N > \Gamma) \no\\
&{\rightarrow}& 1.
\end{eqnarray}

As $\alpha \rightarrow 0$, we have
\begin{eqnarray}
\mathbb{E}_{\pi}\left[\gamma - \Gamma | \gamma \geq \Gamma \right] = \frac{\mathbb{E}_{\pi}\left[ (\gamma - \Gamma)^{+} \right]}{1-P(\gamma < \Gamma)} \rightarrow \mathbb{E}_{\pi}\left[ (\gamma - \Gamma)^{+} \right]. \no
\end{eqnarray}
Let $\tau \triangleq \inf\{n\varsigma: \pi_{n\varsigma} > 1-\alpha \} = \gamma \varsigma$. Since $0 \leq \Gamma\varsigma - \Lambda \leq \varsigma-1$ and $\varsigma < \infty$, we obtain
\begin{eqnarray}
&&\mathbb{E}_{\pi}\left[(\tau - \Lambda)^{+}\right] \no\\
&\leq& \frac{|\log\alpha|\varsigma}{D(f_{1}||f_{0})+|\log(1-\rho)|\varsigma} (1+o(1)) + (\varsigma-1). \no \\
&=& \frac{|\log\alpha|\varsigma}{D(f_{1}||f_{0})+|\log(1-\rho)|\varsigma} (1+o(1)).
\end{eqnarray}
Since the uniform sampling scheme and the stopping time $\tau$ are not optimal, the detection delay of the optimal strategy $(\tau^{*}, \mu^{*})$ is less than $\mathbb{E}_{\pi}\left[(\tau - \Lambda)^{+}\right]$. Hence the conclusion of Proposition \ref{prop:quasi} holds.

\section{Proof of Theorem \ref{thm:DP}} \label{apd:DP}
We show this theorem by induction: it is easy to see that $J_{T}^{T}(\pi_{T}, N_{T}) = V_{T}^{T}(\pi_{T}, N_{T})$. Suppose that
$J_{k+1}^{T}(\pi_{k+1}, N_{k+1}) = V_{k+1}^{T}(\pi_{k+1}, N_{k+1})$, we show $J_{k}^{T}(\pi_{k}, N_{k}) = V_{k}^{T}(\pi_{k}, N_{k})$.

We immediately obtain that $J_{k}^{T}(\pi_{k}, N_{k}) \leq V_{k}^{T}(\pi_{k}, N_{k})$ since $J_{k}^{T}(\pi_{k}, N_{k})$ is defined as the minimum cost over $\mathcal{T}_{k}^{T}$ and $\mathcal{U}_{k+1}^{T}$. In the following, we show that $J_{k}^{T}(\pi_{k}, N_{k}) \geq V_{k}^{T}(\pi_{k}, N_{k})$.

By the recursive formulae of $V^{T}_{k}$ and $W^{T}_{k+1}$, we can obtain
\begin{eqnarray}
&&\hspace{-10mm} V^{T}_{k}(\pi_{k}, N_{k}) \no\\
&=& \min\left\{ 1-\pi_{k}, c\pi_{k}+\mathbb{E}^{\nu}[W^{T}_{k+1}(\pi_{k}, N_{k},\nu_{k+1})]\right\} \no\\
&=& \min\left\{ 1-\pi_{k}, c\pi_{k}+ \sum_{j=0}^{\infty}p_{j}W^{T}_{k+1}(\pi_{k}, N_{k}, j)\right\} \no\\
&=& \min\left\{ 1-\pi_{k}, c\pi_{k}+ \right. \no\\
&&\left.  \sum_{j=0}^{\infty}p_{j}\min\left\{\mathbb{E}_{\pi_{k}}^{\nu}[V_{k+1}^{T}(\pi_{k+1}, N_{k+1})|\nu_{k+1}=j, \mu_{k+1}=0], \right.\right.\no\\
&& \hspace{8mm}\left. \left.\mathbb{E}_{\pi_{k}}^{\nu}[V_{k+1}^{T}(\pi_{k+1}, N_{k+1})|\nu_{k+1}=j, \mu_{k+1}=1]\right\} \right\}. \label{eq:adp_dp1}
\end{eqnarray}

On the other hand, for $J_{k}^{T}(\pi_{k}, N_{k})$ we have
\begin{eqnarray}
&&\hspace{-10mm} J^{T}_{k}(\pi_{k}, N_{k}) \no\\
&=& \inf_{\mu^{T}_{k+1}\in\mathcal{U}_{k+1}^{T}, \tau \in \mathcal{T}_{k}^{T}} \mathbb{E}_{\pi_{k}}^{\nu}\left[1-\pi_{\tau}+c\sum_{i=k}^{\tau-1}\pi_{i}\right] \no \\
&=& \inf_{\mu^{T}_{k+1}\in\mathcal{U}_{k+1}^{T}, \tau \in \mathcal{T}_{k}^{T}} \left[ \mathbb{E}_{\pi_{k}}^{\nu}\left[1-\pi_{\tau}+c\sum_{i=k}^{\tau-1}\pi_{i}\right]\mathbf{1}_{\{\tau=k\}}+ \right. \no\\
&& \left. \mathbb{E}_{\pi_{k}}^{\nu}\left[1-\pi_{\tau}+c\sum_{i=k}^{\tau-1}\pi_{i}\right]\mathbf{1}_{\{\tau\geq k+1\}}\right] \no\\
&=& \inf_{\mu^{T}_{k+1}\in\mathcal{U}_{k+1}^{T}, \tau \in \mathcal{T}_{k}^{T}} \left[ \left(1-\pi_{k}\right)\mathbf{1}_{\{\tau=k\}} + \right. \no\\
&&\left.\mathbb{E}_{\pi_{k}}^{\nu}\left[1-\pi_{\tau}+c\pi_{k} +c\sum_{i=k+1}^{\tau-1}\pi_{i}\right]\mathbf{1}_{\{\tau \geq k+1\}}\right] \no\\
&=& \min\left\{1-\pi_{k}, c\pi_{k}+\right. \no\\
&&\left. \inf_{\mu^{T}_{k+1}\in\mathcal{U}_{k+1}^{T}, \tau \in\mathcal{T}_{k+1}^{T}} \mathbb{E}_{\pi_{k}}^{\nu}\left[1-\pi_{T}+c\sum_{i=k+1}^{T-1}\pi_{i}\right]\right\} \no \\
&=& \min\left\{1-\pi_{k}, c\pi_{k}+\right. \no\\
&&\left. \inf_{\mu^{T}_{k+1}\in\mathcal{U}_{k+1}^{T}, \tau \in\mathcal{T}_{k+1}^{T}} \hspace{-1mm} \mathbb{E}_{\pi_{k}}^{\nu}\hspace{-1mm}\left[\mathbb{E}_{\pi_{k+1}}^{\nu}\hspace{-1mm}\left[1-\pi_{T}+c\sum_{i=k+1}^{T-1}\pi_{i}\hspace{-0.5mm}\right]\hspace{-0.5mm}\right]\hspace{-0.5mm}\right\} \no\\
&=& \min\left\{1-\pi_{k}, c\pi_{k}+\right. \no\\
&&\left. \inf_{\mu^{T}_{k+1}\in\mathcal{U}_{k+1}^{T}, \tau\in\mathcal{T}_{k+1}^{T}} \hspace{-1mm}\mathbb{E}_{\pi_{k}}^{\nu}\hspace{-1mm}\left[U(\pi_{k+1}, N_{k+1}, \tau, \mu_{k+2}^{T})\right]\hspace{-1mm}\right\}. \label{eq:apd_dp2}
\end{eqnarray}
At the same time, we have
\begin{eqnarray}
&&\hspace{-10mm} \mathbb{E}_{\pi_{k}}^{\nu}\left[U(\pi_{k+1}, N_{k+1}, \tau, \mu_{k+2}^{T})\right] \no\\
&=& \sum_{j=0}^{\infty} p_{j}\mathbb{E}_{\pi_{k}}^{\nu}\left[ U(\pi_{k+1}, N_{k+1}, \tau, \mu_{k+2}^{T}) \Bigg|\nu_{k+1}=j\right] \no\\
&\overset{(a)}\geq& \sum_{j=0}^{\infty}p_{j}\min\left\{ \right. \no\\
&&\left. \mathbb{E}_{\pi_{k}}^{\nu}\left[U(\pi_{k+1}, N_{k+1}, \tau, \mu_{k+2}^{T}) \Bigg|\nu_{k+1}=j, \mu_{k+1}=0 \right], \right.\no\\
&&\left. \mathbb{E}_{\pi_{k}}^{\nu}\left[U(\pi_{k+1}, N_{k+1}, \tau, \mu_{k+2}^{T}) \Bigg|\nu_{k+1}=j, \mu_{k+1}=1 \right] \hspace{-1mm} \right\},\no\\\label{eq:apd_dp3}
\end{eqnarray}
in which (a) holds because $\mathbb{E}_{\pi_{k}}^{\nu}\left[ U(\pi_{k+1}, N_{k+1}, \tau, \mu_{k+2}^{T}) |\nu_{k+1}=j\right]$ is a linear combination of \\$\mathbb{E}_{\pi_{k}}^{\nu}\left[U(\pi_{k+1}, N_{k+1}, \tau, \mu_{k+2}^{T}) |\nu_{k+1}=j, \mu_{k+1}=i \right]$ for $i=0, 1$. Substituting \eqref{eq:apd_dp3} into \eqref{eq:apd_dp2}, and using inequalities $\inf (a+b) \geq \inf a + \inf b$, $\inf \min\{a, b\} \geq \min\{ \inf a, \inf b\}$ and $\inf \mathbb{E}[\cdot] \geq \mathbb{E}[\inf(\cdot)]$, we obtain
\begin{eqnarray}
&&\hspace{-10mm} J^{T}_{k}(\pi_{k}, N_{k}) \no\\
&\geq& \min\left\{1-\pi_{k}, c\pi_{k} + \sum_{j=0}^{\infty} p_{j} \min \left\{   \right. \right.\no\\
&& \left. \mathbb{E}_{\pi_{k}}^{\nu}\left[\inf_{\mu^{T}_{k+1}\in\mathcal{U}_{k+1}^{T}, T\in\mathcal{T}_{k+1}^{T}} U(\pi_{k+1}, N_{k+1}, \tau, \mu_{k+2}^{T}) \Bigg|\nu_{k+1}=j, \right.\right. \no\\
&&\left. \mu_{k+1}=0 \right], \no\\
&& \left. \mathbb{E}_{\pi_{k}}^{\nu}\left[ \inf_{\mu^{T}_{k+1}\in\mathcal{U}_{k+1}^{T}, T\in\mathcal{T}_{k+1}^{T}} U(\pi_{k+1}, N_{k+1}, \tau, \mu_{k+2}^{T}) \Bigg|\nu_{k+1}=j, \right. \right. \no\\
&&\left. \left. \mu_{k+1}=1 \right] \right\} \no \\
&=& \sum_{j=0}^{\infty}p_{j} \min\left\{ \right. \no\\
&&\left. \mathbb{E}_{\pi_{k}}^{\nu}\left[J_{k+1}^{T}(\pi_{k+1}, N_{k+1}) \Bigg|\nu_{k+1}=j, \mu_{k+1}=0 \right], \right.\no\\
&&\left. \mathbb{E}_{\pi_{k}}^{\nu}\left[J_{k+1}^{T}(\pi_{k+1}, N_{k+1}) \Bigg|\nu_{k+1}=j, \mu_{k+1}=1 \right] \right\}. \label{eq:adp_dp4}
\end{eqnarray}
Since we assume that $J_{k+1}^{T}(\pi_{k+1}, N_{k+1}) = V_{k+1}^{T}(\pi_{k+1}, N_{k+1})$, by \eqref{eq:adp_dp1} and \eqref{eq:adp_dp4} we can obtain $J_{k}^{T}(\pi_{k}, N_{k}) \geq V_{k}^{T}(\pi_{k}, N_{k})$. 

\section{Proof of Theorem \ref{thm:lowerbound}} \label{apd:lowerbound}
In this proof, we can consider the case that $N_{0}=C$, i.e., the observer has a maximum amount of sampling rights at the beginning. The lower bound for the ADD of this case will certainly be the lower bound for the ADD of the case with $N_{0} < C$. The proof of Theorem \ref{thm:lowerbound} requires several supporting propositions and Theorem 1 in \cite{Tartakovsky:TPIA:04}, which are presented as follows.

\begin{prop}\label{prop:1}
$\mathbb{E}^{\nu}[\tilde{\mu}^{*}]$ exists, and $0 < \mathbb{E}^{\nu}[\tilde{\mu}^{*}] \leq 1$.
\end{prop}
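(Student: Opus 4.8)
The plan is to exploit the observation that, under the greedy rule $\tilde{\mu}^{*}$, the sampling‑right process $\{N_{k}\}$ is a time‑homogeneous Markov chain on a finite state space, and then to read off $\tilde{p}$ from its ergodic behaviour. First I would note that $\tilde{\mu}_{k}^{*}=\mathbf{1}_{\{N_{k-1}+\nu_{k}\ge 1\}}$ depends only on $N_{k-1}$ and on $\nu_{k}$, and that $\nu_{k}$ is independent of the past. Plugging this into \eqref{eq:N_evolve} gives $N_{k}=\min\{C,\,N_{k-1}+\nu_{k}-\mathbf{1}_{\{N_{k-1}+\nu_{k}\ge 1\}}\}$, a deterministic function of $N_{k-1}$ and the independent innovation $\nu_{k}$, so $\{N_{k}\}_{k\ge 1}$ is a time‑homogeneous Markov chain on $\{0,1,\dots,C\}$ whose transition kernel is an explicit function of the pmf $\{p_{i}\}$. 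Taking conditional expectations, $\mathbb{E}^{\nu}[\tilde{\mu}_{k}^{*}]=P^{\nu}(N_{k-1}+\nu_{k}\ge 1)=1-p_{0}\,P^{\nu}(N_{k-1}=0)$, so everything reduces to controlling the occupation of state $0$.

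Next I would analyse the chain structure. If $p_{0}=0$ then $\nu_{k}\ge 1$ a.s., hence $\tilde{\mu}_{k}^{*}\equiv 1$ and the claim is immediate; if $p_{0}=1$ the model collapses to the limited‑sampling‑right case of Section \ref{sec:lmt_opt}, so in this section $p_{0}<1$. The key point is that state $0$ is accessible from every state: starting from any $n\ge 1$, the event $\{\nu_{k+1}=\cdots=\nu_{k+n}=0\}$ has positive probability $p_{0}^{\,n}$ and forces one sample to be spent at each step, so $N_{k+n}=0$. Hence the states communicating with $0$ form the unique closed communicating class $\mathcal{R}$, all other states are transient, and from the deterministic start $N_{0}=N$ the chain enters $\mathcal{R}$ in at most $C$ steps a.s.; moreover the self‑loop at $0$ of probability $p_{0}>0$ makes $\mathcal{R}$ aperiodic. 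By the ergodic theorem for finite Markov chains with a single recurrent class, the empirical frequencies $\frac{1}{n}\sum_{k=1}^{n}\mathbf{1}_{\{N_{k-1}=0\}}$ converge a.s. to the unique stationary probability $\tilde{w}_{0}$ (and, by aperiodicity, $P^{\nu}(N_{k-1}=0)\to\tilde{w}_{0}$ as well). Consequently $\tilde{p}:=\mathbb{E}^{\nu}[\tilde{\mu}^{*}]=1-p_{0}\tilde{w}_{0}$ is well defined.

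Finally, for the bounds: $0\le\tilde{\mu}_{k}^{*}\le 1$ gives $\tilde{p}\le 1$ at once, while $\tilde{w}_{0}\le 1$ yields $\tilde{p}=1-p_{0}\tilde{w}_{0}\ge 1-p_{0}>0$ because $p_{0}<1$. I expect the only non‑routine part to be the Markov‑chain bookkeeping — verifying that the set of states communicating with $0$ is the unique recurrent class and dispatching the degenerate boundary cases $p_{0}\in\{0,1\}$ cleanly; once that is in place, the existence of $\tilde{p}$ and the two inequalities follow directly from standard ergodic theory for finite chains.
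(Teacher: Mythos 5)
Your proposal is correct and follows essentially the same route as the paper: the paper's proof likewise observes that under $\tilde{\mu}^{*}$ the process $N_{k}$ is a finite-state (regular) Markov chain with stationary distribution $\tilde{\mathbf{w}}$, identifies $\mathbb{E}^{\nu}[\tilde{\mu}^{*}_{k}]=1-p_{0}\tilde{w}_{0}$, and then defers the remaining bookkeeping to Lemma 5.1 of \cite{Jun:TSP:13}. You simply supply in full the communicating-class/aperiodicity details and the boundary cases $p_{0}\in\{0,1\}$ that the paper leaves to the cited reference.
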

\begin{proof}
The outline of this proof is described as follows: by \eqref{eq:N_evolve}, one can show that $N_{k}$ is a regular Markov chain under $\tilde{\mu}^{*}$. Denote the stationary distribution of $N_{k}$ as $\tilde{\mathbf{w}}=[\tilde{w}_{0}, \tilde{w}_{1}, \ldots, \tilde{w}_{C}]^{T}$, where $\tilde{w}_{i}$ is the stationary probability for the state $N_{k}=i$. By the definition of $\tilde{\mu}^{*}$, it is easy to verify that $\mathbb{E}^{\nu}[\tilde{\mu}^{*}_{k}] = 1-p_{0}\tilde{w}_{0}$ as $k \rightarrow \infty$. Hence the statement holds. The detailed proof of this proposition follows that of Lemma 5.1 in \cite{Jun:TSP:13}, hence we omit the proof here for brevity.
\end{proof}

\begin{prop}\label{prop:convergence}
Given $\Lambda=\lambda$, we have
\begin{eqnarray}
\lim_{r \rightarrow \infty} P_{\lambda}^{\nu} \left\{ \frac{1}{r} \max_{0< h \leq r} \sum_{i=\lambda}^{\lambda+h}l(Z_{i}) \geq (1+\varepsilon)\tilde{p}D(f_{1}||f_{0})\right\} \rightarrow 0  \no\\
\quad \forall \varepsilon > 0,
\end{eqnarray}
where $\tilde{p} = \mathbb{E}[\tilde{\mu}^{*}]$.
\end{prop}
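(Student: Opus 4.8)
The plan is to prove the stronger almost-sure statement that
$\frac{1}{r}\max_{0<h\le r}\sum_{i=\lambda}^{\lambda+h}l(Z_i)\to \tilde p\,D(f_{1}||f_{0})$
holds $P_{\lambda}^{\nu}$-a.s. as $r\to\infty$, and then deduce the proposition from the fact that almost-sure convergence implies convergence in probability, together with the strict inequality $(1+\varepsilon)\tilde p\,D(f_{1}||f_{0})>\tilde p\,D(f_{1}||f_{0})$. That inequality is genuine because $\tilde p>0$ by Proposition~\ref{prop:1} and $D(f_{1}||f_{0})>0$; hence once the pathwise limit equals $\tilde p\,D(f_{1}||f_{0})$, the event inside the probability has vanishing measure for every fixed $\varepsilon>0$.

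First I would decompose the partial sum. Under the greedy rule, $\tilde\mu^{*}_{k}=\mathbf{1}_{\{N_{k-1}+\nu_{k}\ge 1\}}$ depends only on the replenishment process $\nu$ and not on $\{X_{k}\}$, so conditionally on $\{\Lambda=\lambda\}$ the sampling pattern $\{\mu_{k}\}_{k\ge\lambda}$ is independent of $\{X_{k}\}_{k\ge\lambda}$, and the latter are i.i.d.\ with density $f_{1}$. Writing $l(Z_{i})=\mu_{i}\log\frac{f_{1}(X_{i})}{f_{0}(X_{i})}$ and letting $M(h)\triangleq\sum_{i=\lambda}^{\lambda+h}\mu_{i}$ count the non-trivial observations in the window $[\lambda,\lambda+h]$, we have $\sum_{i=\lambda}^{\lambda+h}l(Z_{i})=\sum_{j=1}^{M(h)}\log\frac{f_{1}(X_{t_{j}})}{f_{0}(X_{t_{j}})}$, the sum running over the post-change sampling times $t_{j}\ge\lambda$. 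Since $N_{k}$ is a regular (finite-state, irreducible, aperiodic) Markov chain under $\tilde\mu^{*}$ — as established in the proof of Proposition~\ref{prop:1} — the ergodic theorem gives $M(h)/h\to \tilde p=1-p_{0}\tilde w_{0}$ $P_{\lambda}^{\nu}$-a.s., irrespective of the value of $N_{\lambda-1}$; in particular $M(h)\to\infty$ a.s. Because the $\log\frac{f_{1}(X_{t_{j}})}{f_{0}(X_{t_{j}})}$ are i.i.d.\ with mean $D(f_{1}||f_{0})$, the strong law of large numbers gives $\frac{1}{n}\sum_{j=1}^{n}\log\frac{f_{1}(X_{t_{j}})}{f_{0}(X_{t_{j}})}\to D(f_{1}||f_{0})$ a.s.; evaluating this convergent sequence along the integer indices $n=M(h)\to\infty$ and multiplying by $M(h)/h$ yields $\frac{1}{h}\sum_{i=\lambda}^{\lambda+h}l(Z_{i})\to \tilde p\,D(f_{1}||f_{0})$ a.s.

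It then remains to pass from this Ces\`aro limit to the maximal functional, which is an elementary deterministic fact: if a real sequence $S_{h}$ satisfies $S_{h}/h\to c$ with $c>0$, then for any $\varepsilon'>0$ there is $H$ with $(c-\varepsilon')h\le S_{h}\le(c+\varepsilon')h$ for $h\ge H$, so $\max_{0<h\le r}S_{h}$ lies between $S_{r}$ and $\max\{\max_{0<h<H}S_{h},\,(c+\varepsilon')r\}$, whence $\frac{1}{r}\max_{0<h\le r}S_{h}\to c$. Applying this pathwise with $S_{h}=\sum_{i=\lambda}^{\lambda+h}l(Z_{i})$ and $c=\tilde p\,D(f_{1}||f_{0})>0$ finishes the derivation of the a.s.\ limit, and hence the proposition.

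The main obstacle is the interaction of the ergodic averaging $M(h)/h\to\tilde p$ for the sampling-right chain with the random time-change in the law of large numbers: one must run the SLLN when the number of summands is itself the random quantity $M(h)$, which is legitimate precisely because $M(h)\to\infty$ a.s., so the limit of the full sequence of averages transfers along it. Two further points require care: the chain $N_{k}$ need not be in stationarity at time $\lambda-1$, which is why one invokes the ergodic theorem (valid for regular chains from an arbitrary initial distribution) rather than a stationarity hypothesis; and one uses $\tilde p>0$ and $D(f_{1}||f_{0})\in(0,\infty)$ so that the target exponent is strictly positive and the SLLN applies.
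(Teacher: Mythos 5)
Your proposal is correct, and it takes a noticeably more self-contained route than the paper. The paper's own proof outsources the key probabilistic fact: it cites Proposition C.1 of the authors' companion non-Bayesian paper \cite{Jun:TSP:13} to assert the asymptotic almost-sure bound $\limsup_{r}\frac{1}{r}\sum_{i=\lambda}^{\lambda+r-1}l(Z_i)\leq\tilde{p}D(f_1\|f_0)$, and then handles the maximum by defining the last exceedance time $\hat{T}_{\varepsilon}^{(\lambda)}=\sup\{r:\frac{1}{r}\sum_{i=\lambda}^{\lambda+r-1}l(Z_i)>(1+\varepsilon)\tilde{p}D(f_1\|f_0)\}$ and observing that its a.s.\ finiteness forces the maximal probability to vanish. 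You instead prove the full a.s.\ Ces\`aro limit from first principles --- the decomposition $\sum l(Z_i)=\sum_{j\le M(h)}\log\frac{f_1(X_{t_j})}{f_0(X_{t_j})}$, the ergodic theorem for the finite regular chain $N_k$ giving $M(h)/h\to\tilde{p}$ from an arbitrary initial state, the SLLN along the random index $M(h)\to\infty$, and the independence of the sampling pattern from the observations under the greedy rule --- and then pass to the maximum via an elementary deterministic lemma rather than the last-exceedance device. The two mechanisms for handling the max are essentially interchangeable, but your argument makes explicit the content that the paper hides inside the citation (and is, in substance, what Proposition C.1 of \cite{Jun:TSP:13} must contain), at the cost of length. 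Two of your side remarks are genuinely load-bearing and worth keeping: the independence of $\{\mu_k\}$ from $\{X_k\}$ under $\tilde{\mu}^{*}$ is exactly what licenses applying the SLLN along the random subsequence of sampling times, and the strict positivity $\tilde{p}D(f_1\|f_0)>0$ (using Proposition~\ref{prop:1} and the implicit standing assumption $f_0\neq f_1$, $D(f_1\|f_0)<\infty$) is needed in both proofs, since otherwise $(1+\varepsilon)\tilde{p}D(f_1\|f_0)$ would not strictly exceed the limit and neither the last-exceedance time nor your limiting argument would close.
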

\begin{proof}
Following the proof of Proposition C.1 in \cite{Jun:TSP:13}, we can obtain that the inequality
\begin{eqnarray}
\frac{1}{r} \sum_{i=\lambda}^{r+\lambda-1} l(Z_{i}) \leq \tilde{p}D(f_{1}||f_{0}), \text{ as } r \rightarrow \infty, \label{eq:asure}
\end{eqnarray}
holds almost surely under $P_{\lambda}^{\nu}$ for any $\lambda \geq 1$.

For any $\varepsilon > 0$, define
$$\hat{T}_{\varepsilon}^{(\lambda)} = \sup\left\{r\geq 1 \Bigg| \frac{1}{r}\sum_{i=\lambda}^{\lambda+r-1} l(Z_{i}) > (1 + \varepsilon)\tilde{p}D(f_{1}||f_{0}) \right\}.$$
Due to \eqref{eq:asure}, we have
$$P_{\lambda}^{\nu}\left\{\hat{T}_{\varepsilon}^{(\lambda)} < \infty \right\} = 1,$$
which indicates
\begin{eqnarray}
&&\lim_{r \rightarrow \infty} P_{\lambda}^{\nu} \left\{ \frac{1}{r} \max_{0 < h \leq r}  \sum_{i=k}^{k+h} l(Z_{i}) \geq (1+\varepsilon) \tilde{p}D(f_{1}||f_{0}) \right\} \rightarrow 0 \no.
\end{eqnarray}
\end{proof}

Let $q = \tilde{p}D(f_{1}||f_{0})$. From (2.6) in \cite{Tartakovsky:TPIA:04} we have
\begin{eqnarray}
d = - \lim_{k \rightarrow \infty} \frac{\log P(\Lambda \geq k+1)}{k} = |\log (1-\rho)|.
\end{eqnarray}

To prove Theorem \ref{thm:lowerbound}, we need Theorem 1 in \cite{Tartakovsky:TPIA:04} , which is restated as follows:
\begin{lem} \label{thm:Tartakovsky}
\emph{(\cite{Tartakovsky:TPIA:04}, Theorem 1)} Let $\{ Z_{k} \}$ be a sequence of random variables with a random change-point $\Lambda$. Under $\{\Lambda = \lambda\}$, the conditional distribution of $Z_{k}$ is $f_{0}(\cdot| \mathbf{Z}_{1}^{k-1})$ for $k<\lambda$ and is $f_{1}(\cdot| \mathbf{Z}_{1}^{k-1})$ for $k \geq \lambda$. Denote $P_{\infty}$ as the probability measure under $\{\Lambda = \infty\}$. Denote $l(Z_{k})$ as
$$ l(Z_{k}) = \log \frac{f_{1}(Z_{k}|\mathbf{Z}_{1}^{k-1})}{f_{0}(Z_{k}|\mathbf{Z}_{1}^{k-1})}.$$
Let
\begin{eqnarray}
d = -  \lim_{k \rightarrow \infty} \frac{\log P(\Lambda \geq k+1)}{k}. \no
\end{eqnarray}
If the condition
\begin{eqnarray}
&& \lim_{r \rightarrow \infty} P_{\lambda} \left\{ \frac{1}{r} \max_{0< h \leq r} \sum_{i=\lambda}^{m+h} l(Z_{i}) \geq (1+\varepsilon)q \right\} \rightarrow 0, \no\\
&& \hspace{40mm} \forall \varepsilon > 0 \text{ and } \forall \lambda \geq 1 \label{eq:necessary_condition}
\end{eqnarray}
holds for some constant $q>0$. Denote $q_{d} = q+d$. Then, for all $r > 0$ as $\alpha \rightarrow 0$,
\begin{eqnarray}
&& \inf_{\tau} \mathbb{E}_{\lambda}[(\tau-\lambda)^{r}|\tau \geq \lambda] \geq \left(\frac{|\log \alpha|}{q_{d}}\right)^{r}(1+o(1)). \no\\
&& \inf_{\tau} \mathbb{E}_{\pi}[(\tau-\Lambda)^{r}|\tau \geq \Lambda] \geq \left(\frac{|\log \alpha|}{q_{d}}\right)^{r}(1+o(1)). \no
\end{eqnarray}
\end{lem}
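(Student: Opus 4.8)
The plan is to prove this as an asymptotic converse: I will show that no stopping rule meeting the false-alarm budget $P_{\pi}(\tau<\Lambda)\leq\alpha$ can detect appreciably sooner than $|\log\alpha|/q_{d}$ slots after the change. Fix $\varepsilon>0$ and set the window length $N_{\alpha}=(1-\varepsilon)|\log\alpha|/q_{d}$, which tends to infinity as $\alpha\to0$. Because $(\tau-\Lambda)^{r}\geq N_{\alpha}^{r}\mathbf{1}_{\{\tau-\Lambda\geq N_{\alpha}\}}$, the elementary reverse-Markov bound gives
\[
\mathbb{E}_{\pi}[(\tau-\Lambda)^{r}\mid\tau\geq\Lambda]\geq N_{\alpha}^{r}\Big(1-\frac{P_{\pi}(\Lambda\leq\tau<\Lambda+N_{\alpha})}{P_{\pi}(\tau\geq\Lambda)}\Big),
\]
using $\{\tau-\Lambda<N_{\alpha}\}\cap\{\tau\geq\Lambda\}=\{\Lambda\leq\tau<\Lambda+N_{\alpha}\}$. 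Since $P_{\pi}(\tau\geq\Lambda)=1-\mathrm{PFA}\geq1-\alpha\to1$, the whole theorem reduces to the single estimate $P_{\pi}(\Lambda\leq\tau<\Lambda+N_{\alpha})\to0$; letting $\varepsilon\downarrow0$ at the end then yields the stated rate, and the same reduction with $\Lambda$ frozen at $\lambda$ produces the first (conditional) inequality.

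To control $P_{\pi}(\Lambda\leq\tau<\Lambda+N_{\alpha})=\sum_{\lambda}P(\Lambda=\lambda)\,P_{\lambda}(\lambda\leq\tau<\lambda+N_{\alpha})$, I would, for each $\lambda$, split on the event $\{\Psi_{\lambda}<(1+\varepsilon)qN_{\alpha}\}$, where $\Psi_{\lambda}=\max_{0<h\leq N_{\alpha}}\sum_{i=\lambda}^{\lambda+h}l(Z_{i})$. On the complementary event the hypothesis \eqref{eq:necessary_condition}, applied with its limiting index taken to be $N_{\alpha}\to\infty$, forces $P_{\lambda}(\Psi_{\lambda}\geq(1+\varepsilon)qN_{\alpha})\to0$. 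On the event itself I change measure from $P_{\lambda}$ to $P_{\infty}$ via $dP_{\lambda}/dP_{\infty}|_{\mathcal{F}_{n}}=\exp(\sum_{i=\lambda}^{n}l(Z_{i}))$ for $n\geq\lambda$; since $\tau-\lambda<N_{\alpha}$ makes the accumulated log-likelihood ratio at most $\Psi_{\lambda}<(1+\varepsilon)qN_{\alpha}$, this gives
\[
P_{\lambda}\big(\lambda\leq\tau<\lambda+N_{\alpha},\,\Psi_{\lambda}<(1+\varepsilon)qN_{\alpha}\big)\leq e^{(1+\varepsilon)qN_{\alpha}}\,P_{\infty}(\lambda\leq\tau<\lambda+N_{\alpha}).
\]

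The remaining step brings in the prior-tail exponent $d$ and the false-alarm constraint. Summing the measure-changed terms against the prior and observing that $\{\tau\in[\lambda,\lambda+N_{\alpha})\}$ restricts $\lambda$ to $(\tau-N_{\alpha},\tau]$,
\[
\sum_{\lambda}P(\Lambda=\lambda)\,P_{\infty}(\lambda\leq\tau<\lambda+N_{\alpha})=\mathbb{E}_{\infty}\big[P(\tau-N_{\alpha}<\Lambda\leq\tau)\big]\leq\mathbb{E}_{\infty}\big[P(\Lambda\geq\tau-N_{\alpha}+1)\big].
\]
By the definition of $d$ one has $P(\Lambda\geq k+1)=e^{-dk(1+o(1))}$, so the right-hand side is at most $e^{dN_{\alpha}}\,\mathbb{E}_{\infty}[P(\Lambda\geq\tau+1)](1+o(1))$; and because the pre-change law coincides with $P_{\infty}$, $\mathbb{E}_{\infty}[P(\Lambda\geq\tau+1)]=P_{\pi}(\tau<\Lambda)=\mathrm{PFA}\leq\alpha$. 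Collecting everything,
\[
P_{\pi}(\Lambda\leq\tau<\Lambda+N_{\alpha})\leq e^{[(1+\varepsilon)q+d]N_{\alpha}}\,\alpha\,(1+o(1))+o(1).
\]
Choosing $N_{\alpha}=(1-\varepsilon')|\log\alpha|/q_{d}$ with $\varepsilon'\downarrow0$ as $\varepsilon\downarrow0$, so that $[(1+\varepsilon)q+d]N_{\alpha}<|\log\alpha|$, the exponential is beaten by $\alpha^{-1}$ and the bound vanishes, completing the reduction.

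The hard part will be making the fusion of the post-change information rate $q$ with the prior-tail rate $d$ fully rigorous: justifying $\mathrm{PFA}=\mathbb{E}_{\infty}[P(\Lambda\geq\tau+1)]$ from the agreement of the pre-change measures, extracting the factor $e^{dN_{\alpha}}$ uniformly out of the expectation despite the $o(1)$ error in $P(\Lambda\geq k+1)=e^{-dk(1+o(1))}$, and checking that the change of measure is valid on $\{\tau<\infty\}$. The conditional-on-$\{\Lambda=\lambda\}$ inequality follows the identical template, except that the averaging step is replaced by the cruder consequence $P_{\infty}(\tau<m)\leq\alpha\,e^{dm}(1+o(1))$ of the false-alarm constraint, obtained from $\mathrm{PFA}\geq P_{\infty}(\tau<m)\,P(\Lambda\geq m)$ and applied at $m=\lambda+N_{\alpha}$.
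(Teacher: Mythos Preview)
The paper does not supply its own proof of this lemma: it is quoted from Theorem~1 of \cite{Tartakovsky:TPIA:04}, and the proof in the appendix reads simply ``Please refer to \cite{Tartakovsky:TPIA:04}.'' Your outline---the reverse-Markov reduction to $P_{\pi}(\Lambda\leq\tau<\Lambda+N_{\alpha})\to0$, the split on the size of the maximal partial log-likelihood, the change of measure from $P_{\lambda}$ to $P_{\infty}$, and the identity $\mathbb{E}_{\infty}[P(\Lambda>\tau)]=P_{\pi}(\tau<\Lambda)\leq\alpha$ bringing in the false-alarm budget---is exactly the standard argument Tartakovsky gives in that reference, and the technical caveats you flag (uniform extraction of the $e^{dN_{\alpha}}$ factor from the prior tail, validity of the likelihood-ratio change of measure on $\{\tau<\infty\}$) are precisely the points handled carefully there; the proposal is correct.
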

\begin{proof}
Please refer to \cite{Tartakovsky:TPIA:04}.
\end{proof}

In our case, for any arbitrary but given sampling strategy $\mu$, the conditional density
\begin{eqnarray}
f_{0}( Z_{k} | \mathbf{Z}_{1}^{k-1} )
&=& f_{0}( X_{k} )P\left(\left\{ \mu_{k}=1 \right\} \right) + \delta( \phi )P\left(\left\{ \mu_{k}=0 \right\} \right),  \no \\
f_{1}( Z_{k} | \mathbf{Z}_{1}^{k-1} )
&=& f_{1}( X_{k} )P\left(\left\{ \mu_{k}=1 \right\} \right) + \delta( \phi )P\left(\left\{ \mu_{k}=0 \right\} \right),  \no
\end{eqnarray}
where $\delta( \phi )$ is the Dirac delta function. Therefore, the log likelihood ratio in Theorem \ref{thm:Tartakovsky} is
$$ l(Z_{k}) =  \log \frac{f_{1}(Z_{k}|\mathbf{Z}_{1}^{k-1})}{f_{0}(Z_{k}|\mathbf{Z}_{1}^{k-1})}
= \left\{ \begin{array}{c c} \log \frac{f_{1}(Z_{k})}{f_{0}(Z_{k})}, &  \text{if } \mu_{k} = 1 \\
0, & \text{if } \mu_{k} = 0  \end{array} , \right. $$
which is consistent with the definition in \eqref{eq:LR}. Moreover, for any sampling strategy, \eqref{eq:necessary_condition} holds for the constant $q=\tilde{p}D(f_{1}||f_{0})$. Correspondingly, $q_{d} = \tilde{p}D(f_{1}||f_{0})+ |\log(1-\rho)|$. Therefore, by choosing $r=1$, and combining Lemma \ref{thm:Tartakovsky} with Propositions~\ref{prop:1} and~\ref{prop:convergence}, we have:
\begin{eqnarray}
&& \inf_{\mu \in \mathcal{U}, \tau \in \mathcal{T}} \mathbb{E}_{\pi}^{\nu} [\tau - \Lambda|\tau \geq \Lambda] \no\\
&\geq& \frac{|\log \alpha|}{\tilde{p}D(f_{1}||f_{0})+|\log(1-\rho)|}(1+o(1)). \no
\end{eqnarray}
Since
\begin{eqnarray}
\mathbb{E}_{\pi}^{\nu} [\tau - \Lambda|\tau \geq \Lambda] = \frac{\mathbb{E}_{\pi}^{\nu} [(\tau - \Lambda)^{+}]}{1 - P_{\pi}^{\nu} (\tau < \Lambda)} \leq \frac{\mathbb{E}_{\pi}^{\nu} [(\tau - \Lambda)^{+}]}{1 - \alpha}, \no
\end{eqnarray}
as $\alpha \rightarrow 0$, we have
\begin{eqnarray}
\inf_{\mu \in \mathcal{U}, \tau \in \mathcal{T}} \mathbb{E}_{\pi}^{\nu} [(\tau - \Lambda)^{+}] \geq \frac{|\log \alpha|}{\tilde{p}D(f_{1}||f_{0})+|\log(1-\rho)|}(1+o(1)). \no
\end{eqnarray}

\section{Proof of Theorem \ref{thm:asym}} \label{apd:asym}
In this appendix we prove that the proposed strategy $(\tilde{\tau}^{*}, \tilde{\mu}^{*})$ can achieve the lower bound presented in Theorem \ref{thm:lowerbound}. In this proof, we can consider the case that $N_{0}=0$, i.e., the observer does not have any sampling rights at the beginning. If the lower bound of the ADD can be achieved by this case, then it must be achievable for the case with $N_{0}>0$. Define
$$R_{k} \triangleq \log \frac{\pi_{k}}{1-\pi_{k}}.$$
The proposed stopping rule can be expressed in terms of $R_{k}$ as
$$\tilde{\tau}^{*} = \inf\left\{k \geq 0: R_{k} \geq \log \frac{1-\alpha}{\alpha} \right\}.$$
Let $b \triangleq \log \frac{1-\alpha}{\alpha}$. As $\alpha \rightarrow 0$, we have $b = |\log \alpha|(1+o(1))$.

By \eqref{eq:recursion}, \eqref{eq:kernel0}, \eqref{eq:kernel1} and \eqref{eq:LR}, it is easy to verify that
\begin{eqnarray}
R_{k} &=& R_{k-1} + l(Z_{k}) + |\log(1-\rho)| + \log\left(1+\rho\frac{1-\pi_{k-1}}{\pi_{k-1}}\right). \no
\end{eqnarray}
Using this recursive formula repeatedly, we obtain
\begin{eqnarray}
R_{k} &=& \sum_{i=1}^{k} l(Z_{i}) + k|\log(1-\rho)| + \log\left( \frac{\pi_{0}}{1-\pi_{0}}+\rho \right) + \no\\
&&\sum_{i=2}^{k}  \log \left(1+\rho\frac{1-\pi_{i-1}}{\pi_{i-1}}\right). \no
\end{eqnarray}
We notice that the third item in the above expression is a constant. Since the threshold $b$ in the proposed stopping rule will go to infinity as $\alpha \rightarrow 0$, this constant item can be ignored in the asymptotic analysis. For simplicity, we assume $\log(\frac{\pi_{0}}{1-\pi_{0}}+\rho) = 0$ in the rest of this appendix.

Let
\begin{eqnarray}
&& S_{k} \triangleq \sum_{i=1}^{k} l(Z_{i}) + k|\log(1-\rho)|, \no\\
&& \tau_{s} \triangleq  \inf\{k \geq 0: S_{k} \geq b \}. \no
\end{eqnarray}
It is easy to see $\tilde{\tau}^{*} \leq \tau_{s}$ since $R_{k} \geq S_{k}$. The following proposition indicates that $\tau_{s}$ can achieve the lower bound presented in Theorem \ref{thm:lowerbound}, hence $\tilde{\tau}^{*}$ is asymptotically optimal.
\begin{prop}
As $b \rightarrow \infty$,
\begin{eqnarray}
&&\hspace{-20mm}\mathbb{E}_{\pi}^{\nu}[\tau_{s} - \Lambda |\tau_{s} \geq \Lambda] \no\\
&\leq& \frac{b}{\tilde{p}D(f_{1}||f_{0})+|\log(1-\rho)|}(1+o(1)).
\end{eqnarray}
\end{prop}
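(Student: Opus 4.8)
The plan is to bound the unconditional delay $\mathbb{E}_{\pi}^{\nu}[(\tau_{s}-\Lambda)^{+}]$ and then to argue that the conditioning event carries probability tending to one. Indeed, at the stopping time $S_{\tau_{s}}\geq b$, and since $R_{k}\geq S_{k}$ for every $k$ (the inequality already used above to get $\tilde{\tau}^{*}\leq\tau_{s}$), we have $R_{\tau_{s}}\geq b$, i.e. $1-\pi_{\tau_{s}}\leq e^{-b}$; hence
$$P_{\pi}^{\nu}(\tau_{s}<\Lambda)=\mathbb{E}_{\pi}^{\nu}[(1-\pi_{\tau_{s}})\mathbf{1}_{\{\tau_{s}<\infty\}}]\leq e^{-b}\to 0,$$
so that $\mathbb{E}_{\pi}^{\nu}[\tau_{s}-\Lambda\,|\,\tau_{s}\geq\Lambda]=\mathbb{E}_{\pi}^{\nu}[(\tau_{s}-\Lambda)^{+}]/P_{\pi}^{\nu}(\tau_{s}\geq\Lambda)=\mathbb{E}_{\pi}^{\nu}[(\tau_{s}-\Lambda)^{+}](1+o(1))$. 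Writing $q_{d}\triangleq\tilde{p}D(f_{1}||f_{0})+|\log(1-\rho)|$, it therefore suffices to show $\mathbb{E}_{\pi}^{\nu}[(\tau_{s}-\Lambda)^{+}]\leq (b/q_{d})(1+o(1))$, and for this I would decompose over the change-point, $\mathbb{E}_{\pi}^{\nu}[(\tau_{s}-\Lambda)^{+}]=\sum_{\lambda\geq 0}P(\Lambda=\lambda)\,\mathbb{E}_{\lambda}^{\nu}[(\tau_{s}-\lambda)^{+}]$.

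Next, fix $\varepsilon\in(0,q_{d})$ and work on $\{\Lambda=\lambda\}$ for $\lambda\geq 1$ (the case $\lambda=0$ is handled the same way, using that the law of $\{Z_{i}\}_{i\geq 1}$ under $P_{0}^{\nu}$ agrees with the one under $P_{1}^{\nu}$). From the recursion for $S_{k}$ displayed just before the proposition, $S_{\lambda+n-1}=S_{\lambda-1}+\sum_{i=\lambda}^{\lambda+n-1}l(Z_{i})+n|\log(1-\rho)|$ for $n\geq 1$. By the definition of $T_{\varepsilon}^{(\lambda)}$, every $n>T_{\varepsilon}^{(\lambda)}$ satisfies $\sum_{i=\lambda}^{\lambda+n-1}l(Z_{i})\geq n(\tilde{p}D(f_{1}||f_{0})-\varepsilon)$, whence $S_{\lambda+n-1}\geq S_{\lambda-1}+n(q_{d}-\varepsilon)$; taking $n$ to be the least integer exceeding both $T_{\varepsilon}^{(\lambda)}$ and $(b-S_{\lambda-1})^{+}/(q_{d}-\varepsilon)$ forces $S_{\lambda+n-1}\geq b$ and hence the pathwise estimate
$$(\tau_{s}-\lambda)^{+}\leq T_{\varepsilon}^{(\lambda)}+\frac{b+(S_{\lambda-1})^{-}}{q_{d}-\varepsilon}+1 .$$

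Finally, I would take $P_{\lambda}^{\nu}$-expectations, weight by $P(\Lambda=\lambda)$, and sum. The term $\sum_{\lambda}P(\Lambda=\lambda)\mathbb{E}_{\lambda}^{\nu}[T_{\varepsilon}^{(\lambda)}]=\mathbb{E}_{\pi}^{\nu}[T_{\varepsilon}^{(\Lambda)}]$ is finite by \eqref{eq:averege_quickly}; $\sum_{\lambda}P(\Lambda=\lambda)=1$; and $\sum_{\lambda}P(\Lambda=\lambda)\mathbb{E}_{\lambda}^{\nu}[(S_{\lambda-1})^{-}]$ is a finite constant, since under $P_{\lambda}^{\nu}$ the pre-change increments $l(Z_{i})$, $i<\lambda$, have absolute mean bounded by $\mathbb{E}_{f_{0}}|\log(f_{1}/f_{0})|<\infty$, so $\mathbb{E}_{\lambda}^{\nu}[(S_{\lambda-1})^{-}]=O(\lambda)$, which is summable against the geometric weights $P(\Lambda=\lambda)=(1-\pi)\rho(1-\rho)^{\lambda-1}$. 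Thus $\mathbb{E}_{\pi}^{\nu}[(\tau_{s}-\Lambda)^{+}]\leq b/(q_{d}-\varepsilon)+K_{\varepsilon}$ with $K_{\varepsilon}$ independent of $b$, so $\limsup_{b\to\infty}b^{-1}\mathbb{E}_{\pi}^{\nu}[(\tau_{s}-\Lambda)^{+}]\leq 1/(q_{d}-\varepsilon)$; letting $\varepsilon\downarrow 0$ yields $\mathbb{E}_{\pi}^{\nu}[(\tau_{s}-\Lambda)^{+}]\leq(b/q_{d})(1+o(1))$, and combining with the first paragraph completes the argument.

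The hard part is the uniform-in-$\lambda$ control of the post-change partial sums $\sum_{i=\lambda}^{\lambda+n-1}l(Z_{i})$: because the sampling fraction of $\tilde{\mu}^{*}$ converges to $\tilde{p}$ only as the Markov chain $N_{k}$ equilibrates and the increments are therefore not i.i.d., the ordinary law of large numbers behind Proposition~\ref{prop:convergence} does not by itself bound the ``overshoot time'' $T_{\varepsilon}^{(\lambda)}$ — it is precisely the quantitative hypotheses \eqref{eq:quickly_converge}--\eqref{eq:averege_quickly} that make $\mathbb{E}_{\pi}^{\nu}[T_{\varepsilon}^{(\Lambda)}]$ finite and hence a genuine $o(b)$ contribution. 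A secondary technical point, already used above, is that the pre-change drift of $S_{k}$ need not be positive, so $\mathbb{E}_{\lambda}^{\nu}[(S_{\lambda-1})^{-}]$ may grow linearly in $\lambda$; this is rendered harmless only by the geometric prior on $\Lambda$ together with finiteness of the first absolute moment of the log-likelihood ratio under $f_{0}$.
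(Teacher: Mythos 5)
Your proof is correct and follows essentially the same route as the paper's: the same decomposition $S_{n}=S_{1}^{\lambda-1}+S_{\lambda}^{n}$ at the change-point, the same use of the quick-convergence variable $T_{\varepsilon}^{(\lambda)}$ under assumptions \eqref{eq:quickly_converge}--\eqref{eq:averege_quickly} to obtain the pathwise bound $(\tau_{s}-\lambda)^{+}\lesssim T_{\varepsilon}^{(\lambda)}+(b-S_{\lambda-1})/(q_{d}-\varepsilon)$, the same control of the pre-change sum via the geometric prior, and the same $\varepsilon\downarrow 0$ limit. The only (cosmetic) differences are that you dispose of the conditioning up front via $P_{\pi}^{\nu}(\tau_{s}<\Lambda)\leq e^{-b}$ and work with unconditional expectations throughout, and that you bound $\mathbb{E}_{\lambda}^{\nu}[(S_{\lambda-1})^{-}]$ rather than sandwiching $\mathbb{E}_{\lambda}^{\nu}[S_{1}^{\lambda-1}]$ as the paper does; both steps are valid.
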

\begin{proof}
On the event $\{ \Lambda = \lambda\}$, we can decompose $S_{n}$ into two parts if $n \geq \lambda$:
\begin{eqnarray}
S_{n} = S_{1}^{\lambda-1} + S_{\lambda}^{n}, \label{eq:twoS}
\end{eqnarray}
where
\begin{eqnarray}
S_{1}^{\lambda-1} \triangleq \sum_{i=1}^{\lambda-1} l(Z_{i}) + (\lambda-1)|\log(1-\rho)|, \no\\
S_{\lambda}^{n} \triangleq \sum_{i=\lambda}^{n} l(Z_{i}) + (n-\lambda+1)|\log(1-\rho)|. \no
\end{eqnarray}

We first show that as $r \rightarrow \infty$
\begin{eqnarray}
\frac{1}{r}S_{\lambda}^{\lambda+r-1} \overset{a.s.}\rightarrow \tilde{p}D(f_{1}||f_{0})+|\log(1-\rho)|. \label{eq:as_converge}
\end{eqnarray}
Let $\hat{r}$ be the number of non-zero elements in $\{ \mu_{\lambda}, \mu_{\lambda+1}, \ldots, \mu_{\lambda+r-1} \}$, then as $r \rightarrow \infty$, we have
\begin{eqnarray}
\frac{\hat{r}}{r} = \frac{1}{r} \sum_{i=\lambda}^{\lambda+r-1} \mu_{i} \overset{a.s.}{\rightarrow} \mathbb{E}[\mu] = \tilde{p}. \no
\end{eqnarray}
Let $\{a_{1}, \ldots, a_{\hat{r}}\}$ be a sequence of time slots in which the observer takes observations after $\lambda$. That is, $\lambda \leq a_{1} < \ldots < a_{\hat{r}} \leq \lambda+r-1$ and $\mu_{a_{i}}=1$. By the strong law of large numbers, as $\hat{r} \rightarrow \infty$
\begin{eqnarray}
\frac{1}{\hat{r}} \sum_{i=1}^{\hat{r}} l(X_{a_{i}}) \overset{a.s.}{\rightarrow} D(f_{1}||f_{0}). \no
\end{eqnarray}
Then we have
\begin{eqnarray}
\frac{1}{r}S_{\lambda}^{\lambda+r-1} &=& \frac{1}{r}\left[\sum_{i=\lambda}^{\lambda+r-1}l(Z_{i}) + r|\log(1-\rho)|\right] \no\\
&=& \frac{\hat{r}}{r} \frac{1}{\hat{r}} \sum_{i=1}^{\hat{r}} l(X_{a_{i}}) + |\log(1-\rho)| \no\\
&\overset{a.s.}\rightarrow& \tilde{p}D(f_{1}||f_{0})+|\log(1-\rho)|. \no
\end{eqnarray}
In the following, we denote $q_{d} = \tilde{p}D(f_{1}||f_{0})+|\log(1-\rho)|$.

By \eqref{eq:twoS}, we can rewrite $\tau_{s}$ as
\begin{eqnarray}
\tau_{s} = \inf\left\{j>0: S_{\lambda}^{j} \geq b - S_{1}^{\lambda-1}\right\}. \no
\end{eqnarray}
Hence,
\begin{eqnarray}
S_{\lambda}^{\tau_{s}-1} < b- S_{1}^{\lambda-1}.
\end{eqnarray}
Define the random variable
\begin{eqnarray}
\tilde{T}_{\varepsilon}^{(\lambda)} \triangleq \sup\left\{ n \geq 1: |n^{-1}S_{\lambda}^{\lambda+n} - q_{d}| > \varepsilon \right\}.\no
\end{eqnarray}
By \eqref{eq:as_converge}, we have $\tilde{T}_{\varepsilon}^{(\lambda)} < \infty$ almost surely. By \eqref{eq:quickly_converge} and \eqref{eq:averege_quickly}, it is easy to verify that $\mathbb{E}_{\lambda}^{\nu}[\tilde{T}_{\varepsilon}^{(\lambda)}] < \infty$ and $\mathbb{E}_{\pi}^{\nu}[\tilde{T}_{\varepsilon}^{(\Lambda)}] < \infty$.

On the event $\left\{ \tau_{s} > \tilde{T}_{\varepsilon}^{(\lambda)} + (\lambda-1) \right\}$, we have
\begin{eqnarray}
S_{\lambda}^{\tau_{s}-1} > (\tau_{s}-\lambda+1)(q_{d}-\varepsilon), \no
\end{eqnarray}
hence
\begin{eqnarray}
\tau_{s}-\lambda+1 < \frac{S_{\lambda}^{\tau_{s}-1}}{q_{d}-\varepsilon} < \frac{b-S_{1}^{\lambda-1}}{q_{d}-\varepsilon}.
\end{eqnarray}

Then we have
\begin{eqnarray}
&&\tau_{s}-\lambda+1 \no\\
&<& \frac{b-S_{1}^{\lambda-1}}{q_{d}-\varepsilon} \mathbf{1}_{\left\{ \tau_{s} > \tilde{T}_{\varepsilon}^{(\lambda)} + (\lambda-1) \right\}} + \tilde{T}_{\varepsilon}^{(\lambda)} \mathbf{1}_{ \left\{\tau_{s} \leq \tilde{T}_{\varepsilon}^{(\lambda)} + (\lambda-1) \right\}} \no\\
&<& \frac{b-S_{1}^{\lambda-1}}{q_{d}-\varepsilon}+ \tilde{T}_{\varepsilon}^{(\lambda)}. \no
\end{eqnarray}
Taking the conditional expectation on both sides, since $\tilde{T}_{\varepsilon}^{(\lambda)} < \infty$, then as $\alpha \rightarrow 0$ ($b \rightarrow \infty$) we have
\begin{eqnarray}
&&\mathbb{E}_{\lambda}^{\nu}[\tau_{s}-\lambda|\tau_{s}\geq \lambda] \no\\
&\leq& \frac{b}{q_{d}-\varepsilon} - \frac{\mathbb{E}_{\lambda}^{\nu}[S_{1}^{\lambda-1}|\tau_{s}\geq \lambda]}{q_{d}-\varepsilon} + \mathbb{E}^{\nu}_{\lambda}[\tilde{T}_{\varepsilon}^{(\lambda)}|\tau_{s}\geq \lambda] \no \\
&=& \frac{b}{q_{d}-\varepsilon}(1+o(1)) - \frac{\mathbb{E}_{\lambda}^{\nu}[S_{1}^{\lambda-1}|\tau_{s}\geq \lambda]}{q_{d}-\varepsilon}. \no
\end{eqnarray}
Therefore,
\begin{eqnarray}
&&\hspace{-10mm} \mathbb{E}_{\pi}^{\nu}[\tau_{s}-\Lambda|\tau_{s}\geq \Lambda] \no\\
&=& \frac{1}{P_{\pi}^{\nu}(\tau_{s}\geq \Lambda)}\mathbb{E}_{\pi}^{\nu}[\tau_{s}-\Lambda; \tau_{s}\geq \Lambda] \no\\
&=&\frac{1}{P_{\pi}^{\nu}(\tau_{s}\geq \Lambda)} \sum_{\lambda=1}^{\infty}P(\Lambda=\lambda)\mathbb{E}_{\lambda}^{\nu}[\tau_{s}-\lambda|\tau_{s}\geq \lambda]P_{\lambda}^{\nu}(\tau_{s}\geq \lambda) \no\\
&\leq& \frac{b}{q_{d}-\varepsilon} - \frac{\mathbb{E}_{\pi}^{\nu}\left[S_{1}^{\Lambda-1}|\tau_{s}\geq \Lambda\right]}{q_{d}-\varepsilon} + \mathbb{E}_{\pi}^{\nu}[\tilde{T}^{(\Lambda)}_{\varepsilon}|\tau_{s}\geq \Lambda]\no\\
&=& \frac{b}{q_{d}-\varepsilon}(1+o(1)) - \frac{\mathbb{E}_{\pi}^{\nu}\left[S_{1}^{\Lambda-1}|\tau_{s}\geq \Lambda\right]}{q_{d}-\varepsilon}.  \label{eq:result}
\end{eqnarray}
In the following, we show that $\mathbb{E}_{\pi}^{\nu}[S_{1}^{\Lambda-1}|\tau_{s}\geq \Lambda]$ is finite. Let $\tilde{r}$ be the number of nonzero elements in $\{\mu_{1}, \ldots, \mu_{\lambda-1}\}$, and denote $\{b_{1}, \ldots, b_{\tilde{r}}\}$ as the time slots that the observer takes observation before $\lambda$, we have
\begin{eqnarray}
\mathbb{E}_{\lambda}^{\nu}\left[S_{1}^{\lambda-1}\right] &\overset{(a)}=& \mathbb{E}_{\infty}^{\nu}\left[S_{1}^{\lambda-1}\right] \no\\
&=& \mathbb{E}_{\infty}^{\nu}\left[\sum_{i=1}^{\lambda-1}l(Z_{i}) \right] + (\lambda-1)|\log(1-\rho)| \no\\
&=& \mathbb{E}_{\infty}\left[\sum_{i=1}^{\tilde{r}}l(X_{b_{i}}) \right] + (\lambda-1)|\log(1-\rho)| \no\\
&=& -\tilde{r}D(f_{0}||f_{1}) + (\lambda-1)|\log(1-\rho)|, \no
\end{eqnarray}
where (a) is true because $P_{\infty}^{\nu}$ and $P_{\lambda}^{\nu}$ are the same for observations taken before $\lambda$.
Since $\tilde{r}<\lambda$ and $D(f_{0}||f_{1}) \geq 0$, we have
\begin{eqnarray}
-\lambda D(f_{0}||f_{1}) < \mathbb{E}_{\lambda}^{\nu}\left[S_{1}^{\lambda-1}\right] < \lambda|\log(1-\rho)|. \no
\end{eqnarray}
Since
\begin{eqnarray}
\mathbb{E}_{\pi}^{\nu}[S_{1}^{\Lambda-1}] = \sum_{k=1}^{\infty} \mathbb{E}_{\lambda}^{\nu}\left[S_{1}^{\lambda-1}\right] P(\Lambda=\lambda), \no
\end{eqnarray}
we have
\begin{eqnarray}
-\frac{D(f_{0}||f_{1})}{1-\rho} < \mathbb{E}_{\pi}^{\nu}\left[S_{1}^{\Lambda-1}\right] < \frac{|\log(1-\rho)|}{1-\rho}. \no
\end{eqnarray}
Therefore, $\mathbb{E}_{\pi}^{\nu}[S_{1}^{\lambda-1}]$ is bounded. We notice that as $\alpha \rightarrow 0$, $\{\tau_{s} \geq \Lambda\}$ approaches to an almost sure event. Then
$$\mathbb{E}_{\pi}^{\nu}\left[S_{1}^{\Lambda-1}|\tau_{s}\geq\Lambda\right] \rightarrow \mathbb{E}_{\pi}^{\nu}\left[S_{1}^{\Lambda-1}\right] \text{ as } \alpha \rightarrow 0.$$
By \eqref{eq:result} we obtain
\begin{eqnarray}
\mathbb{E}_{\pi}^{\nu}[\tau_{s}-\Lambda|\tau_{s}\geq \Lambda] \leq \frac{b}{q_{d}-\varepsilon}(1+o(1)).
\end{eqnarray}
Since the above equation holds for any $\varepsilon > 0$, then
\begin{eqnarray}
\mathbb{E}_{\pi}^{\nu}[\tau_{s}-\Lambda|\tau_{s}\geq \Lambda] \leq \frac{b}{q_{d}}(1+o(1)). \no
\end{eqnarray}
\end{proof}

Using the above proposition and the fact $\tilde{\tau}^{*} \leq \tau_{s}$, we have
\begin{eqnarray}
\mathbb{E}_{\pi}^{\nu}\left[(\tilde{\tau}^{*} - \Lambda)^{+}\right] &\leq& \mathbb{E}_{\pi}^{\nu}\left[(\tau_{s} - \Lambda)^{+}\right] \no\\
&=& \mathbb{E}_{\pi}^{\nu}[\tau_{s} - \Lambda|\tau_{s} \geq \Lambda][1-P(\tau_{s} < \Lambda)] \no\\
&\leq& \frac{b}{q_{d}}(1-\alpha)(1+o(1)) \no\\
&=& \frac{b}{q_{d}}(1+o(1)). \no
\end{eqnarray}
\bibliographystyle{ieeetr}{}
\bibliography{macros,detection,energyharvester,sensornetwork}

\vspace{-10mm}
\begin{IEEEbiography}[{\includegraphics[width=1in,height=1.25in,clip,keepaspectratio]{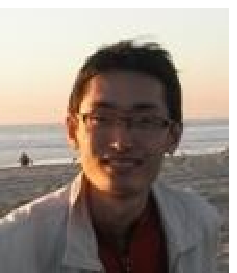}}]{Jun Geng}
(S'13) received the B.E. and M. E. degrees from Harbin Institute of Technology, Harbin, China in 2007 and 2009 respectively. He is currently working towards his Ph.D. degree in the Department of Electrical and Computer Engineering, Worcester Polytechnic Institute.

His research interests include sequential statistical methods, stochastic signal processing, and
their applications in wireless sensor networks and wireless communications.
\end{IEEEbiography}

\vspace{-10mm}
\begin{IEEEbiography}[{\includegraphics[width=1in,height=1.25in,clip,keepaspectratio]{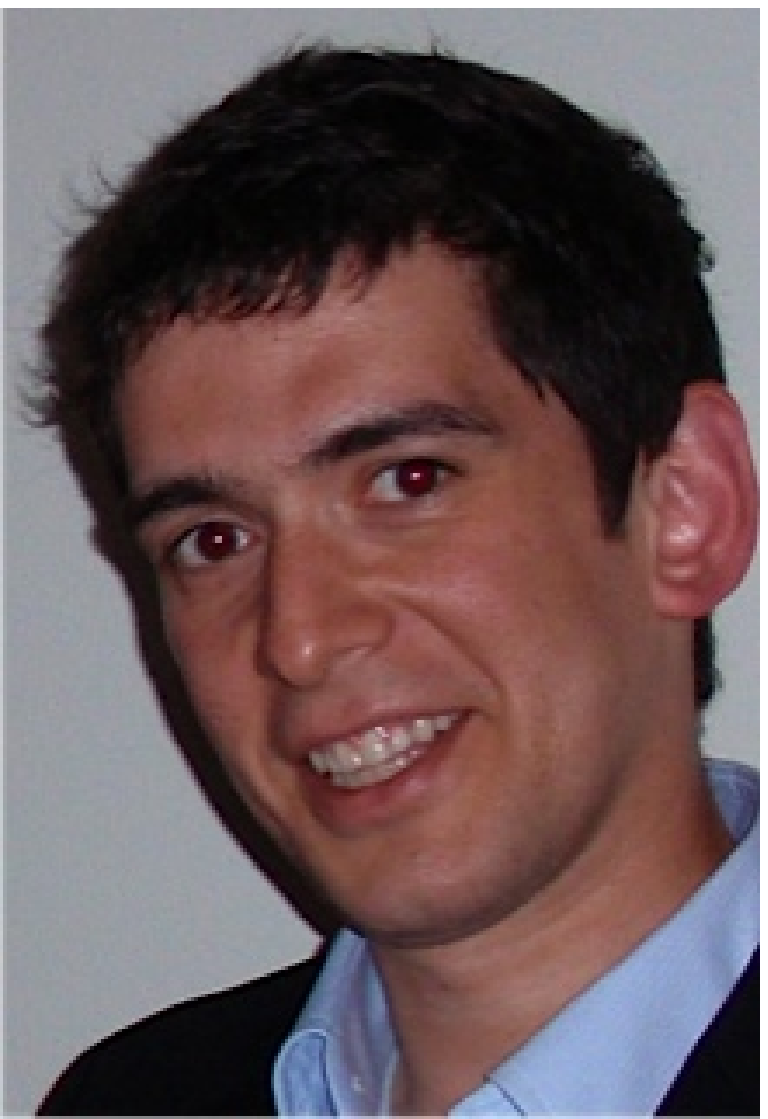}}]{Erhan Bayraktar}
is a professor of Mathematics at the University of Michigan, where he has been since 2004. He is also the holder of the Susan Smith Chair since 2010. Professor Bayraktar's research is in stochastic analysis, control, probability and mathematical finance. He is in the editorial boards of the SIAM Journal on Control and Optimization, Mathematics of Operations Research, and Mathematical Finance. His research is funded by the National Science Foundation. In particular, he received a CAREER grant in 2010.

Professor Bayraktar received his Bachelor's degree (double major in Electrical Engineering and Mathematics) from Middle East Technical University in Ankara in 2000. He received his Ph.D. degree from Princeton in 2004. 
\end{IEEEbiography}

\vspace{-10mm}

\begin{IEEEbiography}[{\includegraphics[width=1in,height=1.25in,clip,keepaspectratio]{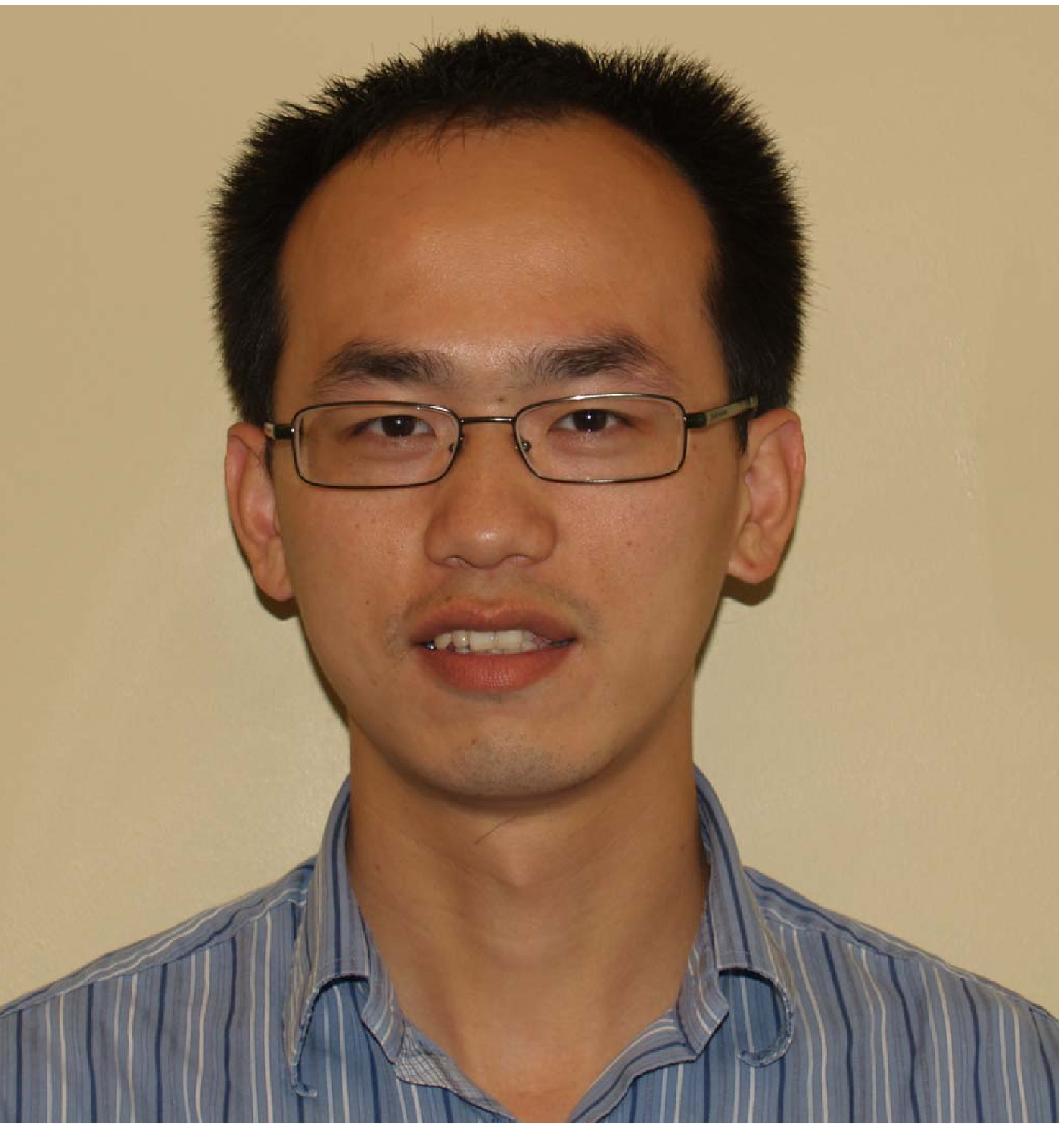}}]{Lifeng Lai}
(M'07) received the B.E. and M. E. degrees from Zhejiang University, Hangzhou, China in 2001 and 2004 respectively, and the PhD degree from The Ohio State University at Columbus, OH, in 2007. He was a postdoctoral research associate at Princeton University from 2007 to 2009, and was an assistant professor at University of Arkansas, Little Rock from 2009 to 2012. Since Aug. 2012, he has been an assistant professor at Worcester Polytechnic Institute. Dr. Lai?s research interests include information theory, stochastic signal processing and their applications in wireless communications, security and other related areas.

Dr. Lai was a Distinguished University Fellow of the Ohio State University from 2004 to 2007. He is a co-recipient of the Best Paper Award from IEEE Global Communications Conference (Globecom) in 2008, the Best Paper Award from IEEE Conference on Communications (ICC) in 2011 and the Best Paper Award from IEEE Smart Grid Communications (SmartGridComm) in 2012. He received the National Science Foundation CAREER Award in 2011, and Northrop Young Researcher Award in 2012. He served as a Guest Editor for IEEE Journal on Selected Areas in Communications, Special Issue on Signal Processing Techniques for Wireless Physical Layer Security. He is currently serving as an Editor for IEEE Transactions on Wireless Communications.
\end{IEEEbiography}
\end{document}